\definecolor{myred}{RGB}{255,204,204}
\definecolor{myblue}{RGB}{0,200,255}
\definecolor{mygreen}{RGB}{80,220,80}
\definecolor{thedarkblue}{RGB}{0,0,120} 
\definecolor{mydarkblue}{rgb}{0,0.08,0.45} 
\newtheorem{strategy}{Strategy}    
\begin{document}
\title{Beyond Frequency: Utility Mining with Varied Item-Specific Minimum Utility}

\author{Wensheng Gan}
\affiliation{%
	\institution{Harbin Institute of Technology (Shenzhen)}
	\city{Shenzhen}
	\country{China}
}
\email{wsgan001@gmail.com}

\author{Jerry~Chun-Wei~Lin}
\affiliation{%
	\institution{Western Norway University of Applied Sciences}
	\city{Bergen}
	\country{Norway}
}
\email{jerrylin@ieee.org}

\author{Philippe~Fournier-Viger}
\affiliation{%
	\institution{Harbin Institute of Technology (Shenzhen)}
	\city{Shenzhen}
	\country{China}	
}
\email{philfv@hitsz.edu.cn}

\author{Han-Chieh Chao}
\affiliation{%
  \institution{National Dong Hwa University}
  \city{Hualien}
  \country{Taiwan}
}
\email{hcc@ndhu.edu.tw}

\author{Philip S. Yu}
\affiliation{%
	\institution{University of Illinons at Chicago}
	\city{Chicago}
	\country{USA}
}
\email{psyu@uic.edu}

\begin{abstract}

Consumer behavior plays a very important role in economics and targeted marketing. However, understanding economic consumer behavior is quite challenging, such as finding credible and reliable information on product profitability. Different from frequent pattern mining, utility-oriented mining integrates utility theory and data mining. Utility mining is a useful tool for understanding economic consumer behavior. Traditional algorithms for mining high-utility patterns (HUPs) applies a single/uniform minimum high-utility threshold (\textit{minutil}) to obtain the set of HUPs, but in some real-life circumstances, some specific products may bring lower utilities compared with others, but their profit may offer some vital information. However, if \textit{minutil} is set high, the patterns with low \textit{minutil} are missed; if \textit{minutil} is set low, the number of patterns becomes unmanageable. In this paper, an efficient one-phase utility-oriented pattern mining algorithm, called HIMU, is proposed for mining HUPs with varied item-specific minimum utility. A novel tree structure called a multiple item utility set-enumeration tree (MIU-tree), the global sorted  and the conditional downward closure properties are introduced in HIMU. In addition, we extended the compact utility-list structure to keep the necessary information, and thus this one-phase HIMU model greatly reduces the computational costs and memory requirements. Moreover, two pruning strategies are then extended to enhance the performance. We conducted extensive experiments in several synthetic and real-world datasets; the results indicates that the designed one-phase HIMU algorithm can address the ``\textit{rare item problem}" and has better performance than the state-of-the-art algorithms in terms of runtime, memory usage, and scalability. Furthermore, the enhanced algorithms outperform the non-optimized HIMU approach.

\end{abstract}

%
%
\begin{CCSXML}
<ccs2012>
 <concept>
  <concept_id>10010520.10010553.10010562</concept_id>
  <concept_desc>Computer systems organization~Embedded systems</concept_desc>
  <concept_significance>500</concept_significance>
 </concept>
 <concept>
  <concept_id>10010520.10010575.10010755</concept_id>
  <concept_desc>Computer systems organization~Redundancy</concept_desc>
  <concept_significance>300</concept_significance>
 </concept>
 <concept>
  <concept_id>10010520.10010553.10010554</concept_id>
  <concept_desc>Computer systems organization~Robotics</concept_desc>
  <concept_significance>100</concept_significance>
 </concept>
 <concept>
  <concept_id>10003033.10003083.10003095</concept_id>
  <concept_desc>Networks~Network reliability</concept_desc>
  <concept_significance>100</concept_significance>
 </concept>
</ccs2012>
\end{CCSXML}

\ccsdesc[500]{Information Systems~Data mining}
\ccsdesc[500]{Applied Computing~Business intelligence} 


%
%

\keywords{Economic behavior, utility-oriented, rare item problem, set-enumeration tree, pruning strategies.}

\maketitle

\renewcommand{\shortauthors}{W. Gan et al.}

\section{Introduction}
\label{sec:introduction}

Consumer behavior is an interesting topic in the fields of economy and targeted marketing. However, understanding economic consumer behavior is quite challenging, such as finding credible and reliable information on products' profitability. Data mining techniques can provide a powerful capability for finding the useful and meaningful information from different types of data \cite{agrawal1995mining,agrawal1993mining,han2004mining,srikant1996mining,gan2017data}. For data mining, the association rule is the rule most commonly used to represent the relationships among item/sets from transactional data. Frequent pattern mining (FPM) or association rule mining (ARM) has been extensively discussed  and applied in a wide range of applications \cite{agrawal1994fast,geng2006interestingness,agrawal1993mining,chen1996data,han2004mining}. However, FPM and ARM \cite{agrawal1993mining,chen1996data,han2004mining} concern only the co-occurrence frequency of patterns in transactional databases. Other implicit factors, such as importance, interestingness, weight, risk, and profit, do not take into account of FPM. Additionally, each item is associated with the same significance in FPM; the actual significant item/sets cannot be easily discovered. In the market basket analytics, for example, both of two group/combination products \{\textit{bread}, \textit{milk}\} and \{\textit{bread}, \textit{milk}, \textit{beer}\} are concerned as the frequent products since their frequencies are the same and greater than the minimum support threshold. If the profit of combination products \{\textit{bread}, \textit{milk}\} is \$3 and that of \{\textit{bread}, \textit{milk}, \textit{beer}\} is \$10, there is no doubt that \{\textit{bread}, \textit{milk}, \textit{beer}\} can bring higher profit than that of combination products \{\textit{bread}, \textit{milk}\}.

Considering consumer choice and the preferences constraint, several crucial factors for the task of commerce have been advanced (w.r.t. maximal profit/utility): (i) rational behavior; (ii) preferences are known and measurable; (iii) inventory management; and (iv) prices. To obtain higher profit from the products, the decision-maker should determine the most profitable products from historical records. Consequently, the utility-based pattern mining framework called high-utility pattern/itemset mining (HUIM) was proposed. It integrates the utility theory \cite{li2011towards} from Microeconomics and pattern mining \cite{agrawal1995mining,han2004mining} from data mining. In general, HUIM aims to find high profitable products/itemsets, as it concerns two factors such as unit profit and quantity of the items. Accordingly, the utility (i.e., importance, interestingness, risk, or profit) of each product/itemset can be predefined based on users' priori knowledge. A product is considered to be a high-utility itemset (HUI) if its total utility in a database is greater or equal to the user-specified minimum utility threshold. HUIM has become a useful tool for understanding economic consumer behavior with profitable products, and many approaches have been designed \cite{ahmed2009efficient,chan2003mining,tseng2010up,yao2006mining,liu2005two}. The previous works consider, however, a single minimum utility threshold to discover the complete set of HUIs. In some real-world situations, items may have hierarchy attributes and unit-contributed utilities can be different, and as a result, all of these approaches do not support hierarchies. Using a single and uniform threshold to examine the utility of all item/sets in a database is inappropriate in real-life situations, since each item has different importance and should not be considered as the same. Using the single minimum utility threshold (\textit{minutil}) implies that all patterns bring with them similar utilities beyond \textit{minutil} in the database. However, this may not be workable in practical applications. In real-world situations, each item may have different nature, frequency, importance, and different exhibition utility. It is thus difficult to carry out a fair measurement by only considering a single minimum utility threshold for all items. When \textit{minutil} is set too high, the useful patterns with low \textit{minutil} are missed. When \textit{minutil} is set too low, enormous patterns are then produced.
	
	\begin{table}[!htbp]
		\centering
		\small
		\caption{A transaction database w.r.t. shopping behavior.} 
		\label{table_example1}
		\begin{tabular}{|c|c|c|c|}
			\hline
			\textbf{TID} & \textbf{Time} & \textbf{Customer ID}  & \textbf{Event (product, quantity, total profit)} \\ \hline
			$T_1$   &   03-06-2017 10:40:30   &   $C_1$  &  \textit{milk}, 1, \$2; \textit{bread}, 2, \$6; \textit{beer}, 1, \$6; \textit{cheese}, 1, \$3;  \textit{cereal}, 2, \$3 	    \\ 
			$T_2$   &   03-06-2017 10:59:12   &   $C_2$  &  \textit{foodProcessor}, 2, \$24	    \\ 
			$T_3$   &   03-06-2017 11:08:45   &   $C_3$  &  \textit{milk}, 3, \$6;  \textit{bread}, 2, \$6; \textit{beer}, 1, \$6	    \\ 
			$T_4$   &   03-06-2017 11:40:00   &   $C_1$  &  \textit{milk}, 1, \$2; \textit{cereal}, 1, \$1.5; \textit{cheese}, 1, \$3	    \\ 
			$T_5$   &   03-06-2017 12:55:30   &   $C_3$  &  \textit{foodProcessor}, 1, \$12; \textit{cookingPan}, 1, \$16	    \\ 
			$T_6$   &   03-06-2017 14:38:58   &   $C_5$  &  \textit{bread}, 1, \$3; \textit{milk}, 1, \$2 \\ 
			...   &   ...   &   ...  & ......	    \\ 
			$T_{10}$   &   05-06-2017 15:30:00   &   $C_1$  &  \textit{bread}, 1, \$3; \textit{butter}, 2, \$4; \textit{oatmeal}, 1, \$10	    \\ 						
			\hline
		\end{tabular}
	\end{table}

For example, Table \ref{table_example1} is a simple retail store's database which contains customers' shopping records, including transaction ID (TID), occur time, customer ID, and event (w.r.t. purchase product, quantity and total profit), and so on. In this transaction data, in order to find profitable patterns involving those low-profitable purchased products such as food processor and cooking pan (they generate more profits per item), we need to set the \textit{minutil} to very high (e.g., \$30). We may find the following useful pattern from Table \ref{table_example1}: \{\textit{foodProcessor}, \textit{cookingPan}\} [utility = \$28]. However, this high \textit{minutil} may cause the following high profitable patterns cannot be found: \{\textit{bread}, \textit{milk}, \textit{beer}\} [utility = \$18]. If the \textit{minutil} is set very low, some meaningless low-profitable patterns would be found. This dilemma is called the ``\textit{rare item problem}" \cite{weiss2004mining}. Mining frequent itemsets or association rules based on multiple minimum support thresholds have been extensively discussed, such as MSApriori \cite{liu1999mining}, CFP-Growth \cite{hu2006mining}, CFP-Growth++ \cite{kiran2011novel}, REMMAR \cite{liu2011discovering}, and FQSP-MMS \cite{huang2013discovery}. However, all of these existing methods for support-based pattern mining cannot be applied into the utility-based HUIM problem. Up to now, fewer works have been discussed to handle the problem of HUIM with multiple minimum utility thresholds (MMU model). To the best of our knowledge, HUI-MMU and its improved HUI-MMU$_{TID}$ algorithms \cite{lin2016efficient} were the first algorithms focusing on the topic for mining HUIs with multiple minimum utility thresholds. Those algorithms apply, however, the generate-and-test mechanism to discover the set of HUIs in a level-wise manner. More computations and memory leakage may thus occur while the transaction size is extremely high. 

To enhance the effectiveness and efficiency of HUIM, an efficient utility-based pattern algorithm named \textbf{\underline{\textit{H}}}igh-efficient utility mining with varied \textbf{\underline{\textit{I}}}tem-specific \textbf{\underline{\textit{M}}}inimum \textbf{\underline{\textit{U}}}tility (HIMU) is designed. An MIU-tree is also developed to enhance and improve the mining performance to discover the complete set of HUIs. Based on the developed MIU-tree and a sorted downward closure property of HUIs under multiple minimum utility thresholds, the HIMU algorithm can efficiently obtain the set of HUIs with several pruning strategies. Several improved algorithms are further developed to early prune the huge number of redundant or invalid itemsets. The major contributions of this paper are as follows:

\begin{itemize}
	\item An efficient one-phase algorithm named \textbf{\underline{\textit{H}}}igh-efficient utility mining with varied \textbf{\underline{\textit{I}}}tem-specific \textbf{\underline{\textit{M}}}inimum \textbf{\underline{\textit{U}}}tility (HIMU) is proposed for revealing more useful and meaningful HUIs with varied item-specific minimum utility thresholds. HIMU is more flexible and realistic than the traditional HUIM algorithms that use a single and uniform minimum utility threshold. To the best of our knowledge, this is the first one-phase algorithm which utilizes vertical data structure to address the HIMU problem. 
	
	\item A novel sorted set-enumeration tree named MIU-tree is designed to avoid the multiple database scans for discovering the set of HUIs. In addition, the utility-list structure is adopted to keep the necessary information for later mining process. Several efficient one-phase algorithms are then designed for handling the HUI-MMU problem.
	
	\item The designed \textit{global downward closure} (GDC) property and the \textit{conditional downward closure} (CDC) property guarantee that the global and partial anti-monotonicity can be maintained for mining the set of HUIs in the designed MIU-tree. With the two properties, HIMU can easily discover HUIs by pruning a huge number of unpromising itemsets.
	
	\item Based on the designed MIU-tree, three improved HIMU$_{EUCP}$, HIMU$_{LAP}$, and HIMU$_{ELP}$ algorithms are then further developed to reduce the search space by utilizing two efficient pruning strategies, which can speed up computations for discovering the set of HUIs.
	
	\item The extensive performance evaluation on synthetic and real-world datasets demonstrates the efficiency and effectiveness of the proposed algorithms in terms of runtime, number of generated patterns and scalability.
\end{itemize}

 The rest of this paper is organized as follows. Related work is briefly reviewed in Section \ref{sec:2}. Preliminaries and the problem statement of the addressed mining task are presented in Section \ref{sec:3}. The proposed HIMU algorithm and three improved variants  are respectively provided in Section \ref{sec:HIMU} and Section \ref{sec:HIMU2}. An experimental evaluation of the designed algorithms is provided in Section \ref{sec:6}. Finally, conclusion and future works are drawn in Section \ref{sec:conclusion}.

\section{Related Work}  \label{sec:2}

We structure the related work around the two main elements that this paper addresses: utility-oriented pattern mining and interesting pattern mining with multiple/individualized thresholds.

\subsection{Utility-oriented Pattern Mining}

In real-world applications, the transactional data are always associated with purchase quantities, and every individual object represents its unique importance (such as interestingness, risk, and profit, etc.). Depending on different requirements in various domains and applications, the discovered knowledge can be generally classified as frequent itemsets or association rules \cite{agrawal1993mining,chen1996data,vo2013lattice,vo2015fast}, sequential patterns \cite{agrawal1995mining,srikant1996mining}, sequential rules \cite{fournier2012cmrules}, high-utility patterns \cite{chan2003mining,yao2004foundational,tseng2010up,ts},  and other interesting patterns \cite{geng2006interestingness,hong2009effective}. Among these, utility-oriented pattern mining (also called high-utility pattern mining) is a useful mechanism to understand economic consumer behavior with profitable products, and it has been considered as a critical issue and has been extensively discussed in recent decade. 

Yao et al. then defined a uniform framework called utility mining to discover the profitable itemsets while considering both the purchase quantity (internal utility) and their unit profit (external utility) of item/sets in transactional data \cite{yao2004foundational}. Since the traditional support-based \textit{downward closure} property of association-rule mining does not hold in HUIM, a huge number of candidates is necessary to be generated before obtaining the set of HUIs. To overcome this drawback, a transaction-weighted utilization (TWU) model was presented to keep the \textit{transaction-weighted downward closure} (TWDC) property \cite{liu2005two}, which can guarantee the correctness and completeness of the discovered HUIs. The incremental high-utility pattern tree (IHUP-tree) \cite{ahmed2009efficient} and high-utility pattern tree (HUP-tree) \cite{lin2011effective} were then developed to mine the set of HUIs by integrating the frequent pattern (FP)-tree structure \cite{han2004mining} and the TWU model. Tseng et al. further proposed the UP-Growth \cite{tseng2010up} and UP-Growth+ \cite{tseng2013efficient} algorithms to mine the set of HUIs using the designed UP-tree structure; both UP-Growth \cite{tseng2010up} and UP-Growth+ \cite{tseng2013efficient} have better results than the previous approaches. Lan et al. proposed an efficient projection-based indexing approach for mining HUIs and applied a pruning strategy to reduce the number of candidate itemsets \cite{lan2014efficient}. Recently, several algorithms that mine the set of HUIs using a single phase were respectively presented, such as the HUI-Miner \cite{liu2012mining}, d2HUP \cite{liu2012direct}, HUP-Miner \cite{krishnamoorthy2015pruning}, FHM \cite{fournier2014fhm}, EFIM \cite{zida2015efim}, and others \cite{song2016high,2gan2018survey}. These one-phase algorithms avoid the problem of generate-and-test mechanism to find the set of HUIs, and outperform the UP-Growth \cite{tseng2010up} and UP-Growth+ \cite{tseng2013efficient}; they mainly rely on the concept of remaining utility to reduce the search space and early prune the unpromising candidates. 

In addition to the above efficiency issues of utility mining, there are many effectiveness issues of utility mining have been extensively studied. Consider the dynamic data, incremental and decremental utility mining are introduced \cite{gan2018survey,2lin2015mining,wang2018incremental}. Consider other types of data, some algorithms of utility mining are developed to deal with uncertain data \cite{2lin2016efficient}, temporal data \cite{lin2015efficient}, and complex data containing negative values \cite{lin2016fhn,2gan2017mining}.  In recent years, the problem of high-utility sequential pattern mining attracts extensive attention and has been extensively studied, such as \cite{yin2012uspan,lan2014applying,alkan2015crom,wang2016efficiently}. The developments of other topics, e.g., top-$k$ \cite{tseng2016efficient}, correlated pattern \cite{lin2017fdhup,gan2017extracting}, discount strategies \cite{lin2016fast}, utility occupancy \cite{gan2018huopm}, privacy preserving \cite{gan2018privacy} for utility-oriented pattern mining are still in progress \cite{2gan2018survey}, but all of them are designed to discover high-utility patterns with a uniform \textit{minutil}. To the best of our knowledge, the HUI-MMU and HUI-MMU$_{TID}$ algorithms \cite{lin2016efficient} were the first work focusing on utility mining with multiple minimum utility thresholds.

\subsection{Frequent Pattern Mining with Multiple Minimum Supports}

In the literature, the ``\textit{rare item problem}"  was first identified while mining frequent patterns with a single \textit{minsup} threshold \cite{weiss2004mining}. It is as follows: i). If \textit{minsup} is set too high, we will miss those patterns that involve rare items. ii). In order to find the frequent patterns that involve both frequent and rare items, we have to set low \textit{minsup}. However, this may cause combinatorial explosion, producing too many frequent patterns.

In some real-world applications, some items may have hierarchical attributes, and their frequencies can be different. To address the ``\textit{rare item problem}" \cite{weiss2004mining} in frequent pattern mining, several approaches have been extensively studied, such as MSApriori \cite{liu1999mining}, CFP-Growth \cite{hu2006mining}, CFP-Growth++ \cite{kiran2011novel}, REMMAR \cite{liu2011discovering}, and FQSP-MMS \cite{huang2013discovery}, among others. The first work on the ``multiple minimum supports framework" was introduced by Liu et al. \cite{liu1999mining}, and an Apriori-like \cite{agrawal1994fast} approach named MSApriori was introduced. This framework allows users to specify multiple minimum support to reflect the natures of the items and their varied frequencies in the databases. Each item in MSApriori is associated with a \textit{minimum item support} (\textit{MIS}) value. The MSApriori algorithm suffers, however, from the same weakness as the Apriori-like approach; it generates a huge number of candidates and multiply re-scans the databases. Lee et al. then proposed a fuzzy mining algorithm to find useful fuzzy association rules with multiple minimum supports using a maximum constraint \cite{lee2004mining}. A conditional frequent pattern-growth (CFP-Growth) was then proposed \cite{hu2006mining} to mine frequent itemsets with multiple thresholds using the pattern-growth method based on an MIS-tree structure and \textit{MIN} (the minimum \textit{MIS} value of items in the database). However, CFP-Growth performs an exhaustive search on the constructed condition tree to discover the complete set of frequent patterns, which causes time-consuming problem. An enhanced CFP-Growth++ was then proposed to employ the \textit{least minimum support} (\textit{LMS}) instead of \textit{MIN} to reduce the search space and improve mining performance \cite{kiran2011novel}. LMS is the least value among all MIS values of frequent items. Recently, Gan et al. presented the state-of-the-art approach named FP-ME for pattern mining with multiple minimum supports from the Set-enumeration tree (ME-tree) \cite{gan2017mining}. 

Mining various types of patterns using multiple minimum supports has been extensively studied. In bio-informatics, the relational-based multiple minimum supports association rules (REMMAR) algorithm was proposed to discover relational association rules with multiple minimum supports in microarray datasets \cite{liu2011discovering}. Lee et al. also developed an approach to mine fuzzy multiple-level association rules with multiple minimum support thresholds \cite{lee2006mining}. The fuzzy quantitative sequential pattern with multiple minimum supports (FQSP-MMS) algorithm \cite{huang2013discovery} was proposed to discover fuzzy quantitative sequential patterns (FQSPs) by considering both multiple minimum supports and adjustable membership functions to handle quantitative databases based on fuzzy-set theory.

The previous studies on frequent pattern mining with multiple thresholds cannot, however, be directly applied to mine high-utility itemsets with multiple minimum utility thresholds since the unit profit and the quantity of the items are both considered. The HUI-MMU and HUI-MMU$_{TID}$ algorithms \cite{lin2016efficient} were first designed to mine HUIs with multiple minimum utility thresholds in the level-wise manner, but they may cause the combinational problem and memory leakage, which are common problems of the Apriori-like mechanism \cite{liu2005two}. It is thus a challenging issue to efficiently and effectively mine HUIs with multiple minimum utility thresholds and will be discussed and studied as follows.

\section{Preliminaries and Problem Statement} \label{sec:3}

According to the utility theory \cite{marshall1926principles} and some definitions of previous studies \cite{yao2004foundational,tseng2010up,lin2016efficient}, we have the following concepts and formulation.

Let $I$ = $\{i_1, i_2, \dots, i_m\}$ be a finite set of $m$ distinct items appearing in a quantitative  transaction database $D$ = $\{T_1, T_2, \dots, T_n\}$, where each quantitative transaction $T_{q} \in D$ (1 $ \leq q \leq n $) is a subset of $I$, and has a unique identifier called its \textit{TID}. Each product/item $i_j$ in a transaction has a purchase quantity (also called \textit{internal utility}) denoted $q(i_j, T_q)$. In addition, each product/item $i_j$ (1 $ \leq j \leq m $) in $D$ has its unique user-specified profit value $pr(i_m)$ (also called \textit{external utility}, which represents its importance, e.g., profit, interest, risk.). Unit profits are stored in a profit-table \textit{ptable} = \{$ pr(i_{1})$, $ pr(i_{2})$, $\dots$, $ pr(i_{m}) $\}. 
An itemset \textit{X}$\subseteq$\textit{I} with \textit{k} distinct items \{$ i_{1}$, $ i_{2}$, $\dots$, $ i_{k} $\} is of length \textit{k} and is referred to as a \textit{k}-itemset. An itemset \textit{X} is said to be contained in a transaction $ T_{q} $ if $ X\subseteq T_{q}$. For an itemset $X$, let the notation \textit{TIDs(X)} denotes the \textit{TIDs} of transactions in $D$ containing $X$.

A quantitative database in Table \ref{table_example2} is given as an example to illustrate the process of the proposed algorithm. It consists of 10 transactions and five products/items, denoted from $a$ to $e$. Note that Table \ref{profit_table} shows the profit table, which defines the unit profit value (\$) of each product/item.

\begin{table}[!htbp]
	\centering
	\small
	\caption{An example database.} 
	\label{table_example2}
	\begin{tabular}{|c|c|c|c|}
		\hline
		\textbf{TID} & \textbf{Time} & \textbf{Customer ID}  & \textbf{Event (product: quantity)} \\ \hline
		$T_1$   &   05-08-2017 10:45:30   &   $C_1$  &  \textit{a}:1, \textit{c}:2, \textit{d}:3 	    \\ 
		$T_2$   &   05-08-2017 10:59:12   &   $C_2$  &  \textit{a}:2, \textit{d}:1, \textit{e}:2	    \\ 
		$T_3$   &   05-08-2017 11:05:40   &   $C_3$  &  \textit{b}:3, \textit{c}:5	    \\ 
		$T_4$   &   05-08-2017 11:40:00   &   $C_4$  &  \textit{a}:1, \textit{c}:3, \textit{d}:1, \textit{e}:2	    \\ 
		$T_5$   &   05-08-2017 12:55:14   &   $C_3$  &  \textit{b}:1, \textit{d}:3, \textit{e}:2	    \\ 
		$T_6$   &   05-08-2017 14:08:58   &   $C_2$  &  \textit{b}:2, \textit{d}:2  \\ 
		$T_7$   &   05-08-2017 14:40:00   &   $C_5$  &  \textit{b}:3, \textit{c}:2, \textit{d}:1, \textit{e}:1	    \\ 
		$T_8$   &   05-08-2017 15:01:40   &   $C_4$  &  \textit{a}:2, \textit{c}:3	    \\ 
		$T_9$   &   05-08-2017 15:04:26   &   $C_6$  &  \textit{c}:2, \textit{d}:2, \textit{e}:1    \\ 		
		$T_{10}$ &  05-08-2017 15:30:20   &   $C_7$  &  \textit{a}:2, \textit{c}:2, \textit{d}:1	    \\ 						
		\hline
	\end{tabular}
\end{table}

\begin{table}[!htbp]
	\centering
	\small
	\caption{A profit table.} 
	\label{profit_table}
	\begin{tabular}{|c|c|c|}
	\hline
	\textbf{Product} & \textbf{Profit(\$)} \\ \hline
	$ a $ & 6  \\ \hline
	$ b $ & 12 \\ \hline
	$ c $ &	1  \\ \hline
	$ d $ &	9  \\ \hline
	$ e $ &	3  \\ \hline	
\end{tabular}
\end{table}

\begin{definition}
	\rm The minimum utility threshold of an item $ i_{j} $ in a database \textit{D} is denoted as $ mu(i_{j})$. A structure called \textit{MMU-table} indicates the minimum utility thresholds of each item in \textit{D}, and defined as:
	\begin{equation}
	   \textit{MMU-table} = \{mu(i_{1}), mu(i_{2}), \dots, mu(i_{m})\}.
	\end{equation}
\end{definition}

Assume that the minimum utility thresholds of the five items in the running example are defined and shown in Table \ref{MMU_table}, such that: \textit{MMU-table} = $ \{mu(a), mu(b), mu(c), mu(d), mu(e)\}$ = \{\$56, \$65, \$53, \$50, \$70\}. To avoid the ``\textit{rare item problem}'', we consider the smallest utility threshold among items in an itemset as its minimum utility threshold, as defined below.

\begin{table}[!htbp]
	\centering
	\small
	\caption{A predefined \textit{MMU-table}.} 
	\label{MMU_table}
	\begin{tabular}{|c|c|c|}
		\hline
		\textbf{Product} & \textbf{Minimum utility (\$)} \\ \hline
		$ a $ & 56  \\ \hline
		$ b $ & 65 \\ \hline
		$ c $ &	53  \\ \hline
		$ d $ &	50  \\ \hline
		$ e $ &	70  \\ \hline	
	\end{tabular}
\end{table}

\begin{definition}
	\rm The minimum utility threshold of a \textit{k}-itemset \textit{X} = \{$ i_{1}$, $ i_{2}$, $\dots$, $ i_{k} $\} in \textit{D} is denoted as \textit{MIU}$(X)$, and defined as the smallest \textit{mu} value for items in \textit{X}, that is:
	\begin{equation}
	     MIU(X) = min\{mu(i_{j})|i_{j}\in X, 1\leq j \leq  k\}.
	\end{equation}
\end{definition}

For example, \textit{MIU}$(a)$ = $ min\{mu(a)\} $ = 56 \$, \textit{MIU}$(ac)$ = $ min\{mu(a),  mu(c)\}$ =  $min$\{\$56, \$53\} = \$53, and \textit{MIU}$(ace)$ =  $ min \{mu(a), mu(c), mu(e)\} $ = $min$\{\$56, \$53, \$70\} = \$53.

\begin{definition}
	\rm The utility of an item $ i_{j} $ in a transaction $ T_{q} $ is denoted as $u(i_{j}, T_{q})$ and defined as:
	\begin{equation}
	    u(i_{j}, T_{q}) = q(i_{j}, T_{q})\times pr(i_{j}).
	\end{equation}
\end{definition}

For example, the utility of item $(b)$ in transaction $T_3$ is calculated as $u(b, T_3)$ = $q(b, T_3) \times pr(b)$ = 3 $\times $ \$12 = \$36.

\begin{definition}
	\rm The utility of an itemset \textit{X} in a database \textit{D} is denoted as $u(X)$ and defined as:
	\begin{equation}
	    u(X) = \sum_{X\subseteq T_{q}\wedge T_{q}\in D} u(X, T_{q}),
	\end{equation}
	in which $u(X, T_{q})$ is the utility of an itemset \textit{X} in a transaction $ T_{q} $, and defined as:
	\begin{equation}
	u(X, T_{q}) = \sum _{i_{j}\in X\wedge X\subseteq T_{q}}u(i_{j}, T_{q}).
	\end{equation}
\end{definition}

For example, the utility of the itemset $(bc)$ in transaction $T_3$ is calculated as $u(bc, T_3)$ = $u(b, T_3)$ + $u(c, T_3)$ = $q(b, T_3) \times pr(b)$ + $q(c, T_3) \times pr(c) $ = 3 $\times $ \$12 + 5 $\times $ \$1 = \$41. Therefore, the utility of itemsets $(b)$ and $(bc)$ in $D$ are respectively calculated as $u(b)$ = $u(b, T_3)$ + $u(b, T_5)$ + $u(b, T_6)$ + $u(b, T_7)$ = \$36 + \$12 + \$24 + \$36 = \$108, and $u(bc)$ = $u(bc, T_3)$ + $u(bc, T_7)$ = \$41 + \$38 = \$79.

\begin{definition}
	\rm The transaction utility of a transaction $ T_{q} $  is denoted $tu(T_q)$ and defined as:
	\begin{equation}
     	tu(T_{q}) = \sum_{i_{j}\in T_{q}}u(i_{j}, T_{q}).
	\end{equation}
	in which $j$ is the number of items in $T_{q}$.
\end{definition}

For example, $tu(T_5)$ = $u(b, T_5)$ + $u(d, T_5)$ + $u(e, T_5)$ = \$12 + \$27 + \$6 = \$45. Therefore, the transaction utilities of $T_1$ to $T_{10}$ are respectively calculated as $tu(T_1)$ = \$35, $tu(T_2)$ = \$27, $tu(T_3)$ = \$4, $tu(T_4)$ = \$24, $tu(T_5)$ = \$45, $tu(T_6)$ = \$42, $tu(T_7)$ = \$50, $tu(T_8)$ = \$15, $tu(T_9)$ = \$23, and $tu(T_{10})$ = \$23. 

\begin{definition}
	\rm The transaction-weighted utility of an itemset $ X $ is denoted as $ TWU(X)$, which is the sum of all transaction utilities containing $ X $ and defined as:
	\begin{equation}
	    TWU(X)=\sum_{X\subseteq T_{q}\wedge T_{q}\in D}tu(T_{q}).
	\end{equation}
\end{definition}

\begin{definition}
	\rm An itemset $ X\subseteq I $ is said to be a high transaction-weighted utilization itemset (HTWUI) in a database if its TWU value is no less than the minimum utility threshold \cite{lin2016efficient}. To adapt this definition for HUI-MMU, we assume that this threshold is the \textit{MIU}$(X)$ which has been defined in Definition 3.2. Thus: 
	\begin{equation}
	    HTWUI\leftarrow\{X|TWU(X)\geq MIU(X)\}.
	\end{equation}
\end{definition}

For example, $TWU(b)$ = $tu(T_3)$ + $tu(T_5)$ + $tu(T_6)$ + $tu(T_7)$ = \$41 + \$45 + \$42 + \$50 = \$178. Because $TWU(b) \geq MIU(b)$, the item ($b$) is an HTWUI.

\begin{definition}
	\rm An itemset \textit{X} in a database \textit{D} is a high-utility itemset (HUI) if and only if its utility is no less than its minimum utility threshold:
	\begin{equation}
	    HUI\leftarrow\{X|u(X)\geq MIU(X)\}.
	\end{equation}
\end{definition}

Assume that the multiple minimum utility thresholds of items are shown in Table \ref{MMU_table}. The item $(a)$ is not an HUI since its utility is calculated as $u(a)$ = \$48, which is less than $MIU(a)$ = \$56. However, the itemset $(ad)$ is an HUI since its utility is calculated as $u(ad)$ = \$90, which is greater than its minimum utility value $MIU(ad)$ = $min\{mu(a), mu(d)\}$ = $min$\{\$56, \$5\} = \$50. For the running example, the complete set of HUIs with multiple minimum utility thresholds is shown in Table \ref{table_HUIs}.

\begin{table}[ht]	
	\caption{Derived HUIs}
	\centering
	\label{table_HUIs}
	\begin{tabular}{|c|c|c|c|c|c|}
		\hline
		\textbf{Itemset} & \textbf{\textit{MIU}}	 & \textbf{Utility} &	\textbf{Itemset} &	\textbf{\textit{MIU}}	& \textbf{Utility} \\ \hline	
		$ (b) $ &	\$65  &	\$108  &   $ (de) $ &	\$50  &	\$96    \\ \hline
		$ (d) $ &	\$50  &	\$126 	&	$ (acd) $ &	\$50  &	\$76 	\\ \hline	
		$ (ad) $ &	\$50  &	\$90   &	$ (bde) $ &	\$50  & \$93    \\ \hline
		$ (bc) $ &	\$53  &	\$79   &   $ (cde) $ &	\$50  &	\$55    \\ \hline
		$ (bd) $ &	\$50  &	\$126  &	$ (bcde)$ & \$50  &	\$50    \\ \hline	
		$ (cd) $ &	\$50  &	\$83   &	& &  \\ \hline 	
	\end{tabular}
\end{table}

\textbf{Problem Definition:} Given a transactional database $D$ containing $m$ items, the user-specified unit minimum utility threshold $\{mu(i_{1}), mu(i_{2}), \dots, mu(i_{m})\}$ of each item is formed as an \textit{MMU-table}. The problem of high-utility itemset mining with multiple minimum utility thresholds (HUIM-MMU) is to find the complete set of the $k$-itemsets in which the utility of each itemset $X$ is greater than \textit{MIU(X)}.

Hence, the proposed model facilitates the user to specify each item with varied item-specific minimum utility threshold. 
As a result, this HUIM-MMU model can address the ``\textit{rare item problem}" and efficiently find the complete set of HUIs in the database while considering multiple minimum utility thresholds rather than considering only a single minimum utility threshold. Thus, the problem of HUIM-MMU is quite different and complicated than the traditional HUIM problem.

\section{Proposed HIMU Algorithm for High-Utility Itemset Mining} \label{sec:HIMU}

In the past, a generate-and-test algorithm named HUI-MMU was first proposed to mine the set of HUIs with multiple minimum utility thresholds \cite{lin2016efficient}. A two-phase upper-bound model was proposed to keep the \textit{sorted downward closure} (\textit{SDC}) property while mining the set of HUIs under MMU. Although the upper-bound strategy was proposed in HUI-MMU, and the enhanced HUI-MMU$_{TID}$ algorithm with TID-index was further designed to early prune the huge number of unpromising candidates. However, a drawback of those two approaches is that the level-wise search mechanism may easily suffer from the ``exponential explosion" problem especially when the MMUs are set quite low. Thus, a novel one-phase algorithm named HIMU is designed, which can directly mine the set of HUIs without candidate generation by spanning the developed MIU-tree with the \textit{GDC properties}. Details are presented in the following section.

\subsection{Search Space of HIMU}
The search space of the addressed HUIM-MMU problem can be represented as a lattice structure \cite{pasquier1998pruning} or a set-enumeration tree \cite{rymon1992search}. The complete search space of HUIM-MMU with $n$ distinct items is calculated as $2^n$-1. From Table \ref{table_HUIs}, it can be observed that the well-known \textit{downward closure} property of association-rule mining and the \textit{transaction-weighted downward closure} (TWDC) of traditional high-utility itemset mining property are inappropriate for the addressed HUIM-MMU. For example, an item $(c)$ is not an HUI, but its supersets $(bc)$, $(cd)$, $(acd)$, $(cde)$, and $(bcde)$ may consider as the HUIs. Without the anti-monotone constraint, a huge number of candidates would be generated to obtain the final HUIs. Based on the TWDC property, the number of candidates in traditional HUIM can be considerably reduced \cite{liu2005two}. However, it can be easily seen that the TWDC property does not hold for the addressed HUIM-MMU. For example, considering items $(b)$, $(c)$, $(d)$, and $(e)$, their minimum utility thresholds respectively are $MIU(b)$ = \$65, $MIU(c)$ = \$53, $MIU(d)$ = \$50, and $MIU(e)$ = \$70. The TWU value of the itemset $(bce)$ is calculated as $TWU(bce)$ = \$50, which is less than the minimum high-utility values of all its subsets as $MIU(a)$, $MIU(e)$, and $MIU(f)$; the itemset $(bce)$ is not an HTWUI. In this situation, the itemset $(bcde)$ and its supersets would be ignored according to the TWDC property. It can be observed, however, that the $TWU(bcde)$ = \$50, which is equal to $MIU(bcde)$ = \$50. 

Therefore, the TWU measure does not maintain the TWDC property of HUIM-MMU. From the given results shown in Table \ref{table_HUIs}, it can be seen that itemset $(bcde)$ is actually an HUI. It is incorrect to discard the supersets of $(bce)$ based on the TWDC property, since $(bcde)$ would not be generated in this situation. Thus, the TWDC property does not suitable for HUIM-MMU. 

\textbf{Observation 1.} In traditional HUIM, the TWDC property ensures that the set of HTWUIs contains all HUIs. However, this property does not hold in the HUIM-MMU framework, which can be proven as follows.

\begin{proof}
Let $X^{k-1}$ be a ($k$-1)-itemset and $X^{k-1}$ and $X^k$ = $\{i_1, i_2, \dots, i_k\}$ be any superset of length $k$ of $X^{k-1}$. Because $X^{k-1} \subseteq X^k$, the two following relationships hold:

(1) By definition of \textit{MIU} concept, we have that $MIU(X^{k-1})$ = $min\{mu(i_1), mu(i_2), \dots, mu(i^{k-1})\}$ and $MIU(X^k)$ = $min\{mu(i_1)$, $mu(i_2)$, $ \dots, mu(i_k)\}$. From Definition 1, it can be found that $MIU(X^k) \leq MIU(X^{k-1})$.

(2) Thus, $TWU(X^{k})=\sum_{X^{k}\subseteq T_{q}\wedge T_{q}\in D}tu(T_{q})\leq\sum_{X^{k-1}\subseteq T_{q}\wedge T_{q}\in D}tu(T_{q})=$ $TWU(X^{k-1})$.
\end{proof}

Therefore, if $X^k$ is an HTWUI [i.e., $TWU(X^k) \geq MIU(X^k)$], we cannot ensure that any subset $X^{k-1}$ of $X^k$ is also an HTWUI since $MIU(X^k) \leq MIU(X^{k-1})$. The TWDC property of the traditional TWU model does not hold in the HUIM-MMU framework.

To address this limitation, a new \textit{sorted downward closure} (SDC) property was proposed in the HUI-MMU algorithm \cite{lin2016efficient}. However, HUI-MMU algorithm applies the level-wise mechanism to exploit the set of HUIS, which is time-consuming and not suitable to efficiently mine the set of HUIs. It is thus a critical issue to design a suitable data structure and efficient pruning strategies to efficiently reduce the search space and early prune the unpromising itemsets for mining the set of HUIs with varied item-specific minimum utility thresholds.

\subsection{Proposed MIU-tree and the Sorted Downward Closure Property}

\begin{definition}[Total order $\prec$ on items]
	\rm  Assume that the items in each transaction in a database are sorted according to their lexicographic order. We also assume that the total order $\prec$ on items is the ascending order of minimum utility thresholds of 1-itemsets.
\end{definition}

For example in Table \ref{table_example2}, the ascending order of minimum utility thresholds of 1-itemsets in the running example is $MIU(d) < MIU(c) < MIU(a) < MIU(b) < MIU(e)$; the total order $\prec$ on items is thus as: $ d \prec c \prec a \prec b \prec e $.

\begin{definition}[Set-enumeration tree with multiple minimum item utilities, MIU-tree]
	\rm The designed MIU-tree structure is a sorted set-enumeration tree where the total order $\prec$ on items is the ascending order of minimum utility thresholds of items.
\end{definition}

\begin{definition}
	\rm  The extensions (descendant nodes) of an itemset (tree node) $X$ can be obtained by appending an item $y$ to $X$ such that $y$ is greater than all items already in $X$ according to the total order $\prec$.
\end{definition}

For example, an illustrated MIU-tree of the HIMU algorithm of the running example is shown in Fig. \ref{MIU_tree}. For example, a node $(dc)$ is used as an illustrated tree node shown in Fig. \ref{MIU_tree} . The 1-extensions of node $(dc)$ are the nodes $(dca)$, $(dcb)$, and $(dce)$, and the all extension nodes (w.r.t. all its descendant nodes) of $(dc)$ are the nodes $(dca)$, $(dcb)$, $(dce)$, $(dcab)$, $(dcae)$, $(dcbe)$, and $(dcabe)$. Based on the constructed MIU-tree, the following lemmas are obtained.

\begin{lemma}
	\label{lemma1}
	The complete search space of the HUIM-MMU problem can be represented as an MIU-tree, where the ascending order of MIU of items is adopted.
\end{lemma}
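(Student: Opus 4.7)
The plan is to show that every candidate itemset, i.e., every non-empty subset of $I$, appears as exactly one node of the MIU-tree, so that the tree exhausts the $2^n - 1$ elements of the HUIM-MMU search space. Since the MIU-tree is a set-enumeration tree in the sense of Rymon, and the only feature of the total order $\prec$ that matters for completeness is that it is a total order, the argument reduces to the standard correctness of set-enumeration trees; the particular choice of the ascending-MIU order is immaterial for coverage and only affects the traversal order.

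First, I would recall from Definition 4.3 that the root of the MIU-tree corresponds to the empty set and that the $1$-extensions of a node $X$ are obtained by appending any item $y$ that is strictly $\prec$-greater than every element of $X$. This means that each node of the tree can be identified with a strictly $\prec$-increasing sequence of items, and, conversely, every strictly $\prec$-increasing sequence is realized by descending from the root. Next, I would argue by induction on $k=|X|$ that every $k$-itemset $X \subseteq I$ appears exactly once. For the base case $k=1$, the $n$ children of the root are the $n$ singleton itemsets. For the inductive step, write $X = \{i_1, i_2, \dots, i_k\}$ in its canonical $\prec$-sorted form $i_1 \prec i_2 \prec \cdots \prec i_k$; by the inductive hypothesis its prefix $\{i_1, \dots, i_{k-1}\}$ is a unique node, and since $i_k$ is $\prec$-greater than all preceding items, appending $i_k$ yields exactly one child corresponding to $X$. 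Uniqueness follows because any root-to-$X$ path must list $X$'s items in their unique $\prec$-increasing order.

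Combining these observations, the nodes of the MIU-tree are in bijection with the $2^{|I|}-1$ non-empty subsets of $I$, which matches the full search space of HUIM-MMU as described in Section 4.1. The main obstacle is not really mathematical difficulty but a conceptual check: one must verify that nothing in the HUIM-MMU problem formulation (in particular, the item-specific thresholds or the non-monotonicity of utility) restricts which subsets need be considered as candidates, so that ``complete search space'' really does mean all non-empty subsets of $I$; once that is confirmed, the lemma follows directly from the bijection established above.
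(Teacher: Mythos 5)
Your proposal is correct and takes essentially the same approach as the paper: both arguments reduce the claim to the standard property of a Rymon set-enumeration tree, namely that under any fixed total order on items the tree enumerates each of the $2^m-1$ non-empty subsets of $I$ exactly once, with the specific ascending-MIU order being irrelevant to coverage. Your version simply makes explicit (via the induction on $|X|$ and the canonical $\prec$-sorted representation) the bijection that the paper's proof asserts by appeal to the definition of the set-enumeration tree.
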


\begin{proof} 
	According to the definition of a set-enumeration tree \cite{rymon1992search}, the complete search space of $I$ (where $m$ is the number of items in $I$) contains $2^m$-1 patterns if all subsets of I are enumerated. For example, the MIU-tree shown in Fig. \ref{MIU_tree} illustrates that all subsets of $I$ = $\{a, b, c, d, e\}$, w.r.t. all the possible patterns. Thus, all the supersets of the root node can be enumerated according to any order of 1-itemsets (e.g., the designed total order $\prec$, the lexicographic order, the descending order of support/utilities, etc.). The developed MIU-tree can thus be used to represent the search space of the designed approach.
\end{proof}

\begin{figure}[hbtp]	
	\centering
	\includegraphics[trim= 0 0 5 0,clip,scale=0.2]{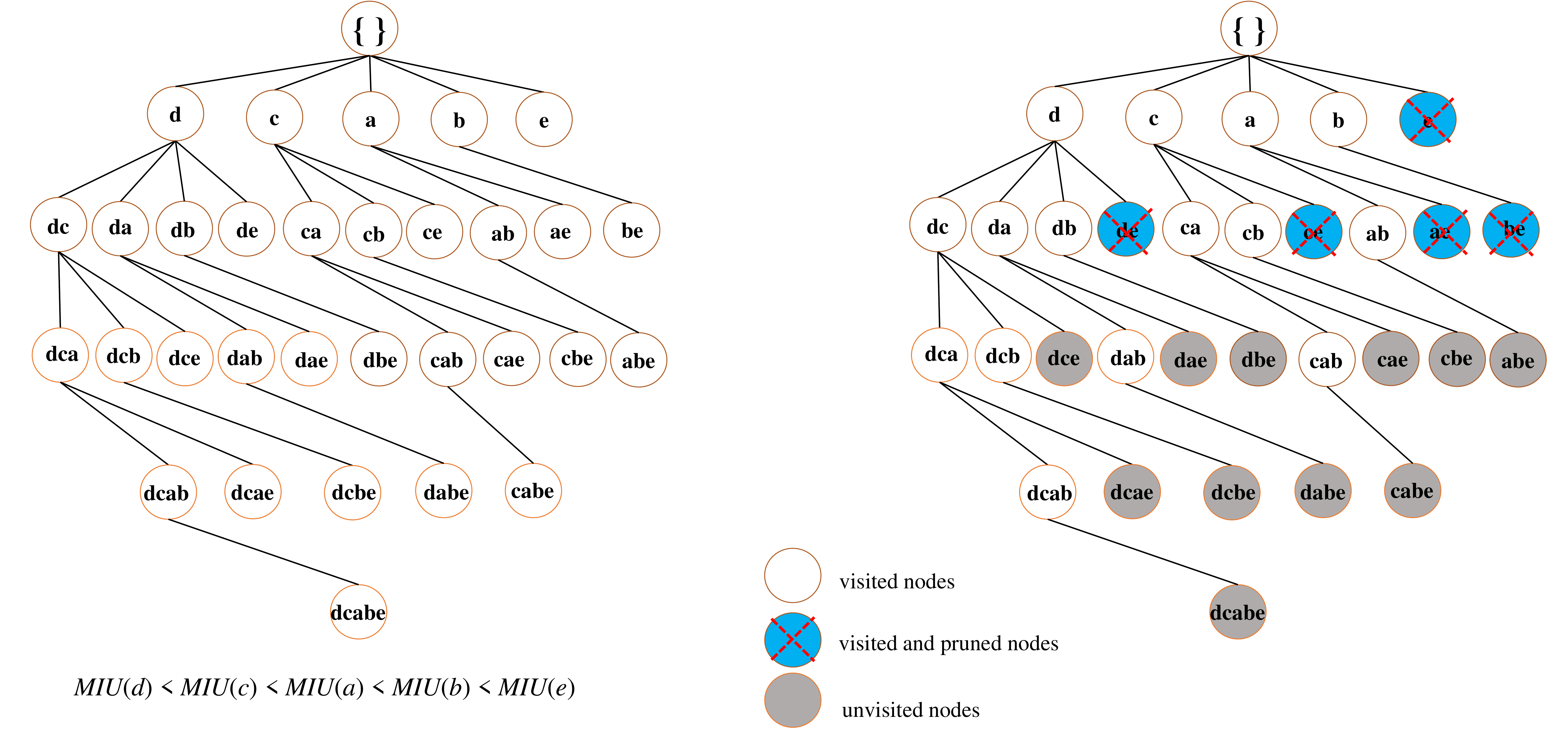}
	\captionsetup{justification=centering}
	\caption{The MIU-tree representation of the search space.}
	\label{MIU_tree}
\end{figure}

The traditional TWDC property of the TWU model does not hold in the proposed HUIM-MMU framework. For example, consider items $(b)$, $(c)$, $(d)$ and $(e)$, the results of \textit{MIU} values of them are (\textit{MIU}$(b)$: \$65, \textit{MIU}$(c)$: \$53, \textit{MIU}$(d)$: \$50 and \textit{MIU}$(e)$: \$70). The TWU of an itemset $ (bce) $ is calculated as $TWU(bce)$ = \$50, which is less than the minimum utility values of its subsets \textit{MIU}$(b)$, \textit{MIU}$(c)$ and \textit{MIU}$(e)$. Hence, $ (bce) $ is not a HTWUI, and thus the itemset $ (bcde) $ and its supersets would be discarded according to the TWDC property. But it can be observed that $ TWU(bcde) $ = \$50, which is equal to $ MIU(bcde) $ = \$50. But as shown in Table \ref{table_HUIs}, it can be seen that itemset $ (bcde) $ is actually an HUI. 

It is thus incorrect to discard the supersets of $ (bce) $ based on the TWDC property since $(bcde)$ would not be generated. Therefore, if a $k$-itemset $ X^{k} $ is a HTWUI (i.e., $ TWU(X^{k})\geq MIU(X^{k}) $), we cannot ensure that any subset $ X^{k-1} $ of $ X^{k} $ is also a HTWUI (because $ MIU(X^{k-1})\geq MIU(X^{k}) $). Thus, using this property to prune the search space may fail to discover the complete set of HUIs. To address this limitation, the \textit{Sorted Downward Closure (SDC) property} was proposed in \cite{lin2016efficient}. 

\textbf{Theorem 1.} (\textit{Sorted downward closure property}, \textit{SDC property}) \emph{Assume that items in itemsets are sorted by ascending order of mu values. Given any itemset $ X^{k} $ = \{$ i_{1}$, $ i_{2}$, $\dots$, $ i_{k} $\} of length \textit{k}, and another itemset $ X^{k-1} $ = \{$ i_{1}$, $ i_{2}$, $\dots$, $ i_{k-1} $\}  such that $ X^{k-1}\subseteq X^{k}$. If $ X^{k} $ is a HTWUI then $ X^{k-1} $ is also a HTWUI} \cite{lin2016efficient}.

\begin{proof}
	Since $ X^{k-1}\subseteq X^{k}$, the following relationships hold:\\
	
	(1) By definition 2, we have that $ MIU(X^{k-1}) $ = $ min $\{$ mu(i_{1})$, $ mu(i_2)$, $\dots$, $ mu(i_{k-1}) $\}, and $ MIU(X^{k}) $ = $ min $\{$ mu(i_{1})$, $ mu(i_{2})$, $\dots$, $ mu(i_{k}) $\}. Since \{$ i_{1}$, $ i_{2}$, $\dots$, $ i_{k} $\} is sorted according to the total order $\prec$, $ MIU(X^{k}) $ = $ MIU(X^{k-1}) $ = $ mu(i_{1})$. \\
	
	(2) Thus, $TWU(X^{k})=\sum_{X^{k}\subseteq T_{q}\wedge T_{q}\in D}tu(T_{q})\leq\sum_{X^{k-1}\subseteq T_{q}\wedge T_{q}\in D}tu(T_{q})=$ $TWU(X^{k-1})$.
	
	Therefore, if $ X^{k} $ is a HTWUI (i.e., $ TWU(X^{k})\geq mu(i_{1}) $), any subset $ X^{k-1} $ of $ X^{k} $ is also a HTWUI. 
\end{proof}

Although the \textit{SDC} property can guarantee the anti-monotonicity of HTWUIs, some HUIs may be discarded if the 1-itemsets of HTWUI were first determined by the \textit{MIU}($X$) values. For example, assume that an itemset $X$ has its TWU value and is less than $MIU(X)$, $X$ is not regarded as an HTWUI, and an itemset $Y$ as any extension of $X$ is not considered as an HTWUI either. However, if $Y$ = $\{i_j \cup X \}$, in which ij has a lower mu value than $X$, $Y$ still would be an HUI when $u(Y) \geq MIU(i_j)$. If the \textit{SDC} property is directly applied to produce the 1-itemsets of HTWUIs, $Y$ will not be considered an HTWUI since $X$ is not an HTWUI (i.e., $TWU(X) < MIU(X)$). It is thus inappropriate to directly determine the 1-itemsets of HTWUIs based on the \textit{SDC} property. To address this problem, the concept of least minimum utility (\textit{LMU}) was developed to guarantee the completeness of the derived HUIs with multiple minimum utility thresholds \cite{lin2016efficient}.

\begin{definition}[Least minimum utility value, \textit{LMU}]
	The least minimum utility value (\textit{LMU}) is defined as the smallest value in the \textit{MMU-table}, that is:
	\begin{equation}
	     LMU = min\{mu(i_{1}), mu(i_{2}), \dots, mu(i_{m})\}, 
	\end{equation}
	where \textit{m} is the total number of items in a database.
\end{definition}

For example, the \textit{LMU} of the given example is calculated as: $ min$\{$ mu(a)$, $ mu(b)$, $ mu(c)$, $ mu(d)$, $ mu(e) $\} = $ min $\{\$56, \$65, \$53, \$50, \$70\} = \$50.

\subsection{Conditional Downward Closure Property and Global Downward Closure Property for HIMU}

Based on the constructed MIU-tree, the following lemmas can be obtained.

\begin{lemma}
	\label{lemma2}
	The \textit{MIU} value of a node/pattern in the MIU-tree is equal to that of any of its child nodes (extension nodes).
\end{lemma}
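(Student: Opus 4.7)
The plan is to leverage directly the ordering convention baked into the MIU-tree. Let $X=\{i_1,i_2,\dots,i_k\}$ be an arbitrary node, with its items listed according to the total order $\prec$, and let $X'$ be any child (1-extension) of $X$, so that $X'=X\cup\{i_{k+1}\}$ for some item $i_{k+1}$ that is $\prec$-greater than every item already in $X$ (by Definition of extensions). The claim reduces to showing $MIU(X)=MIU(X')$.

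First, I would unfold the definition of $\prec$: since items are sorted in ascending order of their $mu$ values, the listing $i_1\prec i_2\prec\cdots\prec i_k$ means $mu(i_1)\le mu(i_2)\le\cdots\le mu(i_k)$. Consequently $MIU(X)=\min\{mu(i_1),\dots,mu(i_k)\}=mu(i_1)$. Second, since $i_{k+1}$ is $\prec$-greater than every item in $X$, we have $mu(i_{k+1})\ge mu(i_k)\ge mu(i_1)$. Combining these observations, $MIU(X')=\min\{mu(i_1),\dots,mu(i_k),mu(i_{k+1})\}=mu(i_1)=MIU(X)$, which settles the claim for 1-extensions.

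To extend the result to arbitrary descendants (``any of its child nodes'' in the broader sense used later in the paper), I would apply induction on the depth of the descendant relative to $X$. The base case is the 1-extension argument above. For the inductive step, any depth-$(\ell+1)$ descendant $X''$ is a 1-extension of some depth-$\ell$ descendant $X'$, and by the inductive hypothesis $MIU(X')=MIU(X)$, while the base case gives $MIU(X'')=MIU(X')$; transitivity yields $MIU(X'')=MIU(X)$.

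The argument is almost entirely bookkeeping, so I do not anticipate a serious obstacle. The one point that must be handled carefully is to make the role of the total order $\prec$ explicit: the equality depends crucially on extensions being appended in ascending $mu$-order, so that the minimum-achieving item always lies at the root-ward end of the path and is never displaced by a newly added item. If the tree were built under any other order (for example, the lexicographic order on item identifiers), the lemma would fail in general; a brief remark to this effect would make clear why the MIU-tree's sorting convention is the enabling ingredient, and sets up naturally the subsequent \emph{conditional} and \emph{global} downward closure properties on which HIMU's pruning depends.
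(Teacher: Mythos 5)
Your proof is correct and follows essentially the same route as the paper's: both observe that under the ascending-$mu$ total order $\prec$, the minimum is always attained by the first item $i_1$, so appending a $\prec$-greater item leaves $MIU$ unchanged. Your version is merely more explicit, spelling out the induction to arbitrary descendants and remarking on why the sorting convention is the enabling ingredient, which the paper leaves implicit.
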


\begin{proof}
	Assume that $X^{k-1}$  is a node representing an itemset $X$ in the MIU-tree, and that  $X^{k}$ is  any of its child nodes (extensions). We have that \textit{MIU}$(X^{k-1})$ = $min\{mu(i_1)$, $ mu(i_2), \dots, mu(i_{k-1)}\}$, and \textit{MIU}$(X^k)$ = $min\{mu(i_1),$ $mu(i_2), \dots, mu(i_k)\}$. Since  $\{i_1, i_2, \dots, i_k\} $ is sorted by ascending order of $mu$ values, it can be proven that:  $MIU(X^k) = MIU(X^{k-1}) = mu(i_1)$. Thus, the $MIU$ value of a node in the MIU-tree is always equal to the \textit{MIU} of any of its child nodes. 
\end{proof}

For example, in Fig. \ref{MIU_tree}, the MIU-tree is sorted in MIU-ascending order of 1-itemsets. The MIU of $(dc)$ is equal to any of its extension nodes such as $(dca)$, $(dcb)$, $(dce)$, $(dcab)$, $(dcae)$, and $(dcabe)$, which can be calculated as $MIU(dc)$ = $min$\{\$50, \$53\} = \$50. From these results, it can be found that the MIU of all the extension nodes of $(dc)$ are all the same as $MIU(dc)$. According to MIU-tree, we have the following lemma.

\begin{lemma}
	\label{lemma3}
	The support of a node in the MIU-tree is no less than the support of any of its child nodes (extension nodes).
\end{lemma}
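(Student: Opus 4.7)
The plan is to reduce the claim to the classical anti-monotonicity of support under subset containment, now cast in the language of the MIU-tree. First, I would fix notation by letting $X^{k-1}$ denote the itemset at an arbitrary node of the MIU-tree and $X^k$ denote any of its child nodes. By Definition 3.11, $X^k$ is obtained from $X^{k-1}$ by appending a single item $y$ that is greater than all items already in $X^{k-1}$ under the total order $\prec$. Hence $X^{k-1} \subseteq X^k$ as subsets of $I$.

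Next, I would invoke the definition of $TIDs(\cdot)$ given in Section \ref{sec:3}. For any transaction $T_q \in D$, the containment $X^k \subseteq T_q$ immediately implies $X^{k-1} \subseteq T_q$ by transitivity of $\subseteq$ together with $X^{k-1} \subseteq X^k$. Therefore $TIDs(X^k) \subseteq TIDs(X^{k-1})$, and taking cardinalities yields $|TIDs(X^k)| \leq |TIDs(X^{k-1})|$, which is precisely the statement that the support of $X^{k-1}$ is no less than the support of $X^k$.

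I do not expect any serious obstacle, since the result is just the usual downward closure of support expressed through the MIU-tree structure. The only point worth flagging explicitly is that a "child node" of $X^{k-1}$ in the MIU-tree is a one-item $\prec$-extension rather than an arbitrary superset, but this does not affect the argument because the subset relation $X^{k-1} \subseteq X^k$ holds in either case. A one-line inductive remark (iterating the child-node inequality along any root-to-descendant path) would then upgrade the conclusion from immediate children to \emph{all} extension nodes, matching the intended use of the lemma later in the paper for pruning based on support.
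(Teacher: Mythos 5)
Your proof is correct and follows essentially the same route as the paper's, which simply notes that since the MIU-tree is a shared prefix tree, a child node $X^{k}$ is a superset of its parent $X^{k-1}$ and hence $sup(X^{k}) \leq sup(X^{k-1})$. You merely spell out the intermediate step $TIDs(X^{k}) \subseteq TIDs(X^{k-1})$ explicitly, which the paper leaves implicit.
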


\begin{proof}
	Since the Set-enumeration MIU-tree is a shared prefix tree, the relationship of the support of $X^{k}$ and $X^{k-1}$ can be proven to be $sup(X^k) \leq sup(X^{k-1})$. 
\end{proof}

\textbf{Theorem 2.} ($\mathbf{HUIs\subseteq HTWUIs}$)
  \emph{Assume that 1-itemsets having a TWU lower than $LMU$ are discarded and that the total order $\prec$ is applied. We have that HUIs $\subseteq$ HTWUIs, which indicates that if an itemset is not a HTWUI, then it is not an HUI. Moreover, none of its extensions are HTWUIs or HUIs.}

\begin{proof}
	Let $X^{k}$ be an itemset such that $ X^{k-1} $ is a subset of $ X^{k}$.\\ 
	(1)	We have that $ TWU(X^{1})\leq LMU $ and \textit{MIU}$(X^{k})\geq LMU$.\\
	(2)	Since items are sorted by ascending order of \textit{mu} values, $ TWU(X^{k-1})\geq TWU(X^{k})$ and \textit{MIU}$(X^{k-1}) = MIU(X^{k}) $ = $ min $\{$ mu(i_{1})$, $ mu(i_{2})$, $\dots$, $ mu(i_{m}) $\} = $ mu(i_{1})$.\\
	(3)	$u(X)=\sum_{X\subseteq T_{q}\wedge T_{q}\in D}u(X, T_{q})\leq\sum_{X\subseteq T_{q}\wedge T_{q}\in D}tu(T_{q}) = TWU(X)$.\\
	Thus, if $ X^{k-1} $ is not a HTWUI and $ TWU(X^{k-1}) < mu(i_{1})$, none of its supersets are HUIs. 
\end{proof}

\begin{lemma}
	\label{lemma4}
	The TWU of any node in the Set-enumeration MIU-tree is no less than the sum of all the actual utilities of any one of its descendant nodes, but not the \textit{MIU} of its descendant nodes.
\end{lemma}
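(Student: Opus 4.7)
The plan is to fix an arbitrary ancestor--descendant pair $(X^{k-1}, X^{k})$ in the MIU-tree, so that $X^{k-1}\subseteq X^{k}$ by the construction of extension nodes, and to establish $TWU(X^{k-1})\ge u(X^{k})$ through a two-step, transaction-by-transaction argument. First I would bound $u(X^{k},T_{q})$ by $tu(T_{q})$ inside each single transaction, and then enlarge the index set of the outer sum from transactions containing $X^{k}$ to those containing only $X^{k-1}$; chaining these two monotone inequalities will yield the asserted bound.

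Concretely, the first step is the observation that $u(X^{k},T_{q})=\sum_{i_{j}\in X^{k}}u(i_{j},T_{q})$ sums nonnegative terms over a subset of the items appearing in $T_{q}$, so $u(X^{k},T_{q})\le tu(T_{q})$. The second step uses $X^{k-1}\subseteq X^{k}$ to deduce the containment $\{T_{q}\in D : X^{k}\subseteq T_{q}\}\subseteq\{T_{q}\in D : X^{k-1}\subseteq T_{q}\}$ on transaction index sets, so enlarging the index set over which nonnegative values $tu(T_{q})$ are summed can only increase the total. Combining the two,
\[
u(X^{k}) \;=\; \sum_{X^{k}\subseteq T_{q}} u(X^{k},T_{q}) \;\le\; \sum_{X^{k}\subseteq T_{q}} tu(T_{q}) \;\le\; \sum_{X^{k-1}\subseteq T_{q}} tu(T_{q}) \;=\; TWU(X^{k-1}),
\]
which is the asserted bound for every descendant $X^{k}$ of $X^{k-1}$.

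The main obstacle is conceptual rather than technical: the lemma's second clause, \emph{but not the MIU of its descendant nodes}, is a negative assertion that must also be addressed. By Lemma~\ref{lemma2} we already have $MIU(X^{k-1})=MIU(X^{k})$, yet nothing in the derivation above forces $TWU(X^{k-1})\ge MIU(X^{k})$; a one-line counterexample suffices, e.g., any branch of the MIU-tree starting from a $1$-itemset whose TWU falls below the smallest $mu$-value in the MMU-table. Flagging this distinction is important because it explains why the forthcoming pruning rules cannot simply threshold TWU against descendant MIUs, and motivates the conditional (rather than global) downward-closure framework developed next.
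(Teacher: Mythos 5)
Your proof is correct and follows essentially the same route as the paper's: the paper simply cites its Theorem~1 and Lemma~\ref{lemma1} to get $TWU(X^{k-1})\ge TWU(X^{k})$ and implicitly combines this with $u(X^{k})\le TWU(X^{k})$ (the step it writes out in Theorem~2), which is exactly the two-inequality chain you spell out transaction by transaction. Your explicit treatment of the ``but not the \textit{MIU}'' clause is a welcome clarification of a point the paper leaves to a vague remark about ``the relationship between \textit{MIU} values.''
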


\begin{proof}
	Let $X^{k-1}$ be a node in the MIU-tree, and $X^{k}$ be a children (extension) of $X^{k-1}$. According to Theorem 1 and Lemma 1, we can get $TWU(X^{k-1})$ $\geq$ $TWU(X^k)$ and the relationship between \textit{MIU} values. Thus, the lemma holds. 
\end{proof}

\textbf{Theorem 3.} (\textit{Global downward closure property}, \textit{GDC property})	\label{theorem3} 
	\emph{In the designed MIU-tree, if the TWU value of a tree node $X$ is less than the \textit{LMU}, $X$ is not an HUI, and all its supersets (not only all its descendant nodes, but also the other nodes containing $X$) are also not considered as HUI.}

\begin{proof}
	According to Lemma 2 and Theorem 2, this theorem holds. 
\end{proof}	

This theorem ensures that by discarding  itemsets with a TWU less than $LMU$, and their extensions, no HUIs are missed. Thus, the designed \textit{global downward closure (GDC) property} and the \textit{LMU} guarantee the \textbf{completeness} and \textbf{correctness} of the proposed HIMU algorithm, when pruning the search space.

\textbf{Remark 1.} The TWU of any node in the set-enumeration MIU-tree is greater than or equal to the sum of all the actual utilities of its descendant nodes, but the MIU of descendant nodes.

\begin{proof}
   Let $X^{k-1}$ be a node in the MIU-tree, and $X^k$ be any of the children (extension nodes) of $X^{k-1}$. According to Theorem 1, we can obtain $TWU(X^{k-1}) \geq TWU(X^k)$. Thus, the property holds.
\end{proof}

In the past, a structure named utility-list was proposed in the HUI-Miner algorithm to directly mine HUIs without candidate generation \cite{liu2012mining}. The utility-list structure is efficient, and we adopt this structure in the proposed HIMU algorithm to store the necessary information. The utility-list structure keeps information from transactions in memory. Each entry in the utility-list of an itemset $X$ represents a transaction $T_q$ containing $X$. It consists of the corresponding TID (denoted as \textit{tid}), the utility of $X$ in $T_q$ (denoted as \textit{iu}), and the remaining utility of $X$ in $T_q$ (denoted as \textit{ru}). For two item/sets $X$ and $Y$, the notation $X \prec Y$ indicates that $X$ precedes $Y$ according to the total order on items (e.g., alphabetical order or TWU ascending order). As mentioned, the extension of an itemset $X$ is that $\{X \cup i\}$, where $i$ is an item.

\begin{definition}
 \rm  Give an itemset $X$ and a transaction (or itemset) $T$ such that $X \subseteq T,$ the set of all items from T that are not in $X$ is denoted as  $T \setminus X$, and the set of all the items appearing after $X$ in $T$ is denoted $T/X$. Thus, $T/X \subseteq T \setminus X$.
\end{definition}

For example, consider $X$ = $\{cd\}$ and the transaction $T_4$ in Table \ref{table_example2}; then, $T_4 \setminus X$ = $\{ae\}$ and $T_4 / X$ = $\{e\}$.

\begin{definition}[utility-list \cite{liu2012mining}]
	\rm  An entry in the utility-list of an itemset $X$ represents a transaction $T_q$ with three fields: (1) the TID of $X$ in $T_q$ (denoted as \textit{tid}, $X \subseteq T_q  \cup T_{q}\in D),$ (2) the utility of $X$ in $T_q$ (denoted as \textit{iu}), and (3) the remaining utility of $X$ in $T_q$ (denoted as \textit{ru}), where \textit{iu} is defined as $ X.iu = \sum_{i \in X \wedge X\subseteq T_{q}}u(i,T_{q}) $ of each transaction $T_q$, and \textit{ru} is defined as $ X.ru = \sum_{i \notin X \wedge i\subseteq T_{q} \wedge X \prec i}u(i,T_{q}) $ of each transaction $T_q$ \cite{liu2012mining}.
\end{definition}

Details of the construction of utility-list structures can be found in reference \cite{liu2012mining}. The utility-list of a $k$-itemset ($k > 1$) is constructed using the utility-list structures of its subsets of length $k$-1. Note that it is necessary to initially construct the utility-list structures of 1-itemsets with \textit{LMU} instead of HTWUI1 to recursively obtain the utility-list structures of larger itemsets based on Theorem 2. In the running example, \textit{LMU} = \$50 and the TWU value of 1-itemsets are \{$a$: \$124; $b$: \$178; $c$: \$211; $d$: \$269; $e$: \$169\}, where \{$a$: \$124\} indicates that the itemset $\{a\}$ has a TWU value of \$124. The utility-list structures are constructed in ascending order of their \textit{MIU} values ($d \prec c \prec a \prec b \prec e$), as shown in Fig. \ref{fig:Fig1}.

\begin{figure*}[hbtp]	
	\centering
	\includegraphics[trim = 0 0 5 0,clip,scale=0.45]{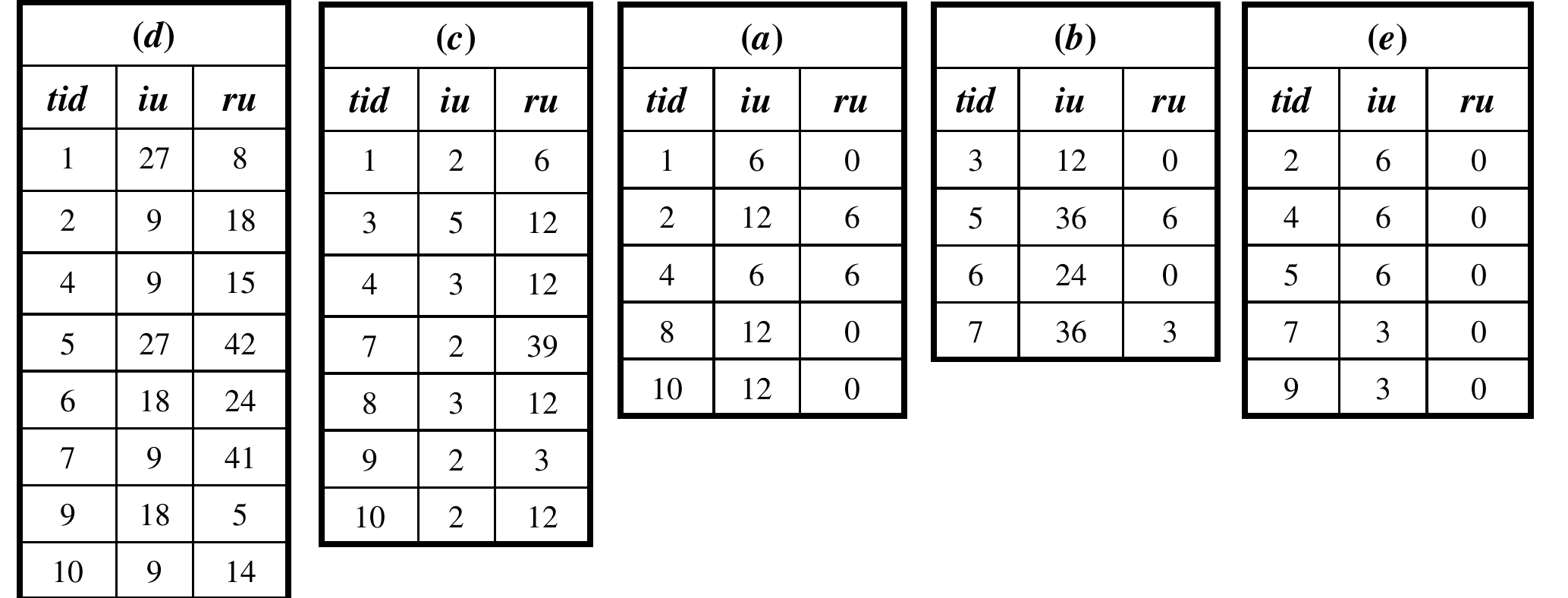}
	\captionsetup{justification=centering}
	\caption{Constructed utility-lists of 1-itemsets in the running example.}
	\label{fig:Fig1}
\end{figure*}

\begin{definition}
  \rm  Based on the definition of utility-list, let \textit{X.IU} be the sum of the utilities of an itemset $X$ in $D$, which can be defined as
  \begin{equation}
       X.IU=\sum_{X\subseteq T_{q}\wedge T_{q}\in D}X.iu(T_{q}).
  \end{equation}
\end{definition}
	
\begin{definition}	
  \rm Based on the designed utility-list structure, let \textit{X.RU} be the sum of the remaining utilities of an itemset $X$ in $D$, which can be defined as   
  \begin{equation}
        X.RU=\sum_{X\subseteq T_{q}\wedge T_{q}\in D}X.ru(T_{q}).
  \end{equation}
\end{definition}

The itemset $(b)$ is used as an example to illustrate the two definitions. The itemset $(b)$ exists in transactions $\{T_3, T_5, T_6, T_7\}$. \textit{b.IU} is calculated as (12 + 36 + 24 + 36) = \$108, and \textit{b.RU} is calculated as (0 + 6 + 0 + 3) = \$9. The itemset $(be)$ exists in transactions $\{T_5, T_7\}$; \textit{be.IU} = \textit{b.IU} + \textit{e.IU} = (36 + 36) + (6 + 3) = \$81, and \textit{be.RU} = \textit{b.RU} + \textit{e.RU} = (6 + 3) + (0 + 0) = \$9.

In the designed utility-list structure, we can effectively avoid the multiple database scans. For example, for the itemset $(b)$, it appears in transactions $T_3$, $T_5$, $T_6$, and $T_7$. The utility-list structure of $(b)$ is constructed and shown in Fig. \ref{fig:Fig3}(a). The accumulated utilities and remaining utilities are simultaneously calculated during the construction process of the utility-list structure of the item $(b)$. In this example, the $u(b)$ can be calculated as \textit{b.IU} = $u(b)$ = 12 + 36 + 24 + 36 = \$108, and the remaining utility of $(b)$ can be calculated as \textit{b.RU} = 0 + 6 + 0 + 3 = \$9. The obtained \textit{information-table} (\textit{I-table}) of the item $(b)$ is shown in Fig. \ref{fig:Fig3}(b).

\begin{figure*}[hbtp]	
	\centering
	\includegraphics[trim = 0 0 0 0,clip,scale=0.45]{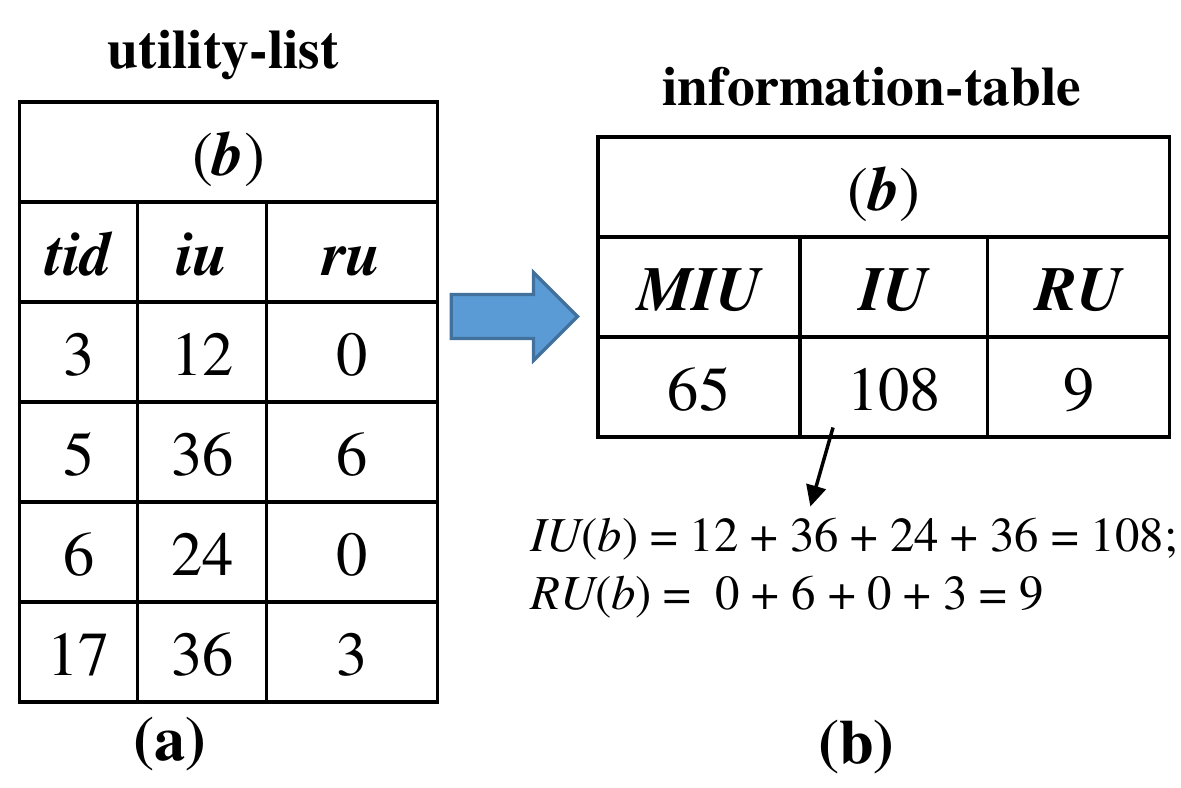}
	\captionsetup{justification=centering}
	\caption{Obtained information-table of the item $(b)$.}
	\label{fig:Fig3}
\end{figure*}

Similarly, the obtained information-tables of all 1-itemsets from the given example are shown in Fig. \ref{fig:Fig4}.

\begin{figure*}[hbtp]	
	\centering
	\includegraphics[trim = 0 0 5 0,clip,scale=0.35]{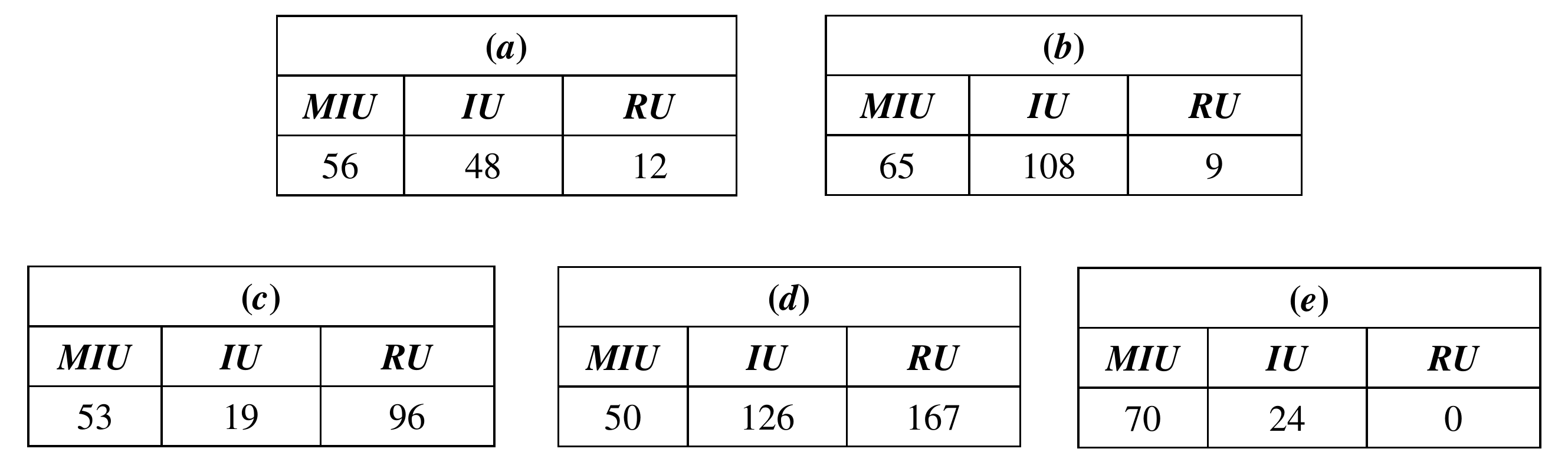}
	\captionsetup{justification=centering}
	\caption{Obtained information-tables of all 1-itemsets.}
	\label{fig:Fig4}
\end{figure*}

Thus, we do a depth-first search of a set enumeration tree. The traversing principle in the developed MIU-tree is stated below: Each node $X$ in the MIU-tree consists of its utility determination. If the utility of the currently processed node $X$ is no less than its \textit{MIU} value, and the summation of \textit{X.IU} and \textit{X.RU} is no less than the \textit{MIU} value, the extensions of the node $X$ are required to be generated and determined; otherwise, the depth-search produce will be stopped. This procedure can be used to efficiently reduce the search space of the MIU-tree. Based on the developed MIU-tree, the actual HUIs can be directly discovered without multiple database scans. The \textit{SDC} property of HUIs thus can be used as the conditional anti-monotone property to prune the unpromising child nodes with the \textit{MIU}, thus speeding up the combination process of utility-list structure and reducing the search space. Based on the previous definitions, two early pruning strategies are used to find a more compressed search space according to the TWDC property. These properties allow us to prune a great number of unpromising itemsets, thus significantly reducing the search space for mining the HUIs with varied item-specific minimum utility thresholds.

\begin{strategy}
	\label{strategy1}
	When traversing the MIU-tree using a depth-first search, if the TWU of a node $X$ based on its  utility-list is less than the \textit{LMU}, then none of the supersets of node $X$ (note that here supersets contains not only descendant nodes of $X$, but also other nodes having $X$ as subset) are HUIs.
\end{strategy}

According to Theorem 2, we can observe that if the TWU of a node is less than the \textit{LMU}, this node is not an HUI, as well as all its superset node. As shown in Fig. \ref{MIU_tree}, assume that the node ($e$) has $TWU(e) < LMU$, then all of the superset nodes of itemset ($e$) would not be regarded as an HUI because their TWU values are always no greater than the \textit{LMU}. Hence, all of the nodes containing ($e$) are directly skipped without constructing their utility-list structures.

\textbf{Theorem 4.} (\textit{Conditional downward closure property}, \textit{CDC property})
	emph{For any node $X$ in the MIU-tree, the sum of $X.IU$ and $X.RU$ in the utility-list of $X$ is no less than the utility of any one of its descendant nodes (extensions). Thus this sum is anti-monotonic and allows pruning  itemsets in the MIU-tree.}

\begin{proof}
    Let $ X^{k-1} $ be a (\textit{k}-1)-itemset, and $ X^{k} $ be a (\textit{k})-itemset that is an extension of $ X^{k-1}$. Assume that $ X^{k} $ is a children of $ X^{k-1} $ in the MIU-tree, meaning that $ X^{k-1} $ is a prefix of $ X^{k}$. Let the set of items in $ X^{k} $ but not in $ X^{k-1} $ be denoted as ($ X^{k} $$-$$ X^{k-1} $) = ($X^{k}$$\setminus $$X^{k-1}$), and the set of all the items appearing after $ X^{k} $ in transaction \textit{T} is denoted as $ T/X^{k} $. For any transaction $X^{k}\subseteq T_{q}$:
\begin{tabbing}
	$\because X^{k-1}\subset X^{k}\subseteq T_{q}\Rightarrow (X^{k}\setminus X^{k-1})\subseteq(T_{q}\setminus X^{k-1}).$\\
	$\therefore$ \text{In} each $ T_{q}, X^{k}.iu $ \=$ =X^{k-1}.iu+(X^{k}\setminus X^{k-1}).iu$
	$=X^{k-1}.iu+\sum_{z\in(X^{k}\setminus X^{k-1})}z.iu$\\
	
	$\therefore X^{k}.iu$ \= $\leq X^{k-1}.iu+\sum_{z\in(T_{q}/X^{k-1})} z.iu$
	$=X^{k-1}.iu+X^{k-1}.ru$\\
	
	$\therefore$ \text{In} each $ T_{q}, X^{k}.iu\leq X^{k-1}.iu+X^{k-1}.ru$\\
	$\because X^{k-1}\subset X^{k}\Rightarrow X^{k}.tids\subseteq X^{k-1}.tids$\\
	$\therefore \text{in }D, X^{k}.IU$\=$=\sum_{T_{q}\in X^{k}.tids}X^{k}.iu$
	$\leq \sum_{T_{q}\in X^{k}.tids}(X^{k-1}.iu+X^{k-1}.ru)$\\
	
	\>$\leq \sum_{T_{q}\in X^{k-1}.tids}(X^{k-1}.iu+X^{k-1}.ru)$
	$= X^{k-1}.IU+X^{k-1}.RU$\\
	
	$\therefore \text{in }D, X^{k}.IU$ \= $ \leq X^{k-1}.IU+X^{k-1}.RU$.
\end{tabbing}	
Thus, for the addressed HUIM-MMU problem, the sum of the utilities of $ X^{k} $ in \textit{D} is no greater than (\textit{$X^{k-1}.IU$} + \textit{$X^{k-1}.RU$}) of $ X^{k-1}$ in \textit{D}.

\end{proof}

\begin{strategy}
	\label{strategy2}
	When traversing the MIU-tree using a depth-first search, if the sum of \textit{X.IU} and \textit{X.RU} in the  utility-list of an itemset $X$ is less than \textit{MIU(X)}, then none of the descendant nodes (extensions) of node X is an HUI since the actual utilities of these extensions will be less than \textit{MIU(X)}.
\end{strategy}

In the running example, assume that the node $ (e) $ has $ TWU(e) < LMU$. Then the visited nodes, pruned nodes, and the skipped nodes are respectively  shown in Fig. \ref{MIU_tree}(right) when applying the \textbf{Strategy 1}. And the \textbf{Strategy 2} is used as a conditional strategy to prune all extensions of an unpromising node early.

\subsection{Proposed HIMU Algorithm}
 As shown in the main procedure of HIMU (Algorithm 1), the proposed HIMU algorithm initially sets \textit{i.UL} and \textit{D.UL} as an empty set (Line 1) and calculates the \textit{LMU} from MMU-table (Line 2). The database is scanned to calculate the $TWU(i)$ value of each item $ i \in I $ (Line 3), and the potential 1-itemsets were found such that $ TWU(i) \geq LMU ( I^*  \subseteq HTWUI^1)$ (Line 4). After sorting $I^*$ in the total order $\prec$ (ascending order of \textit{MIU} value), the HIMU algorithm scans $D$ again to construct the utility-list structure of each 1-itemset $ i \in I^* $, and then get the set of $D.UL$ for all 1-itemsets (Lines 5-6). It is important to note that only the designed order $\prec$ can guarantee the completeness of the output results, and the reason has been noted. The utility-list structures for all 1-extensions of $ i \in I^* $ are recursively processed by using a depth-first search procedure HUI-Search (Line 7). Details of the HUI-Search procedure are shown in Algorithm 2.

\begin{algorithm}
	\LinesNumbered
	\caption{The $ \rm  HIMU $ algorithm}
	\KwIn{\textit{D}; \textit{ptable}; \textit{MMU-table} = $ \{mu(i_{1}), mu(i_{2}), \dots, mu(i_{m})\}$.}
	\KwOut{The set of complete high-utility itemsets (HUIs).}
	initialize $ i.UL \gets \emptyset, D.UL \gets \emptyset $\;
	calculate the $LMU$ in the \textit{MMU-table}\;
	scan $D$ to calculate the  $ TWU(i) $ value of each item $ i\in I$\;
	find $ I^* \gets \left\{  i \in I | TWU(i) \geq LMU \right\}$, w.r.t. $ HTWUI^1 $\;
	sort $ I^* $ according to the designed total order $ \prec $\ (ascending order of $mu$ values)\;
	scan $D$ to construct the utility-list \textit{i.UL} of each item  $ i\in I^{*} $, and put them into the set of $D.UL$\;
	\textbf{call HUI-Search}($\phi, I^*, \textit{MMU-table}$)\;
	\Return \textit{HUIs}\
\end{algorithm}

\begin{algorithm}
	\LinesNumbered
	\caption{The HUI-Search Procedure}
	\KwIn{$X$, $\textit{extensionsOfX}$, \textit{MMU-table}.}
	\KwOut{The complete set of HUIs.}
	\For {each itemset $ X_{a}\in $ $ \textit{extensionsOfX} $}
	{
		obtain the $ X_{a}.IU $ and $ X_{a}.RU $ values from the built $ X_{a}.UL $\;
		\If{$ X_{a}.IU\geq MIU(X_a) $}
		{
			$ HUIs\leftarrow HUIs\cup X_{a} $\;	
		}
		\If {$ (X_{a}.IU + X_{a}.RU\geq MIU(X_a)) $}
		{
			$ \textit{extensionsOfX}_{a}\leftarrow  \emptyset $\;
			\For {each itemset $ X_{b}\in \textit{extensionsOfX} $ such that $ X_{b} $  after  $ X_{a} $}
			{			
				$ X_{ab}\leftarrow X_{a} \cup X_{b} $\;
				$ X_{ab}.UL\leftarrow construct(X, X_{a}, X_{b}) $\;
				\If{$ X_{ab}.UL \not= \emptyset$}
				{
					$ \textit{extensionsOfX}_{a}\leftarrow \textit{extensionsOfX}_{a}\cup X_{ab}.UL $\;	
				} 				
			}   		
			\textbf{call HUI-Search}$\boldmath{(X_{a}, \textit{extensionsOfX}_{a}, MIU(X_a), EUCS)}$\;	
		} 
	}
	\textbf{return} \textit{HUIs}\
\end{algorithm}

As shown in HUI-Search procedure (cf. Algorithm 2), the necessary information w.r.t. $X_{a}.IU$, $X_{a}.RU$, and the $MIU(X_a)$ values from the built structures (Line 2) are obtained. After that, each itemset $X_a$ is determined whether it is the desired HUI (Lines 3-5). We then apply the pruning strategy 2 to further determine whether its child nodes should be processed for the later depth-first search (Lines 6-16). If any extensions of one node is calculated as $X_{a}.IU$ + $X_{a}.RU \geq MIU(X_a)$ (Line 6), the construction process $ Construct(X, X_a, X_b) $ is then continuously executed to construct a set of utility-list structures of all 1-extensions of itemset $X_a$  (w.r.t. $\textit{extensionsOfX}_a $) (Lines 8-14). Note that each constructed $X_{ab}$  is a 1-extension of itemset $X_a$ , if the utility-list structure of it is not empty (Line 11); it should be put into the set of $\textit{extensionsOfX}_a $ for executing the later depth-first search (Line 12). The designed HUI-Search procedure is recursively performed to mine HUIs (Line 15), and it finally returns the set of HUIs (Line 18). Based on the designed MIU-tree and the spanning mechanism, the HIMU algorithm can thus directly mine the complete set of HUIs from the database without candidate generation.

\section{Proposed Enhanced HIMU$_{EUCP}$, HIMU$_{LAP}$, and HIMU$_{ELP}$ Algorithms}
\label{sec:HIMU2}

In this section, three improved algorithms, namely, the enhanced HIMU$_{EUCP}$, HIMU$_{LAP}$, and HIMU$_{ELP}$ approaches with early pruning strategy, are further proposed to improve the performance of the designed HIMU algorithm. As shown in the previous studies, such as FHM \cite{fournier2014fhm} and HUP-Miner \cite{krishnamoorthy2015pruning}, two effective pruning strategies called estimated utility co-occurrence pruning (EUCP) and LA-Prune (LAP) can effectively early prune the redundant candidates for speeding up the computations. Details of the EUCP and LAP strategies are described below.

\subsection{EUCP and LAP Strategies}

The EUCP strategy is designed to avoid the construction of utility-list structures for unpromising patterns since construction of the utility-list structure (a join) is an expensive operation. It has been shown that the FHM algorithm \cite{fournier2014fhm} outperforms the HUI-Miner algorithm in most cases. The EUCP can be used to directly eliminate a low-utility pattern and all its transitive extensions without constructing their utility-list. Based on the HUIM-MMU framework and Definition 9, the following property can be obtained.

\textit{Property.} \textit{If a 2-itemset $X$ is not an HTWUI w.r.t. $TWU(X) \geq MIU(X)$, any $k$-itemset ($k \geq 3$) of $X$ will not be an HTWUI or HUI either}.

Note that the set HTWUI is determined using the $TWU(X) \geq MIU(X)$ criterion, which is different than the criterion used in FHM \cite{fournier2014fhm}. Based on the modified EUCP strategy, a huge number of unpromising $k$-itemsets ($k \geq 3$) can be efficiently pruned. To apply the EUCP strategy, a structure named estimated utility co-occurrence structure (EUCS) is first built. It is a matrix that stores the TWU values of all 2-itemsets. The constructed EUCS of the running example is shown in Table \ref{table_EUCS}.

\begin{table}[!htbp]
	\centering
	\small
	\caption{EUCS of given example.} 
	\label{table_EUCS}
	\begin{tabular}{|c|c|c|c|c|}
		\hline
		\textbf{Item} & \textbf{$a$} & \textbf{$b$}  & \textbf{$c$}  & \textbf{$d$} \\ \hline
		\textbf{$b$}   &   \$0   &   -  &  -  &  -	    \\ \hline
		\textbf{$c$}   &   \$97   &   \$91  &  -  & -	    \\ \hline
		\textbf{$d$}   &   \$109   &  \$135  &  \$155   &  - \\ \hline
		\textbf{$e$}   &   \$51   &   \$95   &  \$97    &  \$169 \\ \hline
	\end{tabular}
\end{table}

Based on the join operation of utility-list structure in HIMU algorithm, a superset of a non-HUI may still be an HUI; only the \textit{CDC} property is guaranteed in the designed MIU-tree. Thus, we try to utilize the TWU value instead of the \textit{minutil} value as the pruning condition when using the EUCP strategy. Thus, the \textit{global downward closure} property can be held. Theorem 2 shows that all supersets (not only the child nodes of a prefix node, but also its supersets that are not its extension's descendant nodes) cannot be HUIs if any one of their subsets is not an HTWUI according to the \textit{LMU}.

\begin{strategy}[EUCP Strategy \cite{fournier2014fhm}] 
	If the TWU of a 2-itemset $X$ in EUCS is less than the \textit{MIU} value, any superset of $X$ will not be an HUI; it can be directly pruned.
\end{strategy}

Hence, the EUCP strategy supports a powerful capability for early pruning those unpromising $k$-itemsets ($k \geq 3)$. To perform this strategy, each 2-itemset in the EUCS is then checked to determine whether its TWU is less than the \textit{MIU} of its prefix node.

Recently, a new algorithm, namely HUP-Miner, was proposed as an improvement to HUI-Miner \cite{krishnamoorthy2015pruning}. Two new pruning strategies, PU-Prune and LA-Prune, are introduced in HUP-Miner \cite{krishnamoorthy2015pruning} to reduce the search space for mining HUIs. The LA-Prune property can provide a tighter utility upper bound for any itemset $Px$ containing $Py$. It is applied in this paper when the construction process is performed on $Pxy$ from $Px$ and $Py$. We further extend the LA-Prune strategy to the proposed HIMU algorithm.

\begin{lemma}
  Given two itemsets $X$ and $Y$ ($X  Y$), neither $\{X, Y\}$ nor any supersets of $X$ will be an HUI if \textit{X.IU} + \textit{X.RU} -$ \sum _{X\subseteq T_{q} \wedge T_{q}\subseteq D \wedge Y \nsubseteq T_{q}}((X.iu + X.ru) \leq MIU(X)) $ .  
\end{lemma}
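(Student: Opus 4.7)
The plan is to prove this LA-Prune extension by combining two ingredients already established in the paper: (i) the conditional downward closure property (Theorem~4), which tells us that $X.IU+X.RU$ upper-bounds the utility of $X$ together with any of its descendant extensions, and (ii) the fact that, under the total order $\prec$ based on ascending $mu$ values, every extension of $X$ inherits the same $MIU$ value as $X$ itself (Lemma~\ref{lemma2}). The LA-Prune idea is to refine the $X.IU+X.RU$ bound to a tighter one that accounts for requiring $Y$ to co-occur with $X$, so that we obtain a pruning criterion specifically for the extension $X\cup Y$ and all of its descendants.

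First I would rewrite the quantity on the left-hand side more transparently. By partitioning the transactions containing $X$ according to whether they also contain $Y$, we get
\begin{equation*}
X.IU+X.RU \;=\; \sum_{X\subseteq T_q,\,Y\subseteq T_q}(X.iu+X.ru) \;+\; \sum_{X\subseteq T_q,\,Y\not\subseteq T_q}(X.iu+X.ru),
\end{equation*}
so the hypothesis of the lemma is equivalent to
\begin{equation*}
\sum_{X\cup Y\subseteq T_q}(X.iu+X.ru)\;\leq\;MIU(X).
\end{equation*}
The right-hand side sum is taken exactly over the transactions that contain the extension $X\cup Y$.

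Next I would bound the utility of $X\cup Y$ (and any superset $Z\supseteq X\cup Y$) from above by that sum. For each $T_q$ with $X\cup Y\subseteq T_q$, the argument in the proof of Theorem~4 shows that $u(X\cup Y,T_q)\leq X.iu(T_q)+X.ru(T_q)$, because the items of $Y\setminus X$ together with any further extension items appear after $X$ in $T_q$ and hence contribute to $X.ru(T_q)$. Summing over all such transactions gives $u(X\cup Y)\leq\sum_{X\cup Y\subseteq T_q}(X.iu+X.ru)$, and the same reasoning extends to every descendant $Z$ in the MIU-tree that has $X\cup Y$ as prefix. Combined with the hypothesis, this yields $u(Z)\leq MIU(X)$ for every such $Z$.

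Finally I would invoke the sorted-order observation: since the items of every $Z$ considered are listed in ascending $mu$ order starting with the prefix item of $X$, we have $MIU(Z)=MIU(X\cup Y)=MIU(X)$, so the bound $u(Z)\leq MIU(X)$ is precisely the failure of the HUI condition $u(Z)\geq MIU(Z)$ (strictly, $u(Z)<MIU(Z)$, with the boundary case absorbed by how the authors state the threshold). Hence neither $X\cup Y$ nor any of its descendant supersets can be an HUI, completing the proof. The main subtlety, and the step I would be most careful about, is justifying that the inequality $u(X\cup Y,T_q)\leq X.iu(T_q)+X.ru(T_q)$ really does cover every item of $Y\setminus X$; this needs the convention that $Y$ appears after $X$ in the total order $\prec$ (as the notation $X\prec Y$ in the lemma's statement indicates), so that every item of $Y\setminus X$ is already accounted for inside $X.ru$.
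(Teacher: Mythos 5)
Your argument is correct, and it is worth noting that the paper itself never actually proves this lemma: it is stated bare and immediately followed by the strategy, with the burden deferred to the cited HUP-Miner reference. Your proof is the natural way to fill that gap, and it uses exactly the right ingredients from the paper: the partition of $X.IU+X.RU$ over transactions containing versus not containing $Y$ correctly turns the hypothesis into the tightened bound $\sum_{X\cup Y\subseteq T_q}(X.iu+X.ru)\leq MIU(X)$; the per-transaction inequality $u(Z,T_q)\leq X.iu(T_q)+X.ru(T_q)$ for any $Z$ extending $X\cup Y$ is precisely the step established inside the proof of Theorem~4 (and your caveat that this needs $X\prec Y$ so that the items of $Y\setminus X$ fall inside $X.ru$ is the right one to flag); and Lemma~\ref{lemma2} gives $MIU(Z)=MIU(X)$ so that the numeric bound really does defeat the HUI condition. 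Two small points. First, the lemma as printed claims the conclusion for ``any supersets of $X$,'' which is too strong and cannot follow from your (or any) argument --- a superset of $X$ not containing $Y$ is untouched by the hypothesis; the accompanying strategy makes clear the intended scope is $X\cup Y$ and its descendant extensions, which is exactly what you prove. Second, the strict-versus-nonstrict boundary you mention is a genuine looseness in the paper's statement rather than in your proof; resolving it the way you do (treating $u(Z)\leq MIU(Z)$ as failing the threshold up to the boundary case) is consistent with how the authors state the strategy.
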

\begin{strategy}[LA-Prune Strategy \cite{krishnamoorthy2015pruning}] 
   Let $X$ be a processed itemset (node) during the depth-first search of the set-enumeration MIU-tree, and $Y$ be the right sibling node of $X$. If the sum of \textit{X.IU} + \textit{X.RU} subtracts the utilities \textit{X.iu} + \textit{X.ru} of a set of transactions is less than \textit{MIU(X)}, $\{X,Y\}$ is not an HUI, and any of its child nodes are not an HUI. The construction of the utility-list structures for the children nodes of $X$ is not necessary to perform.
\end{strategy}

Based on the two proposed pruning strategies, the designed HIMU algorithm can prune the itemsets with lower utility early without constructing the utility-list structures of their extensions. Thus, it can effectively reduce both the computations of join operations and the search space from the MIU-tree.

\subsection{Proposed HIMU$_{EUCP}$, HIMU$_{LAP}$, and HIMU$_{ELP}$ Algorithms}

Based on the EUCP and LAP strategies, the proposed HIMU$_{EUCP}$, HIMU$_{LAP}$, and HIMU$_{ELP}$ algorithms, collectively, and the \textbf{HUI-Search-EUCP} procedure are described in Algorithm 3 and Algorithm 4, respectively. The improved HIMU$_{EUCP}$ algorithm is similar to the first designed HIMU algorithm, except that the EUCS is constructed in the second database scan and the mining procedure utilizes the EUCS for deriving HUIs. Details are described below.

\begin{algorithm}
	\LinesNumbered
	\caption{The $ \rm  HIMU_{EUCP} $ algorithm}
	\KwIn{\textit{D}; \textit{ptable}; \textit{MMU-table} = $ \{mu(i_{1}), mu(i_{2}), \dots, mu(i_{m})\}$.}
	\KwOut{The set of complete high-utility itemsets (HUIs).}
	initialize $ i.UL \gets \emptyset, D.UL \gets \emptyset, EUCS \gets \emptyset $\;
	calculate the $LMU$ in the \textit{MMU-table}\;
	scan $D$ to calculate the  $ TWU(i) $ value of each item $ i\in I$\;
	find $ I^* \gets \left\{  i \in I | TWU(i) \geq LMU \right\}$, w.r.t. $ HTWUI^1 $\;
	sort $ I^* $ according to the designed total order $ \prec $\ (ascending order of $mu$ values)\;
	scan $D$ to construct the utility-list \textit{i.UL} of each item  $ i\in I^{*} $ and build the \textit{EUCS}\;
	\textbf{call HUI-Search}($\phi, I^*, \textit{MMU-table}, EUCS$)\;
	\Return \textit{HUIs}\
\end{algorithm}

\begin{algorithm}
	\LinesNumbered
	\caption{The HUI-Search procedure with EUCP strategy}
	\KwIn{$X$, $\textit{extensionsOfX}$, \textit{MMU-table}, $EUCS$.}
	\KwOut{The complete set of HUIs.}
	\For {each itemset $ X_{a}\in $ $ \textit{extensionsOfX} $}
	{
		obtain the $ X_{a}.IU $ and $ X_{a}.RU $ values from the built $ X_{a}.UL $\;
		\If{$ X_{a}.IU\geq MIU(X_a) $}
		{
			$ HUIs\leftarrow HUIs\cup X_{a} $\;	
		}
		\If {$ (X_{a}.IU + X_{a}.RU\geq MIU(X_a)) $}
		{
			$ \textit{extensionsOfX}_{a}\leftarrow  \emptyset $\;
			\For {each itemset $ X_{b}\in \textit{extensionsOfX} $ such that $ X_{b} $  after  $ X_{a} $}
			{
				\If{$ \exists   TWU(a,b)\in  EUCS  \wedge  TWU(a,b)\geq MIU(X_a) $}
				{
					$ X_{ab}\leftarrow X_{a} \cup X_{b} $\;
					$ X_{ab}.UL\leftarrow $ \textbf{construct}($X, X_{a}, X_{b}$) \;
					\If{$ X_{ab}.UL \not= \emptyset$}
					{
						$ \textit{extensionsOfX}_{a}\leftarrow \textit{extensionsOfX}_{a}\cup X_{ab}.UL $\;	
					} 			
				}	
			}   		
			\textbf{call HUI-Search}$\boldmath{(X_{a}, \textit{extensionsOfX}_{a}, MIU(X_a), EUCS)}$\;	
		} 
	}
	\textbf{return} \textit{HUIs}\
\end{algorithm}

As shown in the \textbf{HUI-Search-EUCP} procedure (cf. Algorithm 4), the EUCP strategy can avoid performing a huge number of join operations of utility-list structures for those unpromising $k$-itemsets ($k \geq 3$) (Lines 7-11). Based on the constructed EUCS and the EUCP strategy, the improved HIMU$_{EUCP}$ algorithm can speed up the computations for mining HUIs.

The improved HIMU$_{LAP}$ algorithm is similar to the first designed HIMU algorithm, except for the utility-list construction procedure. It utilizes the LAP strategy to avoid constructing a huge number of utility-list structures for those unpromising itemsets. Details of the construction procedure are described below.

\begin{algorithm}[h]
	\LinesNumbered
	\SetKwInOut{Input}{input}\SetKwInOut{Output}{output}
	\Input{$P$: an itemset, $Px$: the extension of $P$ with an item $x$, $Py$: the extension of $P$ with an item $y$ ($X \neq y$), \textit{MIU(Px)}: the \textit{MIU} value of $Px$}
	\Output{\textit{Pxy.UL}: the utility-list of $Pxy$}
	\BlankLine
	$Pxy.UL \leftarrow \emptyset $\;
	Set \textit{Utility} = \textit{P.IU} + \textit{P.RU}\;
	\ForEach{ element/tuple $ex \in Px.UL$}{
		\If{$\exists ey \in Py.UL$ and $ex.tid = exy.tid$}{
			\If{$P.UL \neq \emptyset$ }{
				Search element $e \in P.UL$ such that $e.tid = ex.tid$.\;
				$exy \leftarrow <ex.tid, ex.iu+ey.iu - e.iu, ey.ru>$\;
			}
			\Else{
				$exy \leftarrow <ex.tid, ex.iu+ey.iu, ey.ru>$\;
			}
			$Pxy.UL \leftarrow Pxy.UL \cup \{exy\}$\;
		}
	    \Else{
	    	\textit{Utility} = \textit{Utility} - \textit{ex.iu} - \textit{ex.ru}\;
	    	\If{$Utility < MIU(Px)$ }{
	    		\textbf{return} \textit{NULL}\;
	    	}
     	}
	
	}
	\Return{Pxy.UL}\;
	\caption{utility-list construction with LA-Prune}
	\label{AlgorithmConstruct}
\end{algorithm}

Based on the developed HIMU$_{EUCP}$ and HIMU$_{LAP}$ algorithms, the improved HIMU$_{ELP}$ algorithm is a hybrid algorithm that adopts both the EUCP strategy (which is applied in HUI-Search procedure) and LAP strategy (which is applied in the utility-list construction procedure). Therefore, details of the hybrid HIMU$_{ELP}$ algorithm are the same as shown in Algorithm 3, Algorithm 4 and Algorithm 5.

\begin{lemma}
   The HIMU algorithm and its enhanced versions are correct and complete for discovering the set of HUIs while considering multiple minimum utility thresholds.
\end{lemma}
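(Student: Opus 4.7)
The plan is to establish two claims separately for each variant: correctness (every itemset output is a genuine HUI) and completeness (no HUI is missed). Since the enhanced variants $\rm HIMU_{EUCP}$, $\rm HIMU_{LAP}$, and $\rm HIMU_{ELP}$ differ from HIMU only by adding further pruning steps, I would structure the argument as follows: first prove both claims for HIMU itself, then for each enhancement show that the additional pruning condition never discards an HUI.

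For correctness, I would argue directly from Algorithm 2. An itemset $X$ is added to $HUIs$ only when $X.IU \geq MIU(X)$ (line 4). By the definition of the utility-list structure, $X.IU$ equals $u(X)$ in $D$, so $u(X) \geq MIU(X)$, matching the definition of HUI verbatim. The only substantive subclaim is that the Construct routine used in HIMU, and its LA-Prune variant in Algorithm 5, return utility-lists with correct $iu$ and $ru$ fields for every surviving transaction; this reduces to the standard correctness proof of utility-list joins inherited from HUI-Miner, since the LA-Prune variant can only return NULL (and only in cases covered by completeness).

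For completeness, the argument proceeds in layers matching the pruning rules. First, discarding 1-itemsets with $TWU(i) < LMU$ is safe by Theorem 2, which already establishes that no superset of such an item can be an HUI. Second, pruning extensions of $X$ when $X.IU + X.RU < MIU(X)$ is safe by Theorem 4 (the CDC property), which upper-bounds $u(Y)$ for every descendant $Y$ of $X$ in the MIU-tree, combined with Lemma 2, which ensures $MIU(Y) = MIU(X)$ for such descendants. Third, by Lemma 1 the MIU-tree enumerates all $2^n-1$ subsets of $I$, so every candidate HUI is reachable by depth-first traversal whenever the above rules fail to cut it off. For $\rm HIMU_{EUCP}$, I would check that the condition $TWU(a,b) < MIU(X_a)$ in Algorithm 4 precludes any extension through $ab$ from being an HUI: this follows from the TWU upper bound on utility together with the fact that every descendant of $X_a$ inherits $MIU(X_a)$. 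For $\rm HIMU_{LAP}$, completeness rests on the LA-Prune lemma, which tightens the $X.IU + X.RU$ bound by subtracting contributions from transactions that cannot participate in $Pxy$, and only aborts construction when no transaction set could still yield utility reaching $MIU(Px)$. $\rm HIMU_{ELP}$ inherits both arguments simultaneously.

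The main obstacle I expect is aligning the ascending-$MIU$ ordering with the various pruning thresholds. The root-level filter uses the global $LMU$, whereas recursive pruning uses the local $MIU(X_a)$; one must verify these are consistent on the MIU-tree, which holds because the $MIU$ of any node is inherited from the smallest-$MIU$ item in its prefix, and under the chosen total order $\prec$ this item always appears first. A secondary subtlety is showing that the initial $LMU$-based filter does not miss the pathological cases emphasized earlier in the paper, such as $(bcde)$ being an HUI even though some of its individual items would fail an item-specific TWU threshold; using $LMU$ rather than item-specific thresholds at the root exactly guards against this, and is in fact the crux that forces the global downward closure property of Theorem 3 to hold. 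Once these alignments are verified, all three enhanced variants inherit correctness and completeness from HIMU, and the lemma follows.
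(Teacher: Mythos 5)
Your proposal is correct and follows essentially the same route as the paper's own proof: correctness from the fact that the utility-list gives the exact utility $X.IU = u(X)$ before the $MIU(X)$ test, and completeness from the safety of each pruning rule (the $LMU$ filter via Theorem~2, the $X.IU + X.RU$ bound via Theorem~4 together with the inherited $MIU$ of Lemma~2, and the EUCP/LAP conditions for the enhanced variants), all over the exhaustive enumeration of Lemma~1. In fact your write-up is considerably more explicit than the paper's two-sentence appeal to ``Lemmas 1--5,'' and your remark that the root-level filter must use the global $LMU$ rather than item-specific thresholds correctly identifies the point on which global completeness hinges.
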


\begin{proof}
   The HIMU algorithm first examines the transactions in the transactional data to identify the promising items, and, at the same time, calculates the \textit{tid}, \textit{iu}, and \textit{ru} values of each item. Next, the utility-lists of promising items are constructed. When traversing the set-enumeration MIU-tree to mine HUIs, the utility of each node is accurately evaluated using its utility-list. By Lemmas 1-5, the HIMU algorithm ensures that only unpromising itemsets are ignored, which guarantees the completeness of the discovered HUIs, and that the utility of each itemset can be accurately calculated using the utility-list structure to ensure the correctness of the obtained HUIs. Therefore, the proposed HIMU algorithm holds the correctness and completeness based on the utility-list and MIU-tree structure.
\end{proof}

\section{Experimental Results} \label{sec:6}

In this section, we evaluated the performance of the proposed one-phase HIMU algorithm and its enhanced algorithms for mining high-utility itemsets with multiple minimum utility thresholds in four datasets, including different real-world datasets and one synthetic dataset. We implemented the well-know traditional HUIM algorithm FHM \cite{fournier2014fhm} (for mining HUIs with an uniform minimum utility threshold) and the state-of-the-art HUI-MMU and HUI-MMU$_{TID}$ algorithms \cite{lin2016efficient} (for mining HUIs with multiple minimum utility thresholds) as the benchmarks against which the designed efficient algorithms will be measured. It is important to notice that all the traditional HUIM algorithms (e.g., HUI-Miner, FHM, d2HUP, etc.) which discover HUIs with an uniform minimum utility threshold are not suitable to be compared to evaluate the efficiency (w.r.t. execution time, memory consumption) since the addressed mining tasks are different. Thus, we use the existing HUI-MMU and HUI-MMU$_{TID}$ algorithms to evaluate the mining efficiency of the proposed several algorithms. Besides, we use the FHM algorithm to discover the high-utility itemsets using  an uniform minimum utility threshold, and then to evaluate the effectiveness of the proposed models with results of pattern analysis.

\subsection{Experimental Setup and Dataset Description}

All algorithms in the experiments are implemented in the Java language and executed on a PC with an Intel Core i5-3470 processor at 3.2 GHz CPU and 4 GB of RAM, running on the 64-bit Windows 7 platform. Note that the performance of the different algorithms may depend on the varied characteristics of the datasets. Therefore, we used not only sparse and dense datasets, but also real-world and synthetic datasets, to evaluate the effectiveness of the proposed algorithms. Experiments are conducted on four datasets, including three real-world datasets (foodmart \cite{foodmart}, retail \cite{fimdatasets}, and mushroom \cite{fimdatasets}) and one simulated dataset, T10I4D100K, which was generated by IBM Quest Synthetic Data Generation \cite{IBMdata} in terms of running time, number of HUIs, and scalability. The foodmart dataset contains real utility values, while a simulation model \cite{liu2005two,tseng2010up,lin2016efficient} was developed to generate the quantities and profit values of items in transactions for the retail, mushroom, and T10I4D100K datasets, by choosing random values in the [1,5] and [1,1000] intervals, respectively. The characteristics of these datasets are shown in Table \ref{characteristics}.

\begin{table}[ht!]
	\caption {Dataset characteristics}
	\label{characteristics}
	\begin{center}
		\begin{tabular}{|c|c|c|c|c|c|}
			\hline
			\textbf{Dataset} &  \textbf{\# of transactions}  &  \textbf{\# of distinct item}s  &  \textbf{Avg. length}   & \textbf{Max. length}  \\
			\hline \hline
            \textit{foodmart}  &  21,556 &   1,559   & 4   & 11 \\
			\textit{retail}  &  88,162 &   16,470   & 10.3   & 76 \\
			\textit{mushroom}  &  8,124 &   119   & 23   & 23 \\
			\textit{T10I4D100K} &	100,000 &	870 &	10.1  &  29  \\
			\hline
		\end{tabular}
	\end{center} 
	\label{table-datasets}
\end{table}

The foodmart dataset is from an anonymous chain store provided by Microsoft SQL Server. The retail dataset includes approximately 5 months of receipts from a supermarket. The mushroom dataset has 8,124 transactions with 119 distinct items, and the lengths of average transaction and maximal transaction respectively are 23 and 23, respectively. The T10I4D100K dataset was generated by the well-known IBM Quest Synthetic Data Generation \cite{IBMdata}. The source code of the FHM algorithm and the tested datasets can be downloaded from the SPMF data mining library (http://www.philippe-fournier-viger.com/spmf/) \cite{fournier2016spmf}.

Furthermore, based on the method discussed in \cite{liu1999mining} to assign multiple thresholds to items in FPM, we defined a method to automatically set the \textit{mu} value of each item in our proposed framework as:
\begin{equation}
    mu(i_{j}) = max[\beta \times pr(i_{j}), GLMU],
\end{equation}

where $\beta$ is a constant used to set the mu values of items as a function of their profit values \cite{lin2016efficient}. To ensure randomness and equipment diversity, $\beta$ was set in the [1,100] interval for the foodmart dataset, and set in the [1000,10000] interval for the other three datasets. The parameter \textit{GLMU} is the user-specified global least minimum utility value, and $ pr(i_{j})$ is the external utility of an item $ pr(i_{j})$. Note that if $\beta$ is set to zero, then a single minimum utility value \textit{GLMU} will be used for all items, and this will be equivalent to traditional HUIM.

\subsection{Runtime Analysis}

We first compare the runtimes of the all algorithms to evaluate their efficiency. For the conducted experiments under various \textit{GLMU}, the parameter $\beta$ was randomly set to be a fixed number of each item. The algorithm is terminated if the execution time exceeds 10,000 s. The runtime of the compared algorithms under various \textit{GLMU} with a fixed $\beta$ on the four datasets is shown in Fig. \ref{fig_Runtime1}.

\begin{figure*}[hbtp]	
	\centering
	\includegraphics[trim=10 0 5 0,clip,scale=0.38]{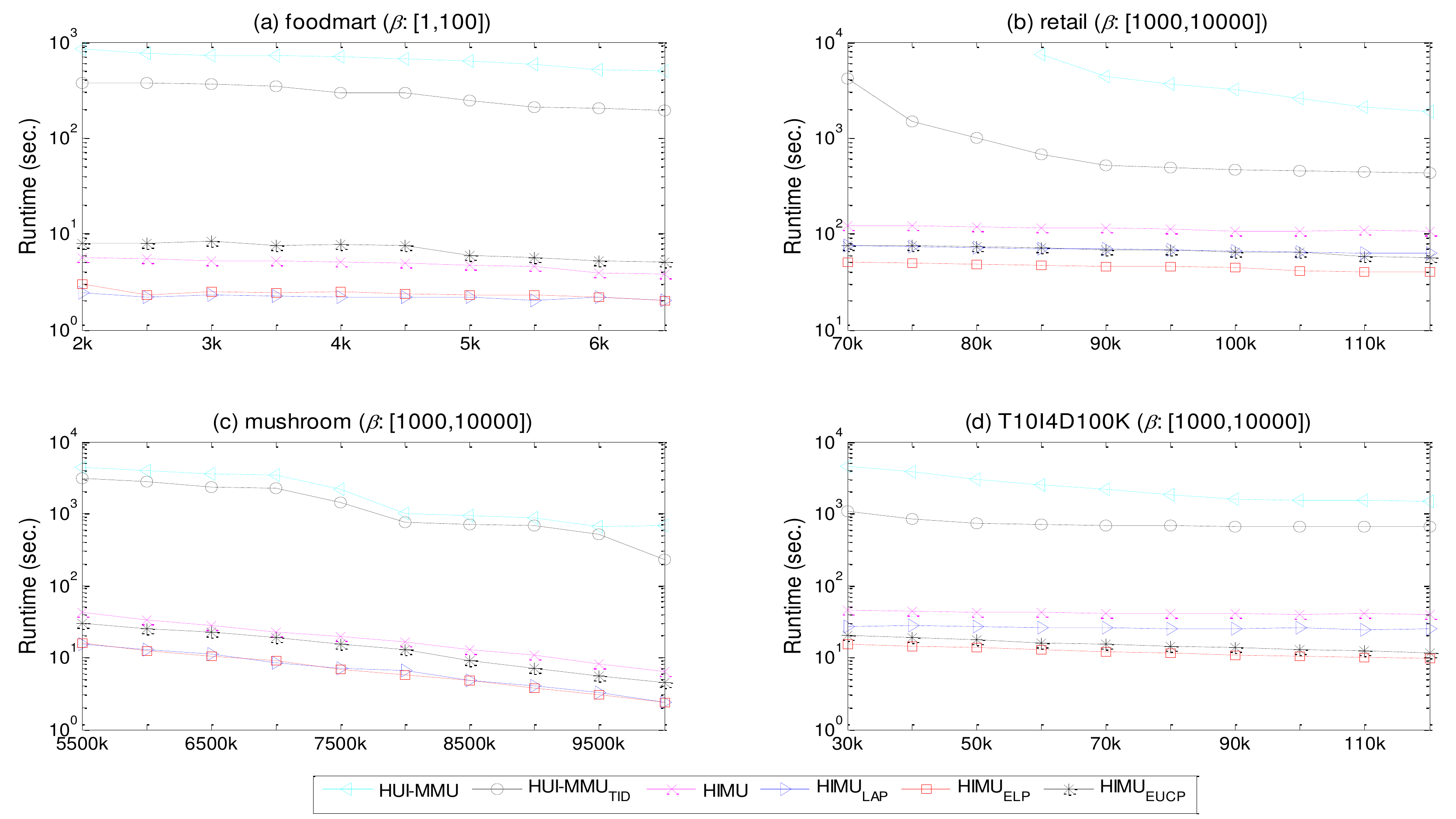}
	\captionsetup{justification=centering}
	\caption{Runtimes under a fixed $\beta$ and various \textit{GLMUs}.}
	\label{fig_Runtime1}
\end{figure*}

From Fig. \ref{fig_Runtime1}, it can be seen that the HIMU and three improved algorithms, including HIMU$_{EUCP}$, HIMU$_{LAP}$, and HIMU$_{ELP}$, performed well compared with the HUI-MMU and HUI-MMU$_{TID}$ algorithms. Moreover, all the proposed MIU-tree-based algorithms are generally up to almost 1 or 2 orders of magnitude faster than the level-wise HUI-MMU and HUI-MMU$_{TID}$ algorithms, and HIMU$_{ELP}$ is always faster than the other HIMU algorithms by adopting all early pruning strategies. This indicates that the generate-and-test approach has worse performance than the proposed set-enumeration tree-based approaches. For example, in Fig. \ref{fig_Runtime1}(c), the runtimes of the HUI-MMU and HUI-MMU$_{TID}$ algorithms dramatically decrease from 822 to 562 s, while those of the four proposed algorithms steadily decrease from 80 to 66 s. Because the downward closure property of Apriori is not held in HUIM algorithms, the previous HUI-MMU and HUI-MMU$_{TID}$ algorithms are necessary to overestimate the utility of candidates as the TWU upper-bound for maintaining the TWDC property. This process requires more computations, especially when numerous HTWUIs are generated, but fewer HUIs are discovered. In contrast, the proposed MIU-tree-based HIMU algorithm first finds the 1-itemsets that are no less than the \textit{GLMU}, and then, second, utilizes the developed GDC and CDC properties to efficiently prune the unpromising itemsets and to discover the HUIs without generation-and-test candidates in the level-wise manner. Moreover, the enhanced HIMU$_{EUCP}$, HIMU$_{LAP}$, and HIMU$_{ELP}$ algorithms adopt the EUCP and LAP strategies with early termination operations; thus, they can avoid the costly join operation for constructing the utility-list structures of those unpromising itemsets. The results for runtimes under various $\beta$ with a fixed \textit{GLMU} for different datasets are shown in Fig. \ref{fig_Runtime2}.

\begin{figure*}[hbtp]	
	\centering
	\includegraphics[trim=10 0 5 0,clip,scale=0.38]{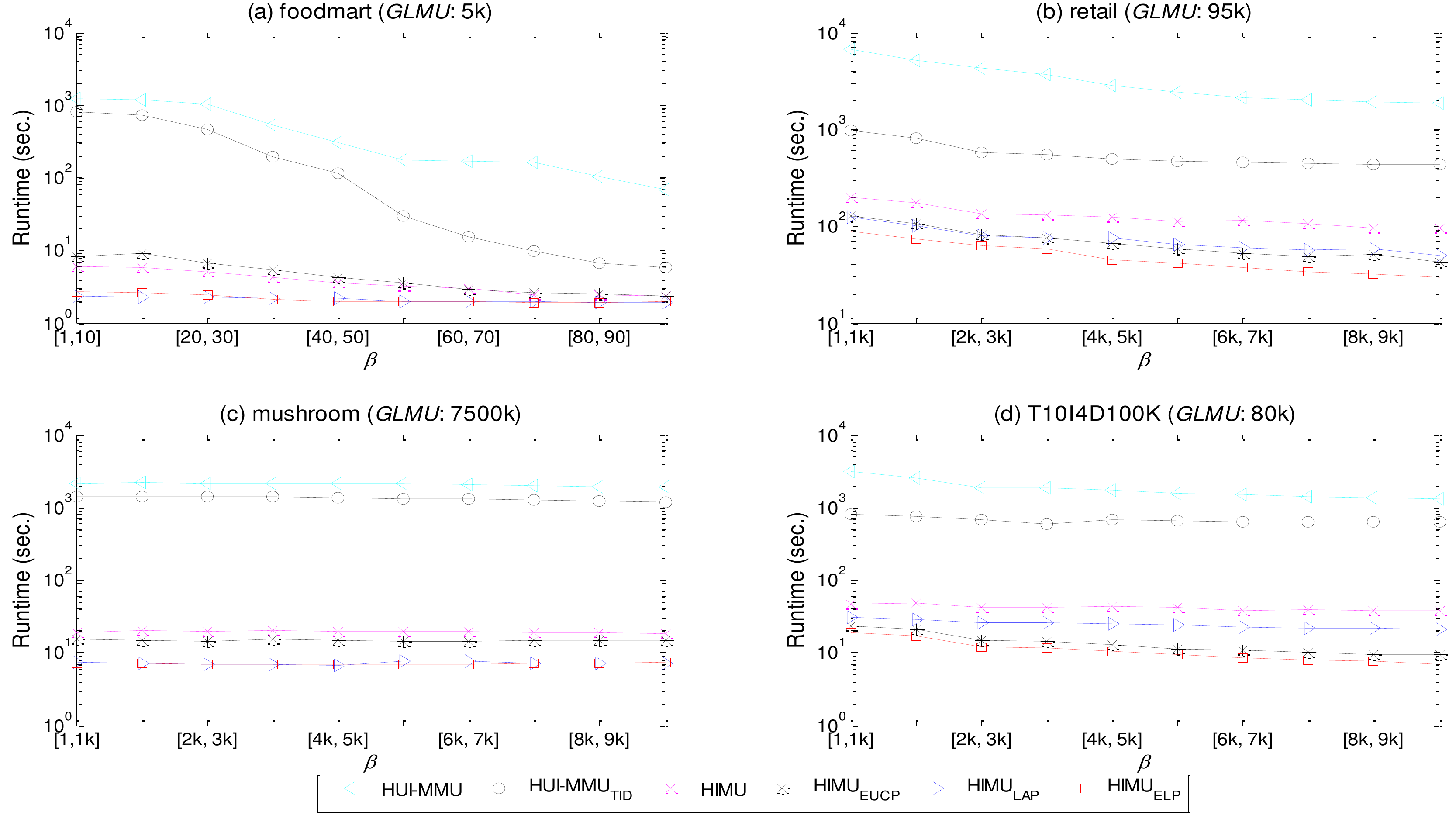}
	\captionsetup{justification=centering}
	\caption{Runtimes under a fixed \textit{GLMU} and various $\beta$.}
	\label{fig_Runtime2}
\end{figure*}

From Fig. \ref{fig_Runtime2}, it can be observed that the improved HIMU$_{EUCP}$, HIMU$_{LAP}$, and HIMU$_{ELP}$ algorithms also outperform the HUI-MMU, HUI-MMU$_{TID}$, and baseline HUI-MMU algorithms for the four datasets, and that HIMU$_{ELP}$ has the best performance among them. The reasons are the same as described for Fig. \ref{fig_Runtime1}. In particular, all the algorithms take less runtime to find the HUIs when $\beta$ is increased. The reason is that when $\beta$ is set as a larger value, the actual minimum utility (\textit{minutil}) threshold of each item is also set as a larger value based on the presented equation. Therefore, fewer HUIs and HTWUIs satisfy the defined condition in the level-wise HUI-MMU and HUI-MMU$_{TID}$ algorithms, and less execution time is required for them. Moreover, fewer patterns are generated by the MIU-tree-based algorithms with a larger \textit{minutil}, and the search space of the proposed algorithms can become more compressed than before. It is thus reasonable that both the previous level-wise approaches and the proposed algorithms consume less runtime when $\beta$ is increased.

In addition, the HIMU$_{EUCP}$ algorithm has the slightly longer runtime than the baseline HIMU algorithm, which can be observed in Fig. \ref{fig_Runtime2}(a), and HIMU$_{ELP}$ performs slightly worse than HIMU$_{LAP}$. The reason is that for the very sparse dataset, such as the foodmart dataset, many unpromising candidates can be directly pruned by the TWDC, GDC, and SDC properties. Thus, it is  not necessary to perform the EUCP strategy to construct the EUCS for pruning the unpromising itemsets, and the LAP strategy always has a positive impact on filtering unpromising itemsets, as shown in Fig. \ref{fig_Runtime1} and Fig. \ref{fig_Runtime2} under various parameters. Therefore, the HIMU$_{ELP}$ has the best performance among all compared algorithms, and it also indicates that the EUCP strategy has no efficient results on very sparse datasets.

\subsection{Pattern Analysis}

To evaluate the effectiveness and acceptability of the designed algorithms, this section evaluates the number of HUIs that are generated under a uniform minimum utility threshold and multiple minimum utility thresholds. Note that HUIs* are generated by the traditional HUIM algorithm (such as the FHM algorithm), and HUIs are generated by the HUIM-MMU framework (such as the designed algorithms), and HTWUIs are generated by the HUI-MMU algorithm. The number of generated patterns under various \textit{GLMUs} with a fixed $\beta$ is shown in Table \ref{table:patterns1}. We also evaluated how $\beta$ affects the number of derived HTWUIs and HUIs, resulting in Table \ref{table:patterns2}.

\begin{table}[htb]
	\fontsize{6.5pt}{10pt}\selectfont
	\centering
	\caption{Derived patterns under varied \textit{GLMU}}
	\label{table:patterns1}
	\begin{tabular}{c|c|llllllllll}
		\hline\hline
		\multirow{2}*{\textbf{Dataset}}&
		\multirow{2}*{\textbf{Pattern}}
		&\multicolumn{10}{c}{\textbf{Threshold}}\\
		\cline{3-12}
		& & \textit{GLMU}$_1$ &  \textit{GLMU}$_2$  &  \textit{GLMU}$_3$   &  \textit{GLMU}$_4$  &  \textit{GLMU}$_5$   &  \textit{GLMU}$_6$  &  \textit{GLMU}$_7$  &  \textit{GLMU}$_8$  &  \textit{GLMU}$_9$  &  \textit{GLMU}$_{10}$  \\ \hline
		
\textbf{foodmart} & \textit{GLMU} & 2000 & 2500 & 3000 & 3500 & 4000 & 4500 & 5000 & 5500 & 6000 & 6500 \\
($\beta$: [1, 100])	& HUIs* & 499,063 & 454,782 & 401,646 & 344,261 & 287,063 & 232,918 & 184,676 & 143,260 & 108,932 & 81,466 \\
		& HTWUIs & 335,201 & 334,413 & 333,015 & 330,147 & 325,732 & 319,032 & 308,860 & 296,138 & 277,644 & 255,590  \\
		& HUIs & 203,871 & 195,607 & 183,039 & 167,309 & 148,266 & 127,682 & 106,799 & 86,915 & 68,948 & 53,756  \\	    
		\hline
		
\textbf{retail} & \textit{GLMU} & 70000 & 75000 & 80000 & 85000 & 90000 & 95000 & 100000 & 105000 & 110000 & 115000 \\
($\beta$: [1000, 10000])	& HUIs* & 10,491 & 9,468 & 8,569 & 7,813 & 7,130 & 6,546 & 6,019 & 5,550 & 5,138 & 4,763 \\
& HTWUIs & 1,348,632 & 893,043 & 696,275 & 475,596 & 298,255 & 240,907 & 210,460 & 170,339 & 135,316 & 120,941  \\
& HUIs & 2,196 & 2,037 & 1,883 & 1,781 & 1,676 & 1,581 & 1,503 & 1,437 & 1,372 & 1,324  \\	    
\hline
		
\textbf{mushroom} & \textit{GLMU} &  5500000 & 6000000 & 6500000 & 7000000 & 7500000 & 8000000 & 8500000 & 9000000 & 9500000 & 10000000  \\
($\beta$: [1000, 10000]) & HUIs* & 20,524 & 14,517 & 9,459 & 5,615 & 2,975 & 1,406 & 579 & 195 & 44 & 3 \\
& HTWUIs & 652,193 & 621,539 & 606,135 & 569,003 & 384,117 & 139,483 & 121,753 & 111,789 & 75,391 & 70,607 \\
& HUIs & 20,524 & 14,517 & 9,459 & 5,615 & 2,975 & 1,406 & 579 & 195 & 44 & 3   \\
\hline
		
\textbf{T10I4D100K} & \textit{GLMU} &  30000 & 40000 & 50000 & 60000 & 70000 & 80000 & 90000 & 100000 & 110000 & 120000   \\
($\beta$: [1000, 10000])	& HUIs* & 122,531 & 77,739 & 59,193 & 49,645 & 43,466 & 38,930 & 35,198 & 32,066 & 29,395 & 27,065 \\
& HTWUIs &  324,146 & 229,588 & 170,927 & 133,064 & 110,483 & 95,208 & 81,420 & 76,135 & 72,166 & 68,795  \\
& HUIs &   37,459 & 31,989 & 28,721 & 26,601 & 25018 & 23,684 & 22,434 & 21,306 & 20,212 & 19,151   \\	    
\hline
		
		\hline
	\end{tabular}
\end{table}

\begin{table}[htb]
	\fontsize{6.5pt}{10pt}\selectfont
	\centering
	\caption{Derived patterns under varied $ \beta $}
	\label{table:patterns2}
	\begin{tabular}{c|c|llllllllll}
		\hline\hline
		\multirow{2}*{\textbf{Dataset}}&
		\multirow{2}*{\textbf{Pattern}}
		&\multicolumn{10}{c}{\textbf{Threshold}}\\
		\cline{3-12}
		& &$ \beta_1 $ & $ \beta_2 $ & $ \beta_3 $ & $ \beta_4 $ &  $ \beta_5 $ &  $ \beta_6 $ &$ \beta_7 $ & $ \beta_8 $ & $ \beta_9 $ & $ \beta_{10} $ \\ \hline
		
		\textbf{foodmart} & $\beta$ &  10 & 20 & 30 & 40 & 50 & 60 & 70 & 80 & 90 & 100 \\
		(\textit{GLMU}: 5k)	& HUIs* &  184,676 & 184,676 & 184,676 & 184,676 & 184,676 & 184,676 & 184,676 & 184,676 & 184,676 & 184676  \\
		& HTWUIs & 480,157 & 470,761 & 383,342 & 266,472 & 188,973 & 126,822 & 81,449 & 48,899 & 31,054 & 20,050   \\
		& HUIs &  184,633 & 170,800 & 107,733 & 66,202 & 42,688 & 23,993 & 12,089 & 6,492 & 3,985 & 2,701  \\	    
		\hline
		
		\textbf{retail} & $\beta$ & 1000 & 2000 & 000 & 4000 & 5000 & 6000 & 7000 & 8000 & 9000 & 10000 \\
		(\textit{GLMU}: 95k)	& HUIs* &  6,546 & 6,546 & 6,546 & 6,546 & 6,546 & 6,546 & 6,546 & 6,546 & 6,546 & 6,546  \\
		& HTWUIs &  518,172 & 453,955 & 281,644 & 242,897 & 183,433 & 151,859 & 131,003 & 122,263 & 115,365 & 110,397   \\
		& HUIs &  6,156 & 4,614 & 2,967 & 2,345 & 1,427 & 1,055 & 855 & 722 & 553 & 483  \\	    
		\hline
		
		\textbf{mushroom} & $\beta$ &  1000 & 2000 & 3000 & 4000 & 5000 & 6000 & 7000 & 8000 & 9000 & 10000  \\
		(\textit{GLMU}: 7500k) & HUIs* &  2,975 & 2,975 & 2,975 & 2,975 & 2,975 & 2,975 & 2,975 & 2,975 & 2,975 & 2,975  \\
		& HTWUIs &  384,117 & 384,117 & 384,117 & 384,117 & 384,117 & 384,117 & 384,117 & 384,117 & 384,117 & 384,117  \\
		& HUIs &  2,975 & 2,975 & 2,975 & 2,975 & 2,975 & 2,975 & 2,975 & 2,975 & 2,975 & 2,975   \\
		\hline
		
		\textbf{T10I4D100K} & $\beta$ &  1000 & 2000 & 3000 & 4000 & 	5000 & 	6000 & 	7000 & 	8000 & 	9000 & 	10000   \\
		(\textit{GLMU}: 80k)	& HUIs* &  38,930 & 38,930 & 	38,930 & 	38,930 & 	38,930 & 	38,930 & 	38,930 & 	38,930 & 	38,930 & 	38,930  \\
		& HTWUIs &  181,338 & 	146,405 & 	96,455 & 	96,132 & 83,927 & 	73,884 & 	67,244 & 	62,017 & 	58,398 & 	55,298  \\
		& HUIs &   38,406 & 	32,998 & 	23,399 & 	23,225 & 19,921 & 	17,053 & 	15,214 & 	13,602 & 	12,330 & 	11,213   \\	    
		\hline
		
		\hline
	\end{tabular}
\end{table}

From Table \ref{table:patterns1}, it can be observed that the number of HUIs that are derived by the proposed algorithms under various \textit{GLMUs} is always smaller than that of HUIs* whether in sparse or dense datasets. For example, in the foodmart dataset, the number of HUIs* and HUIs are 287,063 and 148,266, respectively, when the \textit{GLMU} is set as 4000. This indicates that numerous HUIs* are discovered using a uniform minimum utility threshold, but few of them are selected as HUIs by considering the distinct minimum utility threshold (\textit{minutil}) of each item in the datasets. In real-world applications, traditional algorithms in HUIM may easily suffer from the ``\textit{rare item problem}"; that is, if the \textit{minutil} is set too high or too low, patterns involving items with high or low utility cannot be found. When the \textit{minutil} is set too low, many meaningless patterns may be found, and this may cause the problem of combinatorial explosion. This situation regularly happened when \textit{GLMU} was set lower. For example, on the foodmart dataset in Table 8, there are 499,063 HUIs*, but only 203,871 are considered as the HUIs when the \textit{GLMU} is set as 2,000; most of the discovered HUIs* may not be interesting in making the useful decisions because most items are treated the same. This shows that the addressed HUIM-MMU can effectively discover fewer but more useful HUIs than the traditional HUIM framework with a uniform minimum utility threshold.

As shown in Table \ref{table:patterns2}, it can be found that the number of HUIs derived by the proposed approaches decreases and is close to the number of HUIs* when $\beta$ is set lower. For example, when the \textit{GLMU} is set from 1,000 to 10,000 on the T10I4D100K dataset, the HUIs are changed from 38,406 to 11,213, but the number of HUIs* remains stable as 38,930. This is reasonable, because when $\beta$ is set lower, the mu value of each item is close to the \textit{GLMU}. When $\beta$ is set higher, the \textit{mu} value of each item is larger than the \textit{GLMU}, and then the number of discovered HUIs will be lesser than that of HUIs*. It can thus be concluded that the ``uniform minimum utility threshold" is a special case of the ``multiple minimum utility thresholds" framework when all thresholds are set to the \textit{GLMU}. Furthermore, an interesting observation is that the number of HUIs is significantly influenced by the minimum utility thresholds. The reason is that the enormous redundant patterns can be reduced, and more meaningful and condensed patterns can be revealed using multiple minimum utility thresholds. It also indicates that some redundant HUIs may not be available in real-world applications, and the proposed HIMU algorithms can effectively avoid the ``\textit{rare item problem}". Hence, the proposed algorithms can dramatically reduce the number of redundant patterns, thus making the task of high-utility itemset mining more realistic in real-life situations.

\subsection{Effect of Pruning Strategies}

We also evaluated the effectiveness of the pruning ability of the EUCP and LAP strategies. Note that the unpruned itemsets w.r.t. the visited nodes by performing the HIMU, HIMU$_{EUCP}$, HIMU$_{LAP}$, and HIMU$_{ELP}$ algorithms are respectively named $N_2$, $N_3$, $N_4$, and $N_5$, respectively. In addition, we also compared the number of generated patterns for deriving HTWUIs in the HUI-MMU and HUI-MMU$_{TID}$ algorithms, denoted $N_1$. The results are shown in Fig. \ref{fig_patterns1} and Fig. \ref{fig_patterns2}, respectively.

\begin{figure*}[hbtp]	
	\centering
	\includegraphics[trim=10 0 5 0,clip,scale=0.38]{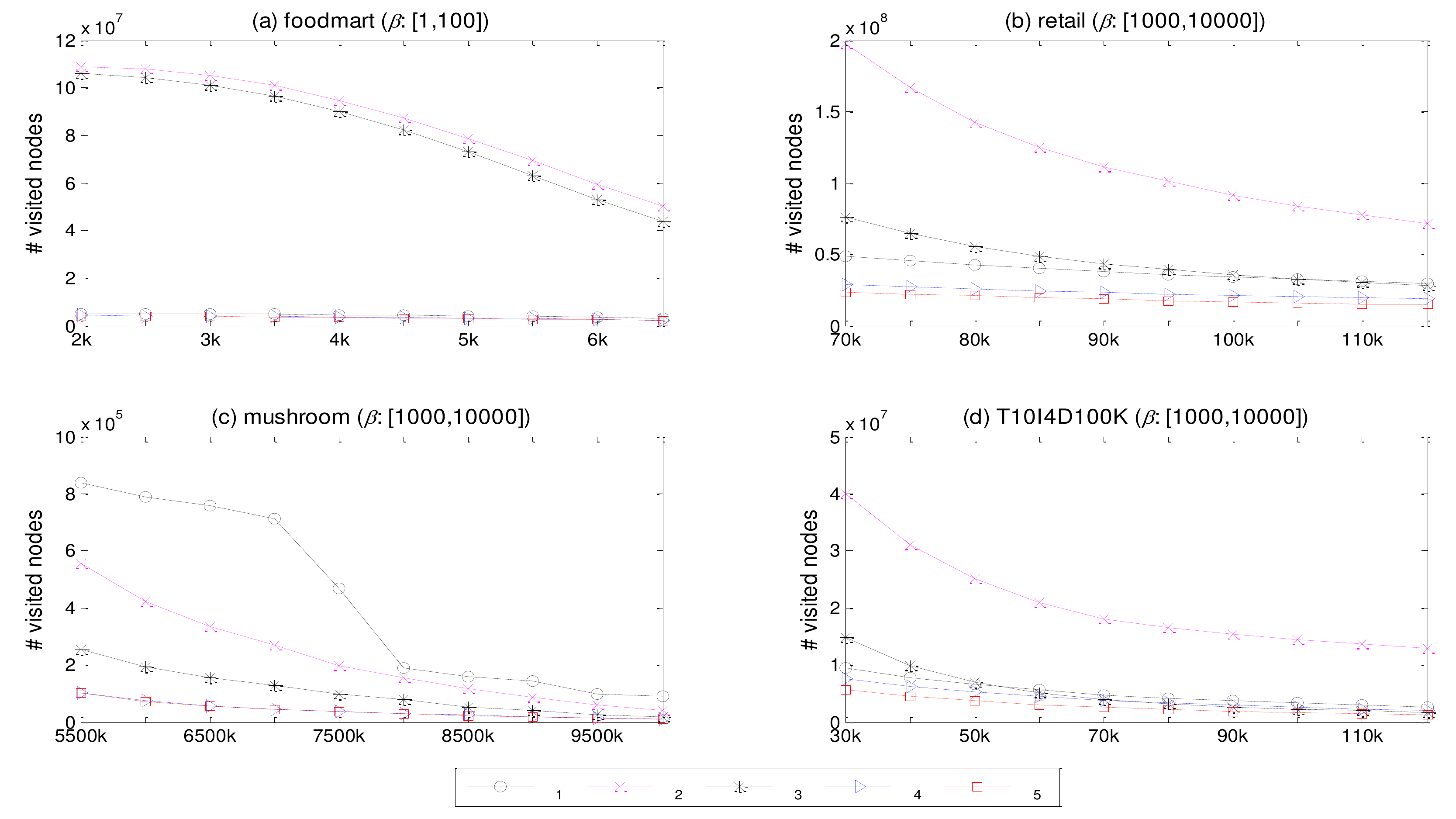}
	\captionsetup{justification=centering}
	\caption{Number of patterns with a fixed $\beta$ under various \textit{GLMUs}.}
	\label{fig_patterns1}
\end{figure*}

\begin{figure*}[hbtp]	
	\centering
	\includegraphics[trim=10 0 5 0,clip,scale=0.38]{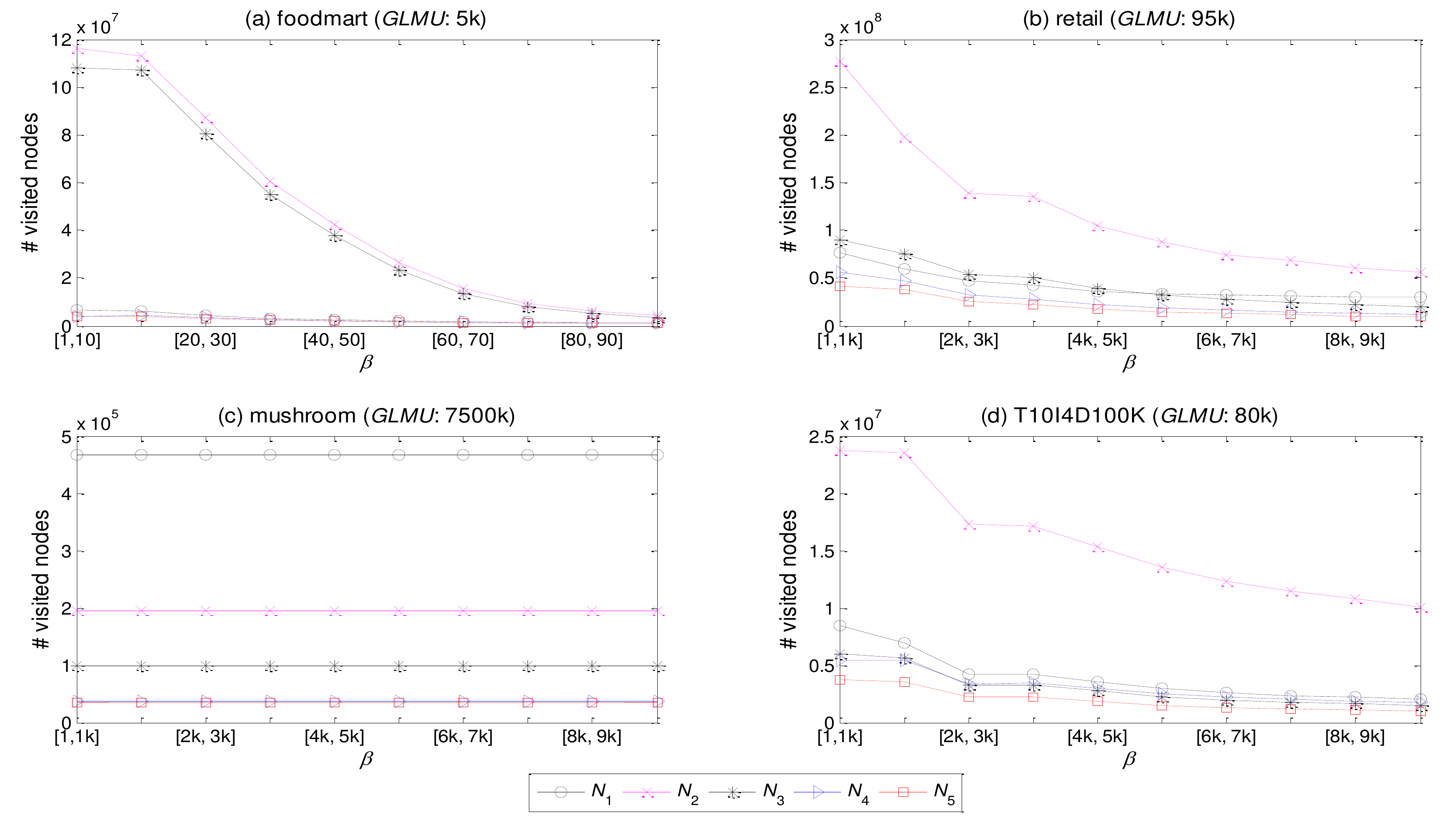}
	\captionsetup{justification=centering}
	\caption{Number of patterns with a fixed \textit{GLMU} under various $\beta$.}
	\label{fig_patterns2}
\end{figure*}

From Fig. \ref{fig_patterns1} and Fig. \ref{fig_patterns2}, it can be observed that the produced itemsets for determining HTWUIs in HUI-MMU and HUI-MMU$_{TID}$, w.r.t. $N_1$, is sometimes less than the visited nodes $N_1$ ($N_2 > N_1$, as shown in the foodmart, retail, and T10I4D100K datasets), but is sometimes greater than $N_1$ [$N_2 \leq N_1$, as shown in Fig. \ref{fig_patterns1}(c) and Fig. \ref{fig_patterns2}(c)]. Moreover, the visited nodes in the proposed HIMU with different pruning strategies always have the following relationship as $ N_2 \geq N_3 \geq N_4 \geq N_5 $ for all datasets. For example, when the \textit{GLMU} and $\beta$ are both set as 90,000 for the retail dataset, the necessary visited nodes of HIMU, HIMU$_{EUCP}$, HIMU$_{LAP}$, and HIMU$_{ELP}$ are, respectively, $N_2$ (= 111,463,003), $N_3$ (= 43,366,282), $N_4$ (= 23,026,436), and $N_5$ (= 18,416,372), as shown in Fig. \ref{fig_patterns1}(b). The reason is that both the EUCP and LAP strategies prune a considerable amount of itemsets early, and thus the generation of their extensions in the proposed enhanced algorithms can be avoided. Among the five types of patterns, the number of $N_5$ is the least, which means that the hybrid method has the best pruning effect. Moreover, it can be seen that $ N_2 \geq N_3 \geq N_1 $ in some cases, as shown in Fig. \ref{fig_patterns1}(a), Fig. \ref{fig_patterns1}(b), Fig. \ref{fig_patterns1}(d), Fig. \ref{fig_patterns2}(a), and Fig. \ref{fig_patterns2}(b). It indicates that the explored space (w.r.t. the visited nodes in the set-enumeration MIU-tree) of HIMU may be very huge without any pruning strategies. It can also be observed that when the \textit{GLMU} or $\beta$ decreases, the gap between the number of these patterns is increased for all datasets whether under various \textit{GLMUs} with a fixed $\beta$ or under various $\beta$ with a fixed \textit{GLMU}.

\subsection{Memory Consumption}

We took the memory measurements using the Java application programming interface (API). Note that we recorded the peak memory consumption of the compared algorithms for all datasets. The results under various \textit{GLMUs} with a fixed $\beta$, and under various $\beta$ with a fixed \textit{GLMU} are shown in Fig. \ref{fig_memory1} and Fig. \ref{fig_memory2}, respectively.

\begin{figure*}[hbtp]	
	\centering
	\includegraphics[trim=10 0 5 0,clip,scale=0.38]{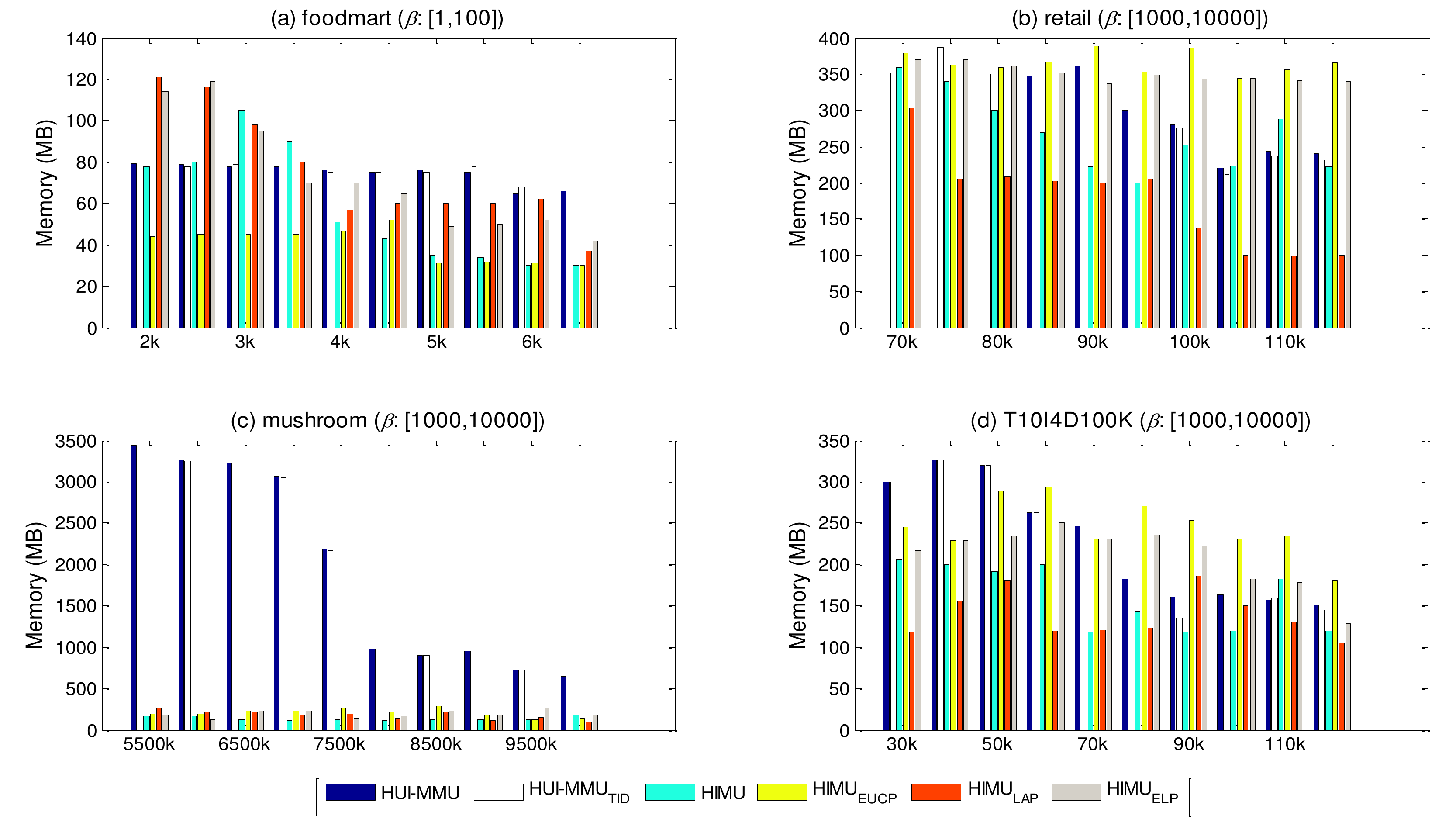}
	\captionsetup{justification=centering}
	\caption{Memory consumption with a fixed $\beta$ under various \textit{GLMUs}.}
	\label{fig_memory1}
\end{figure*}

\begin{figure*}[hbtp]	
	\centering
	\includegraphics[trim=10 0 5 0,clip,scale=0.38]{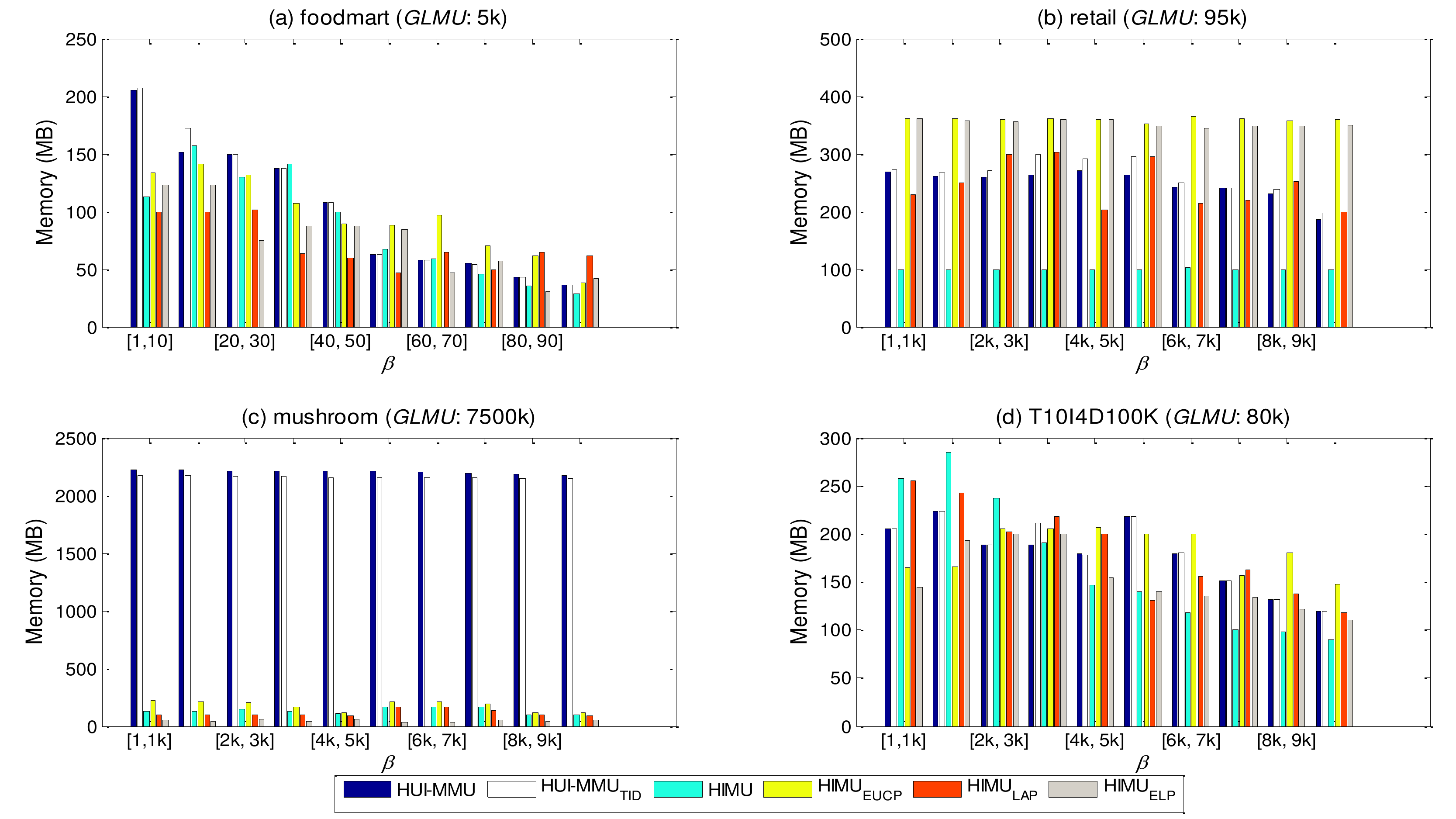}
	\captionsetup{justification=centering}
	\caption{Memory consumption with a fixed \textit{GLMU} under various $\beta$.}
	\label{fig_memory2}
\end{figure*}

From Fig. \ref{fig_memory1} and Fig. \ref{fig_memory2}, it can be clearly seen that the proposed HIMU algorithms with varied pruning strategies require less memory compared with the state-of-the-art HUI-MMU and HUI-MMU$_{TID}$ algorithms, both under various \textit{GLMUs} with a fixed $\beta$ and under various $\beta$ with a fixed \textit{GLMU} for the four datasets. Specifically, the four HIMU algorithms require nearly constant memory under various parameter values on the four datasets. This performance is somehow similar to the HUI-Miner and FHM algorithms, except for the retail dataset. This result is reasonable because the series HIMU algorithms are MIU-tree-based algorithms; they can fast-span the MIU-tree without candidate generation, and the unpromising itemsets can be easily pruned. Furthermore, the utility-list structure is designed as a vertical compact structure to store the necessary information. Less memory is thus consumed. The HIMU$_{EUCP}$ algorithm consumes slightly more memory than the baseline HIMU algorithm because it stores information about the co-occurrence of 2-itemsets in an additional data structure.

Owing to the fact that a depth-first search mechanism is used to explore the MIU-tree, the utility-list structures of the promising itemsets are built during the mining process. Thus, the ``\textit{mining during constructing property}" can greatly reduce the memory consumption, and it does not need to explore the entire search space to generate all candidates. Thanks to the advantages provided by the vertical utility-list structure, the proposed four HIMU algorithms are more efficient to use to discover the HUIs. Hence, the memory consumption of the MIU-tree-based algorithms significantly outperform those of the HUI-MMU and HUI-MMU$_{TID}$ algorithms.

\subsection{Scalability Analysis}

We compared the scalability of the proposed algorithms and other compared algorithms under varied dataset sizes, T10I4N4KD$|X|K$, in terms of runtime, memory usage, the number of generated HUIs and HTWUIs, and the number of visited nodes in MIU-tree. The variable K indicates the dataset size from 100(K) to 500(K), in increments 100(K) each time. When the \textit{GLMU} was set to 1,000,000 and $\beta$ was varied from 1,000 to 10,000, the results of the compared algorithms are shown in Fig. \ref{fig_scalability}. The details of each type of analysis follow.

\begin{figure*}[hbtp]	
	\centering
	\includegraphics[trim=10 0 5 0,clip,scale=0.38]{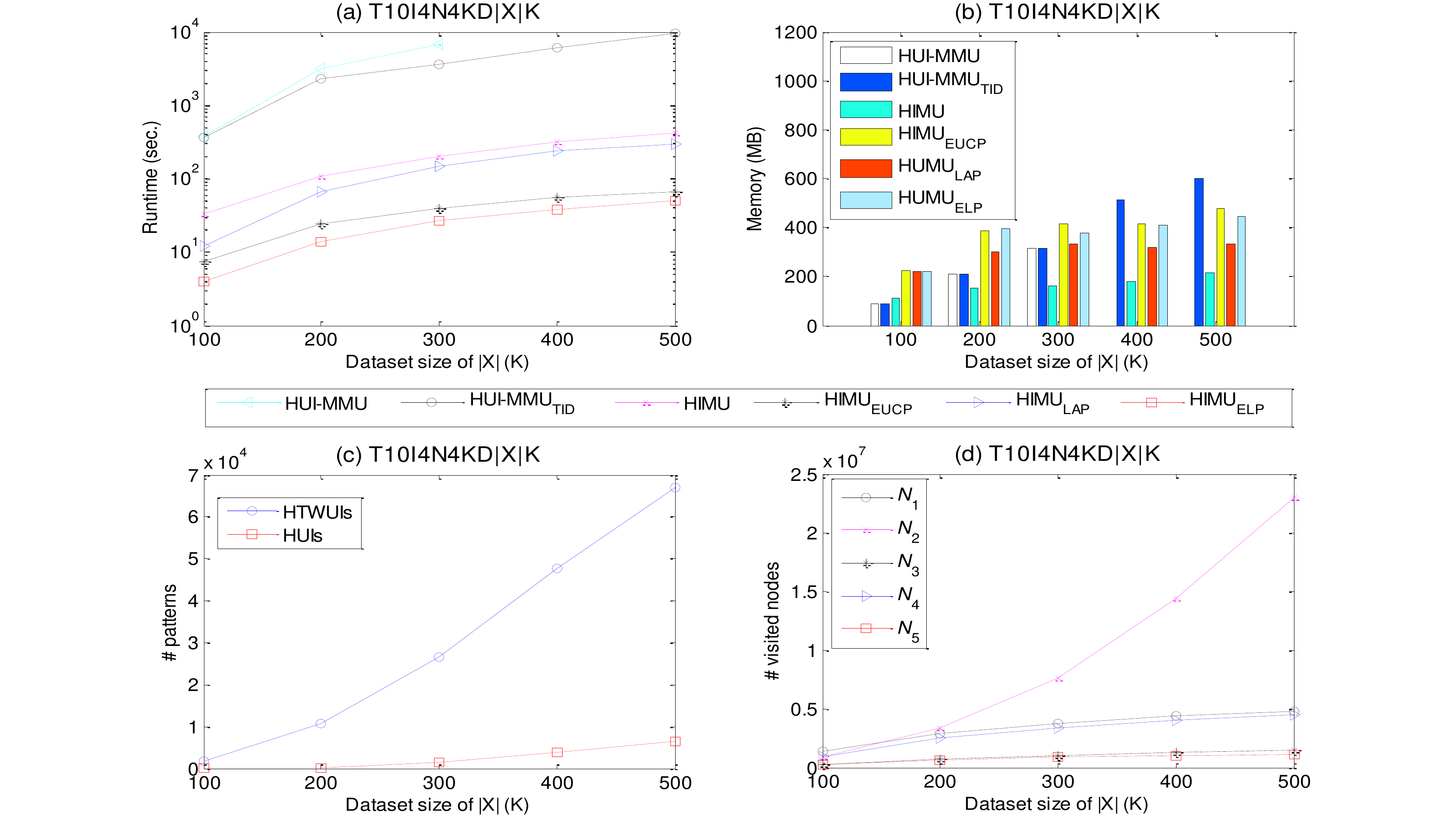}
	\captionsetup{justification=centering}
	\caption{Scalability under varied dataset sizes.}
	\label{fig_scalability}
\end{figure*}

(1) \textit{Runtime analysis}. From Fig. \ref{fig_scalability}(a), it is evident that all of the compared algorithms scale well along with increasing dataset size. When the dataset size increases, the gap between the proposed algorithms and other compared algorithms increases in terms of execution time and memory consumption. Hence, the powerful pruning strategies of the developed HIMU algorithm showed better results compared with the previous algorithms for larger dataset sizes.

(2) \textit{Memory usage analysis}. From Fig. \ref{fig_scalability}(b), it can be noted that the proposed two algorithms with different pruning strategies always consume less memory than the compared algorithms on all datasets. As mentioned before, the advantages of the proposed MIU-tree-based algorithms are that they can easily maintain the necessary information by constructing a series of utility-list structures. Without multiple database scans and a generate-and-test approach of candidates, the memory cost can be greatly reduced. In addition, with the effect of the developed pruning strategies, less memory consumption can thus be obtained. The EUCS was also further designed as an additional structure to keep the TWU value of 2-itemsets with light memory usage. Hence, the HIMU$_{EUCP}$, HIMU$_{LAP}$, and HIMU$_{ELP}$ algorithms consume little more memory than the baseline HIMU algorithm.

(3) \textit{Patterns analysis for HTWUIs and HUIs}. From Fig. \ref{fig_scalability}(c), it is also evident that the number of discovered HUIs is quite less than that of the HTWUIs derived by the HUI-MMU algorithm. Moreover, the larger the dataset size, the larger the gap between the number of HTWUIs and HUIs. Thus, the previous level-wise mining approaches perform worse than the utility-list-based HIMU model in the scalability test.

From these experiments, we find that the proposed improved HIMU$_{EUCP}$, HIMU$_{LAP}$, and HIMU$_{ELP}$ algorithms outperform the baseline HIMU algorithm, as well as the state-of-the-art HUI-MMU and HUI-MMU$_{TID}$ algorithms, in terms of runtime, memory usage, and number of HUIs, along with the increase of dataset size. Moreover, the improved algorithms are faster than the baseline algorithm, with lighter memory usage as well.

\section{Conclusion} 
\label{sec:conclusion}

Consumer behavior plays a very important role in economics and targeted marketing. However, understanding economic consumer behavior is quite challenging. Moreover, in some real-world applications, some items have hierarchies and unit contributed utility can differ. In this paper, we designed a novel set-enumeration tree-based algorithm named HIMU to discover high-utility itemsets with multiple minimum utility thresholds. This is the first work, to our knowledge, to propose efficient one-phase algorithms to address the HUI-MMU problem. A compact multiple item utility set-enumeration tree (MIU-tree) is designed for directly mining HUIs by spanning the MIU-tree without a generation-and-test approach. In addition, we propose the \textit{global} and \textit{conditional downward closure} (\textit{GDC} and \textit{CDC}) properties to guarantee global and partial anti-monotonicity for HUIs. The necessary information of itemsets from the processed databases can be easily obtained from a series of compact utility-list structures of their prefix itemsets in the designed MIU-tree. The HIMU algorithm thus can directly discover the HUIs without candidate generation or multiple database scans. Moreover, two enhanced EUCP and LAP pruning strategies are applied in the improved HIMU$_{EUCP}$, HIMU$_{LAP}$, and HIMU$_{ELP}$ algorithms, thus speeding up the mining performance.

From experiments conducted on both real-life and synthetic datasets, it can be observed that the four proposed algorithms have better performance in revealing the meaningful HUIs from the databases with multiple minimum utility thresholds and that they not only avoid the ``\textit{rare item problem}" that easily occurs in the traditional HUIM algorithms but also are more efficient than the state-of-the-art HUI-MMU and HUI-MMU$_{TID}$ algorithms in terms of runtime, memory usage, and scalability.


\bibliographystyle{ACM-Reference-Format}
\bibliography{acmlarge}


\begin{thebibliography}{63}


\ifx \showCODEN    \undefined \def \showCODEN     #1{\unskip}     \fi
\ifx \showDOI      \undefined \def \showDOI       #1{#1}\fi
\ifx \showISBNx    \undefined \def \showISBNx     #1{\unskip}     \fi
\ifx \showISBNxiii \undefined \def \showISBNxiii  #1{\unskip}     \fi
\ifx \showISSN     \undefined \def \showISSN      #1{\unskip}     \fi
\ifx \showLCCN     \undefined \def \showLCCN      #1{\unskip}     \fi
\ifx \shownote     \undefined \def \shownote      #1{#1}          \fi
\ifx \showarticletitle \undefined \def \showarticletitle #1{#1}   \fi
\ifx \showURL      \undefined \def \showURL       {\relax}        \fi
\providecommand\bibfield[2]{#2}
\providecommand\bibinfo[2]{#2}
\providecommand\natexlab[1]{#1}
\providecommand\showeprint[2][]{arXiv:#2}

\bibitem[\protect\citeauthoryear{Agrawal, Imieli{\'n}ski, and Swami}{Agrawal
  et~al\mbox{.}}{1993}]%
        {agrawal1993mining}
\bibfield{author}{\bibinfo{person}{Rakesh Agrawal}, \bibinfo{person}{Tomasz
  Imieli{\'n}ski}, {and} \bibinfo{person}{Arun Swami}.}
  \bibinfo{year}{1993}\natexlab{}.
\newblock \showarticletitle{Mining association rules between sets of items in
  large databases}. In \bibinfo{booktitle}{\emph{Acm sigmod record}},
  Vol.~\bibinfo{volume}{22}. ACM, \bibinfo{pages}{207--216}.
\newblock


\bibitem[\protect\citeauthoryear{Agrawal and Srikant}{Agrawal and
  Srikant}{1995}]%
        {agrawal1995mining}
\bibfield{author}{\bibinfo{person}{Rakesh Agrawal} {and}
  \bibinfo{person}{Ramakrishnan Srikant}.} \bibinfo{year}{1995}\natexlab{}.
\newblock \showarticletitle{Mining sequential patterns}. In
  \bibinfo{booktitle}{\emph{Proceedings of the Eleventh International
  Conference on Data Engineering}}. IEEE, \bibinfo{pages}{3--14}.
\newblock


\bibitem[\protect\citeauthoryear{Agrawal, Srikant, et~al\mbox{.}}{Agrawal
  et~al\mbox{.}}{1994}]%
        {agrawal1994fast}
\bibfield{author}{\bibinfo{person}{Rakesh Agrawal},
  \bibinfo{person}{Ramakrishnan Srikant}, {et~al\mbox{.}}}
  \bibinfo{year}{1994}\natexlab{}.
\newblock \showarticletitle{Fast algorithms for mining association rules}. In
  \bibinfo{booktitle}{\emph{Proceedings of the 20th International Conference on
  Very Large Data Bases}}, Vol.~\bibinfo{volume}{1215}.
  \bibinfo{pages}{487--499}.
\newblock


\bibitem[\protect\citeauthoryear{Agrawal and Srikant}{Agrawal and
  Srikant}{1994}]%
        {IBMdata}
\bibfield{author}{\bibinfo{person}{Rakesh Agrawal} {and}
  \bibinfo{person}{Ramakrishnan. Quest synthetic data~generator Srikant}.}
  \bibinfo{year}{1994}\natexlab{}.
\newblock
  \bibinfo{title}{\url{http://www.Almaden.ibm.com/cs/quest/syndata.html}}.
\newblock   (\bibinfo{year}{1994}).
\newblock


\bibitem[\protect\citeauthoryear{Ahmed, Tanbeer, Jeong, and Lee}{Ahmed
  et~al\mbox{.}}{2009}]%
        {ahmed2009efficient}
\bibfield{author}{\bibinfo{person}{Chowdhury~Farhan Ahmed},
  \bibinfo{person}{Syed~Khairuzzaman Tanbeer}, \bibinfo{person}{Byeong-Soo
  Jeong}, {and} \bibinfo{person}{Young-Koo Lee}.}
  \bibinfo{year}{2009}\natexlab{}.
\newblock \showarticletitle{Efficient tree structures for high utility pattern
  mining in incremental databases}.
\newblock \bibinfo{journal}{\emph{IEEE Transactions on Knowledge and Data
  Engineering}} \bibinfo{volume}{21}, \bibinfo{number}{12}
  (\bibinfo{year}{2009}), \bibinfo{pages}{1708--1721}.
\newblock


\bibitem[\protect\citeauthoryear{Alkan and Karagoz}{Alkan and Karagoz}{2015}]%
        {alkan2015crom}
\bibfield{author}{\bibinfo{person}{Oznur~Kirmemis Alkan} {and}
  \bibinfo{person}{Pinar Karagoz}.} \bibinfo{year}{2015}\natexlab{}.
\newblock \showarticletitle{{CROM} and {H}usp{E}xt: Improving efficiency of
  high utility sequential pattern extraction}.
\newblock \bibinfo{journal}{\emph{IEEE Transactions on Knowledge and Data
  Engineering}} \bibinfo{volume}{27}, \bibinfo{number}{10}
  (\bibinfo{year}{2015}), \bibinfo{pages}{2645--2657}.
\newblock


\bibitem[\protect\citeauthoryear{Chan, Yang, and Shen}{Chan
  et~al\mbox{.}}{2003}]%
        {chan2003mining}
\bibfield{author}{\bibinfo{person}{Raymond Chan}, \bibinfo{person}{Qiang Yang},
  {and} \bibinfo{person}{Yi-Dong Shen}.} \bibinfo{year}{2003}\natexlab{}.
\newblock \showarticletitle{Mining high utility itemsets}. In
  \bibinfo{booktitle}{\emph{Third IEEE International Conference on Data
  Mining}}. IEEE, \bibinfo{pages}{19--26}.
\newblock


\bibitem[\protect\citeauthoryear{Chen, Han, and Yu}{Chen et~al\mbox{.}}{1996}]%
        {chen1996data}
\bibfield{author}{\bibinfo{person}{Ming-Syan Chen}, \bibinfo{person}{Jiawei
  Han}, {and} \bibinfo{person}{Philip~S. Yu}.} \bibinfo{year}{1996}\natexlab{}.
\newblock \showarticletitle{Data mining: an overview from a database
  perspective}.
\newblock \bibinfo{journal}{\emph{IEEE Transactions on Knowledge and data
  Engineering}} \bibinfo{volume}{8}, \bibinfo{number}{6}
  (\bibinfo{year}{1996}), \bibinfo{pages}{866--883}.
\newblock


\bibitem[\protect\citeauthoryear{database foodmart of microsoft~analysis
  services}{database foodmart of microsoft~analysis services}{[n. d.]}]%
        {foodmart}
\bibfield{author}{\bibinfo{person}{Microsoft.~Example database foodmart of
  microsoft~analysis services}.} \bibinfo{year}{[n. d.]}\natexlab{}.
\newblock
  \bibinfo{title}{\url{http://msdn.microsoft.com/en-us/library/aa217032(SQL.80).aspx}}.
\newblock   (\bibinfo{year}{[n. d.]}).
\newblock


\bibitem[\protect\citeauthoryear{Fournier-Viger, Faghihi, Nkambou, and
  Nguifo}{Fournier-Viger et~al\mbox{.}}{2012}]%
        {fournier2012cmrules}
\bibfield{author}{\bibinfo{person}{Philippe Fournier-Viger},
  \bibinfo{person}{Usef Faghihi}, \bibinfo{person}{Roger Nkambou}, {and}
  \bibinfo{person}{Engelbert~Mephu Nguifo}.} \bibinfo{year}{2012}\natexlab{}.
\newblock \showarticletitle{{CMR}ules: Mining sequential rules common to
  several sequences}.
\newblock \bibinfo{journal}{\emph{Knowledge-Based Systems}}
  \bibinfo{volume}{25}, \bibinfo{number}{1} (\bibinfo{year}{2012}),
  \bibinfo{pages}{63--76}.
\newblock


\bibitem[\protect\citeauthoryear{Fournier-Viger, Lin, Gomariz, Gueniche,
  Soltani, Deng, and Lam}{Fournier-Viger et~al\mbox{.}}{2016}]%
        {fournier2016spmf}
\bibfield{author}{\bibinfo{person}{Philippe Fournier-Viger},
  \bibinfo{person}{Jerry Chun-Wei Lin}, \bibinfo{person}{Antonio Gomariz},
  \bibinfo{person}{Ted Gueniche}, \bibinfo{person}{Azadeh Soltani},
  \bibinfo{person}{Zhihong Deng}, {and} \bibinfo{person}{Hoang~Thanh Lam}.}
  \bibinfo{year}{2016}\natexlab{}.
\newblock \showarticletitle{The SPMF open-source data mining library version
  2}. In \bibinfo{booktitle}{\emph{Joint European Conference on Machine
  Learning and Knowledge Discovery in Databases}}. Springer,
  \bibinfo{pages}{36--40}.
\newblock


\bibitem[\protect\citeauthoryear{Fournier-Viger, Wu, Zida, and
  Tseng}{Fournier-Viger et~al\mbox{.}}{2014}]%
        {fournier2014fhm}
\bibfield{author}{\bibinfo{person}{Philippe Fournier-Viger},
  \bibinfo{person}{Cheng-Wei Wu}, \bibinfo{person}{Souleymane Zida}, {and}
  \bibinfo{person}{Vincent~S Tseng}.} \bibinfo{year}{2014}\natexlab{}.
\newblock \showarticletitle{FHM: Faster high-utility itemset mining using
  estimated utility co-occurrence pruning}. In
  \bibinfo{booktitle}{\emph{International Symposium on Methodologies for
  Intelligent Systems}}. Springer, \bibinfo{pages}{83--92}.
\newblock


\bibitem[\protect\citeauthoryear{Gan, Lin, Chao, Wang, and Philip}{Gan
  et~al\mbox{.}}{2018a}]%
        {gan2018privacy}
\bibfield{author}{\bibinfo{person}{Wensheng Gan}, \bibinfo{person}{Jerry
  Chun-Wei Lin}, \bibinfo{person}{Han-Chieh Chao}, \bibinfo{person}{Shyue-Liang
  Wang}, {and} \bibinfo{person}{S~Yu Philip}.}
  \bibinfo{year}{2018}\natexlab{a}.
\newblock \showarticletitle{Privacy preserving utility mining: a survey}. In
  \bibinfo{booktitle}{\emph{IEEE International Conference on Big Data}}. IEEE,
  \bibinfo{pages}{2617--2626}.
\newblock


\bibitem[\protect\citeauthoryear{Gan, Lin, Chao, and Zhan}{Gan
  et~al\mbox{.}}{2017a}]%
        {gan2017data}
\bibfield{author}{\bibinfo{person}{Wensheng Gan}, \bibinfo{person}{Jerry
  Chun-Wei Lin}, \bibinfo{person}{Han-Chieh Chao}, {and}
  \bibinfo{person}{Justin Zhan}.} \bibinfo{year}{2017}\natexlab{a}.
\newblock \showarticletitle{Data mining in distributed environment: a survey}.
\newblock \bibinfo{journal}{\emph{Wiley Interdisciplinary Reviews: Data Mining
  and Knowledge Discovery}} \bibinfo{volume}{7}, \bibinfo{number}{6}
  (\bibinfo{year}{2017}).
\newblock


\bibitem[\protect\citeauthoryear{Gan, Lin, Fournier-Viger, Chao, and
  Fujita}{Gan et~al\mbox{.}}{2018b}]%
        {gan2017extracting}
\bibfield{author}{\bibinfo{person}{Wensheng Gan}, \bibinfo{person}{Jerry
  Chun-Wei Lin}, \bibinfo{person}{Philippe Fournier-Viger},
  \bibinfo{person}{Han-Chieh Chao}, {and} \bibinfo{person}{Hamido Fujita}.}
  \bibinfo{year}{2018}\natexlab{b}.
\newblock \showarticletitle{Extracting non-redundant correlated purchase
  behaviors by utility measure}.
\newblock \bibinfo{journal}{\emph{Knowledge-Based Systems}}
  \bibinfo{volume}{143} (\bibinfo{year}{2018}), \bibinfo{pages}{30--41}.
\newblock


\bibitem[\protect\citeauthoryear{Gan, Lin, Fournier-Viger, Chao, Hong, and
  Fujita}{Gan et~al\mbox{.}}{2018c}]%
        {gan2018survey}
\bibfield{author}{\bibinfo{person}{Wensheng Gan}, \bibinfo{person}{Jerry
  Chun-Wei Lin}, \bibinfo{person}{Philippe Fournier-Viger},
  \bibinfo{person}{Han-Chieh Chao}, \bibinfo{person}{Tzung-Pei Hong}, {and}
  \bibinfo{person}{Hamido Fujita}.} \bibinfo{year}{2018}\natexlab{c}.
\newblock \showarticletitle{A survey of incremental high-utility itemset
  mining}.
\newblock \bibinfo{journal}{\emph{Wiley Interdisciplinary Reviews: Data Mining
  and Knowledge Discovery}} \bibinfo{volume}{8}, \bibinfo{number}{2}
  (\bibinfo{year}{2018}).
\newblock


\bibitem[\protect\citeauthoryear{Gan, Lin, Fournier-Viger, Chao, and Tseng}{Gan
  et~al\mbox{.}}{2017b}]%
        {2gan2017mining}
\bibfield{author}{\bibinfo{person}{Wensheng Gan}, \bibinfo{person}{Jerry
  Chun-Wei Lin}, \bibinfo{person}{Philippe Fournier-Viger},
  \bibinfo{person}{Han-Chieh Chao}, {and} \bibinfo{person}{Vincent~S Tseng}.}
  \bibinfo{year}{2017}\natexlab{b}.
\newblock \showarticletitle{Mining high-utility itemsets with both positive and
  negative unit profits from uncertain databases}. In
  \bibinfo{booktitle}{\emph{Pacific-Asia Conference on Knowledge Discovery and
  Data Mining}}. Springer, \bibinfo{pages}{434--446}.
\newblock


\bibitem[\protect\citeauthoryear{Gan, Lin, Fournier-Viger, Chao, Tseng, and
  Yu}{Gan et~al\mbox{.}}{2018d}]%
        {2gan2018survey}
\bibfield{author}{\bibinfo{person}{Wensheng Gan}, \bibinfo{person}{Jerry
  Chun-Wei Lin}, \bibinfo{person}{Philippe Fournier-Viger},
  \bibinfo{person}{Han-Chieh Chao}, \bibinfo{person}{Vincent~S Tseng}, {and}
  \bibinfo{person}{Philip~S Yu}.} \bibinfo{year}{2018}\natexlab{d}.
\newblock \showarticletitle{A survey of utility-oriented pattern mining}.
\newblock \bibinfo{journal}{\emph{arXiv preprint arXiv:1805.10511}}
  (\bibinfo{year}{2018}).
\newblock


\bibitem[\protect\citeauthoryear{Gan, Lin, Fournier-Viger, Chao, and Yu}{Gan
  et~al\mbox{.}}{2019}]%
        {gan2018huopm}
\bibfield{author}{\bibinfo{person}{Wensheng Gan}, \bibinfo{person}{Jerry
  Chun-Wei Lin}, \bibinfo{person}{Philippe Fournier-Viger},
  \bibinfo{person}{Han-Chieh Chao}, {and} \bibinfo{person}{Philip~S Yu}.}
  \bibinfo{year}{2019}\natexlab{}.
\newblock \showarticletitle{{HUOPM}: High utility occupancy pattern mining}.
\newblock \bibinfo{journal}{\emph{IEEE Transactions on Cybernetics}}
  \bibinfo{number}{10.1109/TCYB.2019.2896267} (\bibinfo{year}{2019}).
\newblock


\bibitem[\protect\citeauthoryear{Gan, Lin, Fournier-Viger, Chao, and Zhan}{Gan
  et~al\mbox{.}}{2017c}]%
        {gan2017mining}
\bibfield{author}{\bibinfo{person}{Wensheng Gan}, \bibinfo{person}{Jerry
  Chun-Wei Lin}, \bibinfo{person}{Philippe Fournier-Viger},
  \bibinfo{person}{Han-Chieh Chao}, {and} \bibinfo{person}{Justin Zhan}.}
  \bibinfo{year}{2017}\natexlab{c}.
\newblock \showarticletitle{Mining of frequent patterns with multiple minimum
  supports}.
\newblock \bibinfo{journal}{\emph{Engineering Applications of Artificial
  Intelligence}}  \bibinfo{volume}{60} (\bibinfo{year}{2017}),
  \bibinfo{pages}{83--96}.
\newblock


\bibitem[\protect\citeauthoryear{Geng and Hamilton}{Geng and Hamilton}{2006}]%
        {geng2006interestingness}
\bibfield{author}{\bibinfo{person}{Liqiang Geng} {and}
  \bibinfo{person}{Howard~J Hamilton}.} \bibinfo{year}{2006}\natexlab{}.
\newblock \showarticletitle{Interestingness measures for data mining: A
  survey}.
\newblock \bibinfo{journal}{\emph{Comput. Surveys}} \bibinfo{volume}{38},
  \bibinfo{number}{3} (\bibinfo{year}{2006}), \bibinfo{pages}{9}.
\newblock


\bibitem[\protect\citeauthoryear{Han, Pei, Yin, and Mao}{Han
  et~al\mbox{.}}{2004}]%
        {han2004mining}
\bibfield{author}{\bibinfo{person}{Jiawei Han}, \bibinfo{person}{Jian Pei},
  \bibinfo{person}{Yiwen Yin}, {and} \bibinfo{person}{Runying Mao}.}
  \bibinfo{year}{2004}\natexlab{}.
\newblock \showarticletitle{Mining frequent patterns without candidate
  generation: A frequent-pattern tree approach}.
\newblock \bibinfo{journal}{\emph{Data Mining and Knowledge Discovery}}
  \bibinfo{volume}{8}, \bibinfo{number}{1} (\bibinfo{year}{2004}),
  \bibinfo{pages}{53--87}.
\newblock


\bibitem[\protect\citeauthoryear{Hong, Wu, and Wang}{Hong
  et~al\mbox{.}}{2009}]%
        {hong2009effective}
\bibfield{author}{\bibinfo{person}{Tzung-Pei Hong}, \bibinfo{person}{Yi-Ying
  Wu}, {and} \bibinfo{person}{Shyue-Liang Wang}.}
  \bibinfo{year}{2009}\natexlab{}.
\newblock \showarticletitle{An effective mining approach for up-to-date
  patterns}.
\newblock \bibinfo{journal}{\emph{Expert Systems with Applications}}
  \bibinfo{volume}{36}, \bibinfo{number}{6} (\bibinfo{year}{2009}),
  \bibinfo{pages}{9747--9752}.
\newblock


\bibitem[\protect\citeauthoryear{Hu and Chen}{Hu and Chen}{2006}]%
        {hu2006mining}
\bibfield{author}{\bibinfo{person}{Ya-Han Hu} {and} \bibinfo{person}{Yen-Liang
  Chen}.} \bibinfo{year}{2006}\natexlab{}.
\newblock \showarticletitle{Mining association rules with multiple minimum
  supports: a new mining algorithm and a support tuning mechanism}.
\newblock \bibinfo{journal}{\emph{Decision Support Systems}}
  \bibinfo{volume}{42}, \bibinfo{number}{1} (\bibinfo{year}{2006}),
  \bibinfo{pages}{1--24}.
\newblock


\bibitem[\protect\citeauthoryear{Huang}{Huang}{2013}]%
        {huang2013discovery}
\bibfield{author}{\bibinfo{person}{Tony Cheng-Kui Huang}.}
  \bibinfo{year}{2013}\natexlab{}.
\newblock \showarticletitle{Discovery of fuzzy quantitative sequential patterns
  with multiple minimum supports and adjustable membership functions}.
\newblock \bibinfo{journal}{\emph{Information Sciences}}  \bibinfo{volume}{222}
  (\bibinfo{year}{2013}), \bibinfo{pages}{126--146}.
\newblock


\bibitem[\protect\citeauthoryear{itemset mining~dataset repository}{itemset
  mining~dataset repository}{2012}]%
        {fimdatasets}
\bibfield{author}{\bibinfo{person}{Frequent itemset mining~dataset
  repository}.} \bibinfo{year}{2012}\natexlab{}.
\newblock \showarticletitle{\url{http://fimi.ua.ac.be/data/}}.
\newblock


\bibitem[\protect\citeauthoryear{Kiran and Reddy}{Kiran and Reddy}{2011}]%
        {kiran2011novel}
\bibfield{author}{\bibinfo{person}{R~Uday Kiran} {and}
  \bibinfo{person}{P~Krishna Reddy}.} \bibinfo{year}{2011}\natexlab{}.
\newblock \showarticletitle{Novel techniques to reduce search space in multiple
  minimum supports-based frequent pattern mining algorithms}. In
  \bibinfo{booktitle}{\emph{Proceedings of the 14th International Conference on
  Extending Database Technology}}. ACM, \bibinfo{pages}{11--20}.
\newblock


\bibitem[\protect\citeauthoryear{Krishnamoorthy}{Krishnamoorthy}{2015}]%
        {krishnamoorthy2015pruning}
\bibfield{author}{\bibinfo{person}{Srikumar Krishnamoorthy}.}
  \bibinfo{year}{2015}\natexlab{}.
\newblock \showarticletitle{Pruning strategies for mining high utility
  itemsets}.
\newblock \bibinfo{journal}{\emph{Expert Systems with Applications}}
  \bibinfo{volume}{42}, \bibinfo{number}{5} (\bibinfo{year}{2015}),
  \bibinfo{pages}{2371--2381}.
\newblock


\bibitem[\protect\citeauthoryear{Lan, Hong, and Tseng}{Lan
  et~al\mbox{.}}{2014a}]%
        {lan2014efficient}
\bibfield{author}{\bibinfo{person}{Guo-Cheng Lan}, \bibinfo{person}{Tzung-Pei
  Hong}, {and} \bibinfo{person}{Vincent~S Tseng}.}
  \bibinfo{year}{2014}\natexlab{a}.
\newblock \showarticletitle{An efficient projection-based indexing approach for
  mining high utility itemsets}.
\newblock \bibinfo{journal}{\emph{Knowledge and Information Systems}}
  \bibinfo{volume}{38}, \bibinfo{number}{1} (\bibinfo{year}{2014}),
  \bibinfo{pages}{85--107}.
\newblock


\bibitem[\protect\citeauthoryear{Lan, Hong, Tseng, and Wang}{Lan
  et~al\mbox{.}}{2014b}]%
        {lan2014applying}
\bibfield{author}{\bibinfo{person}{Guo-Cheng Lan}, \bibinfo{person}{Tzung-Pei
  Hong}, \bibinfo{person}{Vincent~S Tseng}, {and} \bibinfo{person}{Shyue-Liang
  Wang}.} \bibinfo{year}{2014}\natexlab{b}.
\newblock \showarticletitle{Applying the maximum utility measure in high
  utility sequential pattern mining}.
\newblock \bibinfo{journal}{\emph{Expert Systems with Applications}}
  \bibinfo{volume}{41}, \bibinfo{number}{11} (\bibinfo{year}{2014}),
  \bibinfo{pages}{5071--5081}.
\newblock


\bibitem[\protect\citeauthoryear{Lee, Hong, and Lin}{Lee et~al\mbox{.}}{2004}]%
        {lee2004mining}
\bibfield{author}{\bibinfo{person}{Yeong-Chyi Lee}, \bibinfo{person}{Tzung-Pei
  Hong}, {and} \bibinfo{person}{Wen-Yang Lin}.}
  \bibinfo{year}{2004}\natexlab{}.
\newblock \showarticletitle{Mining fuzzy association rules with multiple
  minimum supports using maximum constraints}. In
  \bibinfo{booktitle}{\emph{International Conference on Knowledge-Based and
  Intelligent Information and Engineering Systems}}. Springer,
  \bibinfo{pages}{1283--1290}.
\newblock


\bibitem[\protect\citeauthoryear{Lee, Hong, and Wang}{Lee
  et~al\mbox{.}}{2006}]%
        {lee2006mining}
\bibfield{author}{\bibinfo{person}{Yeong-Chyi Lee}, \bibinfo{person}{Tzung-Pei
  Hong}, {and} \bibinfo{person}{Tien-Chin Wang}.}
  \bibinfo{year}{2006}\natexlab{}.
\newblock \showarticletitle{Mining fuzzy multiple-level association rules under
  multiple minimum supports}. In \bibinfo{booktitle}{\emph{IEEE International
  Conference on Systems, Man and Cybernetics}}, Vol.~\bibinfo{volume}{5}. IEEE,
  \bibinfo{pages}{4112--4117}.
\newblock


\bibitem[\protect\citeauthoryear{Li, Ghose, and Ipeirotis}{Li
  et~al\mbox{.}}{2011}]%
        {li2011towards}
\bibfield{author}{\bibinfo{person}{Beibei Li}, \bibinfo{person}{Anindya Ghose},
  {and} \bibinfo{person}{Panagiotis~G Ipeirotis}.}
  \bibinfo{year}{2011}\natexlab{}.
\newblock \showarticletitle{Towards a theory model for product search}. In
  \bibinfo{booktitle}{\emph{Proceedings of the 20th international conference on
  World wide web}}. ACM, \bibinfo{pages}{327--336}.
\newblock


\bibitem[\protect\citeauthoryear{Lin, Hong, and Lu}{Lin et~al\mbox{.}}{2011}]%
        {lin2011effective}
\bibfield{author}{\bibinfo{person}{Chun-Wei Lin}, \bibinfo{person}{Tzung-Pei
  Hong}, {and} \bibinfo{person}{Wen-Hsiang Lu}.}
  \bibinfo{year}{2011}\natexlab{}.
\newblock \showarticletitle{An effective tree structure for mining high utility
  itemsets}.
\newblock \bibinfo{journal}{\emph{Expert Systems with Applications}}
  \bibinfo{volume}{38}, \bibinfo{number}{6} (\bibinfo{year}{2011}),
  \bibinfo{pages}{7419--7424}.
\newblock


\bibitem[\protect\citeauthoryear{Lin, Lan, and Hong}{Lin
  et~al\mbox{.}}{2015b}]%
        {2lin2015mining}
\bibfield{author}{\bibinfo{person}{Chun-Wei Lin}, \bibinfo{person}{Guo-Cheng
  Lan}, {and} \bibinfo{person}{Tzung-Pei Hong}.}
  \bibinfo{year}{2015}\natexlab{b}.
\newblock \showarticletitle{Mining high utility itemsets for transaction
  deletion in a dynamic database}.
\newblock \bibinfo{journal}{\emph{Intelligent Data Analysis}}
  \bibinfo{volume}{19}, \bibinfo{number}{1} (\bibinfo{year}{2015}),
  \bibinfo{pages}{43--55}.
\newblock


\bibitem[\protect\citeauthoryear{Lin, Fournier-Viger, and Gan}{Lin
  et~al\mbox{.}}{2016a}]%
        {lin2016fhn}
\bibfield{author}{\bibinfo{person}{Jerry Chun-Wei Lin},
  \bibinfo{person}{Philippe Fournier-Viger}, {and} \bibinfo{person}{Wensheng
  Gan}.} \bibinfo{year}{2016}\natexlab{a}.
\newblock \showarticletitle{{FHN}: An efficient algorithm for mining
  high-utility itemsets with negative unit profits}.
\newblock \bibinfo{journal}{\emph{Knowledge-Based Systems}}
  \bibinfo{volume}{111} (\bibinfo{year}{2016}), \bibinfo{pages}{283--298}.
\newblock


\bibitem[\protect\citeauthoryear{Lin, Gan, Fournier-Viger, Hong, and Chao}{Lin
  et~al\mbox{.}}{2017}]%
        {lin2017fdhup}
\bibfield{author}{\bibinfo{person}{Jerry Chun-Wei Lin},
  \bibinfo{person}{Wensheng Gan}, \bibinfo{person}{Philippe Fournier-Viger},
  \bibinfo{person}{Tzung-Pei Hong}, {and} \bibinfo{person}{Han-Chieh Chao}.}
  \bibinfo{year}{2017}\natexlab{}.
\newblock \showarticletitle{{FDHUP}: Fast algorithm for mining discriminative
  high utility patterns}.
\newblock \bibinfo{journal}{\emph{Knowledge and Information Systems}}
  \bibinfo{volume}{51}, \bibinfo{number}{3} (\bibinfo{year}{2017}),
  \bibinfo{pages}{873--909}.
\newblock


\bibitem[\protect\citeauthoryear{Lin, Gan, Fournier-Viger, Hong, and Tseng}{Lin
  et~al\mbox{.}}{2016b}]%
        {2lin2016efficient}
\bibfield{author}{\bibinfo{person}{Jerry Chun-Wei Lin},
  \bibinfo{person}{Wensheng Gan}, \bibinfo{person}{Philippe Fournier-Viger},
  \bibinfo{person}{Tzung-Pei Hong}, {and} \bibinfo{person}{Vincent~S Tseng}.}
  \bibinfo{year}{2016}\natexlab{b}.
\newblock \showarticletitle{Efficient algorithms for mining high-utility
  itemsets in uncertain databases}.
\newblock \bibinfo{journal}{\emph{Knowledge-Based Systems}}
  \bibinfo{volume}{96} (\bibinfo{year}{2016}), \bibinfo{pages}{171--187}.
\newblock


\bibitem[\protect\citeauthoryear{Lin, Gan, Fournier-Viger, Hong, and Tseng}{Lin
  et~al\mbox{.}}{2016c}]%
        {lin2016fast}
\bibfield{author}{\bibinfo{person}{Jerry Chun-Wei Lin},
  \bibinfo{person}{Wensheng Gan}, \bibinfo{person}{Philippe Fournier-Viger},
  \bibinfo{person}{Tzung-Pei Hong}, {and} \bibinfo{person}{Vincent~S Tseng}.}
  \bibinfo{year}{2016}\natexlab{c}.
\newblock \showarticletitle{Fast algorithms for mining high-utility itemsets
  with various discount strategies}.
\newblock \bibinfo{journal}{\emph{Advanced Engineering Informatics}}
  \bibinfo{volume}{30}, \bibinfo{number}{2} (\bibinfo{year}{2016}),
  \bibinfo{pages}{109--126}.
\newblock


\bibitem[\protect\citeauthoryear{Lin, Gan, Fournier-Viger, Hong, and Zhan}{Lin
  et~al\mbox{.}}{2016d}]%
        {lin2016efficient}
\bibfield{author}{\bibinfo{person}{Jerry Chun-Wei Lin},
  \bibinfo{person}{Wensheng Gan}, \bibinfo{person}{Philippe Fournier-Viger},
  \bibinfo{person}{Tzung-Pei Hong}, {and} \bibinfo{person}{Justin Zhan}.}
  \bibinfo{year}{2016}\natexlab{d}.
\newblock \showarticletitle{Efficient mining of high-utility itemsets using
  multiple minimum utility thresholds}.
\newblock \bibinfo{journal}{\emph{Knowledge-Based Systems}}
  \bibinfo{volume}{113} (\bibinfo{year}{2016}), \bibinfo{pages}{100--115}.
\newblock


\bibitem[\protect\citeauthoryear{Lin, Gan, Hong, and Tseng}{Lin
  et~al\mbox{.}}{2015a}]%
        {lin2015efficient}
\bibfield{author}{\bibinfo{person}{Jerry Chun-Wei Lin},
  \bibinfo{person}{Wensheng Gan}, \bibinfo{person}{Tzung-Pei Hong}, {and}
  \bibinfo{person}{Vincent~S Tseng}.} \bibinfo{year}{2015}\natexlab{a}.
\newblock \showarticletitle{Efficient algorithms for mining up-to-date
  high-utility patterns}.
\newblock \bibinfo{journal}{\emph{Advanced Engineering Informatics}}
  \bibinfo{volume}{29}, \bibinfo{number}{3} (\bibinfo{year}{2015}),
  \bibinfo{pages}{648--661}.
\newblock


\bibitem[\protect\citeauthoryear{Liu, Hsu, and Ma}{Liu et~al\mbox{.}}{1999}]%
        {liu1999mining}
\bibfield{author}{\bibinfo{person}{Bing Liu}, \bibinfo{person}{Wynne Hsu},
  {and} \bibinfo{person}{Yiming Ma}.} \bibinfo{year}{1999}\natexlab{}.
\newblock \showarticletitle{Mining association rules with multiple minimum
  supports}. In \bibinfo{booktitle}{\emph{Proceedings of the Fifth ACM SIGKDD
  International Conference on Knowledge Discovery and Data Mining}}. ACM,
  \bibinfo{pages}{337--341}.
\newblock


\bibitem[\protect\citeauthoryear{Liu, Wang, and Fung}{Liu
  et~al\mbox{.}}{2012}]%
        {liu2012direct}
\bibfield{author}{\bibinfo{person}{Junqiang Liu}, \bibinfo{person}{Ke Wang},
  {and} \bibinfo{person}{Benjamin~CM Fung}.} \bibinfo{year}{2012}\natexlab{}.
\newblock \showarticletitle{Direct discovery of high utility itemsets without
  candidate generation}. In \bibinfo{booktitle}{\emph{Proceedings of the IEEE
  12th International Conference on Data Mining}}. IEEE,
  \bibinfo{pages}{984--989}.
\newblock


\bibitem[\protect\citeauthoryear{Liu and Qu}{Liu and Qu}{2012}]%
        {liu2012mining}
\bibfield{author}{\bibinfo{person}{Mengchi Liu} {and} \bibinfo{person}{Junfeng
  Qu}.} \bibinfo{year}{2012}\natexlab{}.
\newblock \showarticletitle{Mining high utility itemsets without candidate
  generation}. In \bibinfo{booktitle}{\emph{Proceedings of the 21st ACM
  International Conference on Information and Knowledge Management}}. ACM,
  \bibinfo{pages}{55--64}.
\newblock


\bibitem[\protect\citeauthoryear{Liu, Liao, and Choudhary}{Liu
  et~al\mbox{.}}{2005}]%
        {liu2005two}
\bibfield{author}{\bibinfo{person}{Ying Liu}, \bibinfo{person}{Wei-keng Liao},
  {and} \bibinfo{person}{Alok Choudhary}.} \bibinfo{year}{2005}\natexlab{}.
\newblock \showarticletitle{A two-phase algorithm for fast discovery of high
  utility itemsets}. In \bibinfo{booktitle}{\emph{Pacific-Asia Conference on
  Knowledge Discovery and Data Mining}}. Springer, \bibinfo{pages}{689--695}.
\newblock


\bibitem[\protect\citeauthoryear{Liu, Cheng, and Tseng}{Liu
  et~al\mbox{.}}{2011}]%
        {liu2011discovering}
\bibfield{author}{\bibinfo{person}{Yu-Cheng Liu}, \bibinfo{person}{Chun-Pei
  Cheng}, {and} \bibinfo{person}{Vincent~S Tseng}.}
  \bibinfo{year}{2011}\natexlab{}.
\newblock \showarticletitle{Discovering relational-based association rules with
  multiple minimum supports on microarray datasets}.
\newblock \bibinfo{journal}{\emph{Bioinformatics}} \bibinfo{volume}{27},
  \bibinfo{number}{22} (\bibinfo{year}{2011}), \bibinfo{pages}{3142--3148}.
\newblock


\bibitem[\protect\citeauthoryear{Marshall}{Marshall}{1926}]%
        {marshall1926principles}
\bibfield{author}{\bibinfo{person}{A. Marshall}.}
  \bibinfo{year}{1926}\natexlab{}.
\newblock \showarticletitle{Principles of Economics}. In
  \bibinfo{booktitle}{\emph{Eighth ed. Macmillan and Co.}} London.
\newblock


\bibitem[\protect\citeauthoryear{Pasquier, Bastide, Taouil, and
  Lakhal}{Pasquier et~al\mbox{.}}{1998}]%
        {pasquier1998pruning}
\bibfield{author}{\bibinfo{person}{Nicolas Pasquier}, \bibinfo{person}{Yves
  Bastide}, \bibinfo{person}{Rafik Taouil}, {and} \bibinfo{person}{Lotfi
  Lakhal}.} \bibinfo{year}{1998}\natexlab{}.
\newblock \showarticletitle{Pruning closed itemset lattices for association
  rules}. In \bibinfo{booktitle}{\emph{International Conference on Advanced
  Databases}}. \bibinfo{pages}{177--196}.
\newblock


\bibitem[\protect\citeauthoryear{Rymon}{Rymon}{1992}]%
        {rymon1992search}
\bibfield{author}{\bibinfo{person}{Ron Rymon}.}
  \bibinfo{year}{1992}\natexlab{}.
\newblock \showarticletitle{Search through systematic set enumeration}.
\newblock  (\bibinfo{year}{1992}).
\newblock


\bibitem[\protect\citeauthoryear{Song, Zhang, and Li}{Song
  et~al\mbox{.}}{2016}]%
        {song2016high}
\bibfield{author}{\bibinfo{person}{Wei Song}, \bibinfo{person}{Zihan Zhang},
  {and} \bibinfo{person}{Jinhong Li}.} \bibinfo{year}{2016}\natexlab{}.
\newblock \showarticletitle{A high utility itemset mining algorithm based on
  subsume index}.
\newblock \bibinfo{journal}{\emph{Knowledge and Information Systems}}
  \bibinfo{volume}{49}, \bibinfo{number}{1} (\bibinfo{year}{2016}),
  \bibinfo{pages}{315--340}.
\newblock


\bibitem[\protect\citeauthoryear{Srikant and Agrawal}{Srikant and
  Agrawal}{1996}]%
        {srikant1996mining}
\bibfield{author}{\bibinfo{person}{Ramakrishnan Srikant} {and}
  \bibinfo{person}{Rakesh Agrawal}.} \bibinfo{year}{1996}\natexlab{}.
\newblock \showarticletitle{Mining sequential patterns: Generalizations and
  performance improvements}. In \bibinfo{booktitle}{\emph{International
  Conference on Extending Database Technology}}. Springer,
  \bibinfo{pages}{1--17}.
\newblock


\bibitem[\protect\citeauthoryear{Tseng, Shie, Wu, and Philip}{Tseng
  et~al\mbox{.}}{2013}]%
        {tseng2013efficient}
\bibfield{author}{\bibinfo{person}{Vincent~S Tseng}, \bibinfo{person}{Bai-En
  Shie}, \bibinfo{person}{Cheng-Wei Wu}, {and} \bibinfo{person}{S~Yu Philip}.}
  \bibinfo{year}{2013}\natexlab{}.
\newblock \showarticletitle{Efficient algorithms for mining high utility
  itemsets from transactional databases}.
\newblock \bibinfo{journal}{\emph{IEEE Transactions on Knowledge and Data
  Engineering}} \bibinfo{volume}{25}, \bibinfo{number}{8}
  (\bibinfo{year}{2013}), \bibinfo{pages}{1772--1786}.
\newblock


\bibitem[\protect\citeauthoryear{Tseng, Wu, Fournier-Viger, and Philip}{Tseng
  et~al\mbox{.}}{2016}]%
        {tseng2016efficient}
\bibfield{author}{\bibinfo{person}{Vincent~S Tseng}, \bibinfo{person}{Cheng-Wei
  Wu}, \bibinfo{person}{Philippe Fournier-Viger}, {and} \bibinfo{person}{S~Yu
  Philip}.} \bibinfo{year}{2016}\natexlab{}.
\newblock \showarticletitle{Efficient algorithms for mining top-$k$ high
  utility itemsets}.
\newblock \bibinfo{journal}{\emph{IEEE Transactions on Knowledge and Data
  Engineering}} \bibinfo{volume}{28}, \bibinfo{number}{1}
  (\bibinfo{year}{2016}), \bibinfo{pages}{54--67}.
\newblock


\bibitem[\protect\citeauthoryear{Tseng, Wu, Shie, and Yu}{Tseng
  et~al\mbox{.}}{2010}]%
        {tseng2010up}
\bibfield{author}{\bibinfo{person}{Vincent~S Tseng}, \bibinfo{person}{Cheng-Wei
  Wu}, \bibinfo{person}{Bai-En Shie}, {and} \bibinfo{person}{Philip~S Yu}.}
  \bibinfo{year}{2010}\natexlab{}.
\newblock \showarticletitle{{UP-G}rowth: an efficient algorithm for high
  utility itemset mining}. In \bibinfo{booktitle}{\emph{Proceedings of the 16th
  ACM SIGKDD International Conference on Knowledge Discovery and Data Mining}}.
  ACM, \bibinfo{pages}{253--262}.
\newblock


\bibitem[\protect\citeauthoryear{Vo, Hong, and Le}{Vo et~al\mbox{.}}{2013}]%
        {vo2013lattice}
\bibfield{author}{\bibinfo{person}{Bay Vo}, \bibinfo{person}{Tzung-Pei Hong},
  {and} \bibinfo{person}{Bac Le}.} \bibinfo{year}{2013}\natexlab{}.
\newblock \showarticletitle{A lattice-based approach for mining most
  generalization association rules}.
\newblock \bibinfo{journal}{\emph{Knowledge-Based Systems}}
  \bibinfo{volume}{45} (\bibinfo{year}{2013}), \bibinfo{pages}{20--30}.
\newblock


\bibitem[\protect\citeauthoryear{Vo, Le, Hong, and Le}{Vo
  et~al\mbox{.}}{2015}]%
        {vo2015fast}
\bibfield{author}{\bibinfo{person}{Bay Vo}, \bibinfo{person}{Tuong Le},
  \bibinfo{person}{Tzung-Pei Hong}, {and} \bibinfo{person}{Bac Le}.}
  \bibinfo{year}{2015}\natexlab{}.
\newblock \showarticletitle{Fast updated frequent-itemset lattice for
  transaction deletion}.
\newblock \bibinfo{journal}{\emph{Data \& Knowledge Engineering}}
  \bibinfo{volume}{96} (\bibinfo{year}{2015}), \bibinfo{pages}{78--89}.
\newblock


\bibitem[\protect\citeauthoryear{Wang and Huang}{Wang and Huang}{2018}]%
        {wang2018incremental}
\bibfield{author}{\bibinfo{person}{Jun-Zhe Wang} {and}
  \bibinfo{person}{Jiun-Long Huang}.} \bibinfo{year}{2018}\natexlab{}.
\newblock \showarticletitle{On Incremental High Utility Sequential Pattern
  Mining}.
\newblock \bibinfo{journal}{\emph{ACM Transactions on Intelligent Systems and
  Technology}} \bibinfo{volume}{9}, \bibinfo{number}{5} (\bibinfo{year}{2018}),
  \bibinfo{pages}{55}.
\newblock


\bibitem[\protect\citeauthoryear{Wang, Huang, and Chen}{Wang
  et~al\mbox{.}}{2016}]%
        {wang2016efficiently}
\bibfield{author}{\bibinfo{person}{Jun-Zhe Wang}, \bibinfo{person}{Jiun-Long
  Huang}, {and} \bibinfo{person}{Yi-Cheng Chen}.}
  \bibinfo{year}{2016}\natexlab{}.
\newblock \showarticletitle{On efficiently mining high utility sequential
  patterns}.
\newblock \bibinfo{journal}{\emph{Knowledge and Information Systems}}
  \bibinfo{volume}{49}, \bibinfo{number}{2} (\bibinfo{year}{2016}),
  \bibinfo{pages}{597--627}.
\newblock


\bibitem[\protect\citeauthoryear{Weiss}{Weiss}{2004}]%
        {weiss2004mining}
\bibfield{author}{\bibinfo{person}{Gary~M Weiss}.}
  \bibinfo{year}{2004}\natexlab{}.
\newblock \showarticletitle{Mining with rarity: a unifying framework}.
\newblock \bibinfo{journal}{\emph{ACM SIGKDD Explorations Newsletter}}
  \bibinfo{volume}{6}, \bibinfo{number}{1} (\bibinfo{year}{2004}),
  \bibinfo{pages}{7--19}.
\newblock


\bibitem[\protect\citeauthoryear{Yao and Hamilton}{Yao and Hamilton}{2006}]%
        {yao2006mining}
\bibfield{author}{\bibinfo{person}{Hong Yao} {and} \bibinfo{person}{Howard~J
  Hamilton}.} \bibinfo{year}{2006}\natexlab{}.
\newblock \showarticletitle{Mining itemset utilities from transaction
  databases}.
\newblock \bibinfo{journal}{\emph{Data \& Knowledge Engineering}}
  \bibinfo{volume}{59}, \bibinfo{number}{3} (\bibinfo{year}{2006}),
  \bibinfo{pages}{603--626}.
\newblock


\bibitem[\protect\citeauthoryear{Yao, Hamilton, and Butz}{Yao
  et~al\mbox{.}}{2004}]%
        {yao2004foundational}
\bibfield{author}{\bibinfo{person}{Hong Yao}, \bibinfo{person}{Howard~J
  Hamilton}, {and} \bibinfo{person}{Cory~J Butz}.}
  \bibinfo{year}{2004}\natexlab{}.
\newblock \showarticletitle{A foundational approach to mining itemset utilities
  from databases}. In \bibinfo{booktitle}{\emph{Proceedings of the SIAM
  International Conference on Data Mining}}. SIAM, \bibinfo{pages}{482--486}.
\newblock


\bibitem[\protect\citeauthoryear{Yin, Zheng, and Cao}{Yin
  et~al\mbox{.}}{2012}]%
        {yin2012uspan}
\bibfield{author}{\bibinfo{person}{Junfu Yin}, \bibinfo{person}{Zhigang Zheng},
  {and} \bibinfo{person}{Longbing Cao}.} \bibinfo{year}{2012}\natexlab{}.
\newblock \showarticletitle{{US}pan: an efficient algorithm for mining high
  utility sequential patterns}. In \bibinfo{booktitle}{\emph{Proceedings of the
  18th ACM SIGKDD International Conference on Knowledge Discovery and Data
  Mining}}. ACM, \bibinfo{pages}{660--668}.
\newblock


\bibitem[\protect\citeauthoryear{Zida, Fournier-Viger, Lin, Wu, and Tseng}{Zida
  et~al\mbox{.}}{2015}]%
        {zida2015efim}
\bibfield{author}{\bibinfo{person}{Souleymane Zida}, \bibinfo{person}{Philippe
  Fournier-Viger}, \bibinfo{person}{Jerry Chun-Wei Lin},
  \bibinfo{person}{Cheng-Wei Wu}, {and} \bibinfo{person}{Vincent~S Tseng}.}
  \bibinfo{year}{2015}\natexlab{}.
\newblock \showarticletitle{{EFIM}: a highly efficient algorithm for
  high-utility itemset mining}. In \bibinfo{booktitle}{\emph{Mexican
  International Conference on Artificial Intelligence}}. Springer,
  \bibinfo{pages}{530--546}.
\newblock


\end{thebibliography}

\end{document}